\documentclass[sigconf, nonacm]{acmart}

\newcommand\vldbdoi{XX.XX/XXX.XX}
\newcommand\vldbpages{XXX-XXX}
\newcommand\vldbvolume{20}
\newcommand\vldbissue{1}
\newcommand\vldbyear{2027}
\newcommand\vldbauthors{\authors}
\newcommand\vldbtitle{\shorttitle}
\newcommand\vldbavailabilityurl{}
\newcommand\vldbpagestyle{plain}

\usepackage{booktabs}       % Professional tables
\usepackage{tikz-cd}        % Commutative diagrams
\usepackage{listings}       % Code / PDL / verification-query listings
\usepackage{amsmath}
\usepackage{enumitem}       % itemize/enumerate spacing control
\usepackage{array}          % fixed-width column types
\newcolumntype{L}[1]{>{\raggedright\arraybackslash}p{#1}} % ragged-right p-column
\lstdefinestyle{pda}{%
  basicstyle=\ttfamily\footnotesize,
  columns=fullflexible,
  keepspaces=true,
  morecomment=[l]{--},        % Agda/PDA line comments
  commentstyle=\itshape,
  frame=single,
  framesep=4pt,
  xleftmargin=4pt,
  aboveskip=2pt, belowskip=0pt,
}

\newif\ifextended
\extendedtrue
\newcommand{\fullproof}{\ifextended Full proof in Appendix~\ref{app:proofs}.\else
Full proof in the extended version of this paper.\fi}

\newif\ifresults
\resultsfalse

\newif\ifarxiv
\arxivtrue

\newif\ifbranded
\brandedfalse
\ifbranded
  \input{branding-private}
\else
  \newcommand{\sysname}{PDD-Toolchain}       % software system (proper noun)
  \newcommand{\Sysname}{PDD-Toolchain}       % sentence-start form (same)
  \newcommand{\edbname}{EDB}                 % executable-design framework
  \newcommand{\edbfull}{Executable Design Builder}
  \newcommand{\pddskill}{\texttt{/pdd}}      % design + verification skill
  \newcommand{\buildskill}{\texttt{/build-pip}}
  \newcommand{\ddlskill}{\texttt{/build-ddl}}
  \newcommand{\jobskill}{\texttt{/build-jobs}}
  \newcommand{\brandednote}{}                % branded-only aside (empty here)
\fi

\newcommand{\grain}[1]{G[#1]}              % G[R] — grain operator
\newcommand{\ek}[1]{EK[#1]}                % EK[R] — entity-key operator
\newcommand{\BC}[1]{BC[#1]}                % BC[R] — behavioral-class operator
\newcommand{\eqg}{\equiv_g}                % grain equality
\newcommand{\leg}{\leq_g}                  % grain ordering
\newcommand{\incg}{\langle\rangle_g}       % grain incomparability
\newcommand{\gl}[1]{\varphi(#1)}           % φ(h) — grain lift
\newcommand{\subt}{\subseteq_{typ}}        % type subset
\newcommand{\psub}{\subset_{typ}}          % proper type subset
\newcommand{\tun}{\cup_{typ}}              % type union
\newcommand{\tin}{\cap_{typ}}              % type intersection
\newcommand{\tdiff}{-_{typ}}               % type difference
\newcommand{\thra}{\twoheadrightarrow}     % surjection arrow

\newcommand{\Coll}[1]{C\,#1}               % C R — data collection
\newcommand{\calcg}{\textsf{CalcG}}        % grain calculation operator
\newcommand{\seqc}{\mathbin{>\!>}}          % >>  forward composition
\newcommand{\parc}{\mathbin{<\!\times\!>}} % <×> pipeline product
\newcommand{\forkc}{\mathbin{<\!\&\&\!>}}  % <&&> pipeline fork (diagonal)
\newcommand{\seqp}{\mathbin{>\!>\!p}}       % >>p pipeline sequencing
\newcommand{\cta}[3]{[\,#1 \Rightarrow #2 \Rightarrow #3\,]} % CTA constructor
\newcommand{\pda}{\textsf{PDA}}
\newcommand{\pdd}{\textsf{PDD}}
\newcommand{\sem}[1]{[\![#1]\!]}            % denotation brackets
\newcommand{\grainproj}[1]{\textit{grain}_{#1}} % grain projection R -> G[R]

\theoremstyle{plain}
\newtheorem{theorem}{Theorem}[section]

\newtheorem{corollary}[theorem]{Corollary}
\newtheorem{proposition}[theorem]{Proposition}
\theoremstyle{definition}

\theoremstyle{remark}
\newtheorem{remark}[theorem]{Remark}

\begin{document}

%% --- Title -----------------------------------------------------------------
% Methodology-first: title names the contribution (PDD), never the software system.
% The software realization appears only in the realization (\S7) and evaluation (\S8),
% and only under \brandedtrue. Running head = the methodology.
\title[Pipeline Denotational Design]{Pipeline Denotational Design: Correct-by-Construction Data Pipelines at Zero Cost}

%% Single-blind: names and affiliations REQUIRED on page 1.
\author{Nikos Karayannidis}
\affiliation{%
  \institution{Independent}
  \city{Athens}
  \country{Greece}
}
\email{nkarag@gmail.com}

%% --- Abstract --------------------------------------------------------------
\begin{abstract}
\emph{Pipeline Denotational Design} (PDD) is a design-first methodology for building
data pipelines that are correct by construction. As AI agents generate pipeline code
at scale, the bottleneck shifts from \emph{writing} pipelines to \emph{verifying}
them---and the errors that matter most, such as grain inconsistencies that silently
inflate aggregates, are invisible to schema checks, type checks, and tests on sampled
data. PDD designs pipelines in a \emph{semantic domain} rather than in code: a design
is composed from a typed algebra of operations---the Pipeline Design Algebra (PDA), one
such instantiation---in which every well-typed composition is grain-correct by
construction. Resting on grain alone, this guarantee is \emph{universal}: it holds for
any grain-inferring operation set, over any engine, batch or streaming (PDA being our
choice). Correctness is established in three layers---grain, behavioral class, and
domain---all \emph{at design time, at zero cost}, with no access to data: grain by a
data-independent type-level computation (CalcG), behavioral class by the type checker,
and the domain rules by a proof-carrying composition over operation contracts. This
rests on a theorem (\emph{Pipeline Correctness}): the three layers discharge grain,
behavioral class, and every contract-derivable obligation with no data access, and a
proof-carrying composition collapses what remains to a \emph{single input-boundary
check}---so a pipeline is correct \emph{by construction against its specification}, and
the only data-dependent residue is whether real inputs meet the design's preconditions
(\emph{data quality}, not code correctness). PDD \emph{generates} that boundary check as
SQL/PySpark verification queries, provably faithful to their specification. Correctness by
construction is a \emph{spectrum}, not a single fixed point: the same design can be
verified at three levels of rigor---by running checks against the data at runtime, by
the deployed type-level checker (which decides grain and behavioral class automatically,
needing no external solver), or by machine-checked proofs in Agda/Lean~4. The stronger
the guarantee, the less is left to check at runtime, and a team chooses where on this
spectrum to sit by a pipeline's criticality, complexity, and scale. This recasts the
engineer's role: an AI agent instantiates a pre-verified pattern and ships a
machine-checkable correctness \emph{certificate}---an Agda proof or an evidence
ledger---so the human validates only a compact, data-anchored specification and checks
the certificate, a \emph{proof-carrying} discipline for AI-generated pipelines. We
realize the methodology end-to-end in a production toolchain---a
design tool, a typed framework that implements the algebra, and code-generation that
compiles a verified design into a faithful implementation---and
\ifresults
evaluate it on production pipelines spanning multiple patterns, behavioral classes,
and modeling paradigms, measuring design-time defect capture, the zero-cost discharge
ratio, generated-query yield, and code-generation faithfulness against ad-hoc,
runtime-data-quality, and LLM-without-design baselines.
\else
set out an evaluation on production pipelines spanning multiple patterns, behavioral
classes, and modeling paradigms---a three-arm protocol targeting design-time defect
capture, the zero-cost discharge ratio, generated-query yield, and code-generation
faithfulness against ad-hoc, runtime-data-quality, and LLM-without-design baselines
(the measurement campaign is in progress).
\fi Finally, because the guarantee
rests on grain alone, the same design-time check reaches beyond pipelines to the
AI-generated \emph{queries} run over semantic layers and ontologies---a correctness
layer that stack otherwise lacks.
\end{abstract}

\maketitle

\ifarxiv
% arXiv preprint: clean front matter, no PVLDB reference/DOI/copyright block.
\pagestyle{plain}
\else
%%% do not modify the following VLDB block %%
%%% VLDB block start %%%
\pagestyle{\vldbpagestyle}
\begingroup\small\noindent\raggedright\textbf{PVLDB Reference Format:}\\
\vldbauthors. \vldbtitle. PVLDB, \vldbvolume(\vldbissue): \vldbpages, \vldbyear.\\
\href{https://doi.org/\vldbdoi}{doi:\vldbdoi}
\endgroup
\begingroup
\renewcommand\thefootnote{}\footnote{\noindent
This work is licensed under the Creative Commons BY-NC-ND 4.0 International License. Visit \url{https://creativecommons.org/licenses/by-nc-nd/4.0/} to view a copy of this license. For any use beyond those covered by this license, obtain permission by emailing \href{mailto:info@vldb.org}{info@vldb.org}. Copyright is held by the owner/author(s). Publication rights licensed to the VLDB Endowment. \\
\raggedright Proceedings of the VLDB Endowment, Vol. \vldbvolume, No. \vldbissue\ %
ISSN 2150-8097. \\
\href{https://doi.org/\vldbdoi}{doi:\vldbdoi} \\
}\addtocounter{footnote}{-1}\endgroup
%%% VLDB block end %%%
\fi

%%% do not modify the following VLDB block %%
%%% VLDB block start %%%
\ifdefempty{\vldbavailabilityurl}{}{
\vspace{.3cm}
\begingroup\small\noindent\raggedright\textbf{PVLDB Artifact Availability:}\\
The source code, data, and/or other artifacts have been made available at \url{\vldbavailabilityurl}.
\endgroup
}
%%% VLDB block end %%%

%% --- Body ------------------------------------------------------------------
% =============================================================================
% 1. INTRODUCTION   (budget ~1.25 pp)
% Sources: roadmap §4 (framing); exec-deck slide 3 (enterprise-complexity problem
% framing, anonymized); arXiv intro (fan-trap hook); PDD.md §II.0.
% =============================================================================

\section{Introduction}
\label{sec:intro}

Data pipelines at enterprise scale are staggeringly complex. A single analytics
platform integrates dozens of heterogeneous sources---operational databases, APIs,
message queues, blob storage---through thousands of interdependent transformation
units, across several persistence layers (landing, integration, serving---the medallion
bronze/silver/gold layers), in a mix of
cadences from streaming and micro-batch to periodic batch. This complexity is hard to
understand, reason about, debug, and evolve; and \emph{verifying that a pipeline is
correct} at this scale has always been a struggle. Tests help but do not suffice---as
Dijkstra put it, ``program testing can be used to show the presence of bugs, but never
to show their absence''---and the bugs that matter most slip through precisely the
tests engineers rely on. Worse, AI agents now generate pipeline code at scale, faster
than anyone can review it, amplifying inconsistency and error. The problem this paper
addresses is therefore twofold: \emph{tame} this complexity, and \emph{verify}
correctness with guarantees stronger than testing can give.

The errors that matter are \emph{silent}. Consider a data engineer---or, increasingly,
an AI agent---building a report that combines revenue from one fact table with units
sold from another, aggregated per customer and date. The two facts have different
\emph{grains}: one row per
customer-channel-date, the other per customer-product-date. The natural query
joins on the common columns and aggregates:
\begin{flushleft}\small\ttfamily
SELECT customer\_id, date, SUM(revenue), SUM(units)\\
FROM sales\_channel JOIN sales\_product USING (customer\_id, date)\\
GROUP BY customer\_id, date
\end{flushleft}
It compiles, runs, and returns the expected result grain---yet it
\emph{systematically inflates both metrics}. The join creates a cross product over
the unmatched grain components (channels against products), duplicating every row
before aggregation. This is a \emph{fan trap}, and it is invisible to schema
validation, type checking, and unit tests on small data: with one channel and one
product per customer-date the cross product is $1\times1$ and the numbers look
right. The root cause is a \emph{grain inconsistency}---an unintended change in the
level of detail at which data is represented---and it surfaces only in production,
after the data has been processed, loaded, and consumed downstream.

\paragraph{Who verifies AI-generated pipelines?} Errors of this kind are about to
get worse. AI agents now generate pipeline code at scale, from natural-language
prompts, faster than any human can review it. The agents are fluent at producing
code that runs; they are no better than the engineer at noticing that the result
is silently wrong. The question the field must answer is not ``can an agent write a
pipeline?''---it plainly can---but \emph{who verifies that the pipeline is
correct?} Testing does not scale to the volume of generated code, and---as the fan
trap shows---tests on sampled data do not even catch the bugs that matter. The remedy
is not better detection---the data that triggers such a bug and the verification query
that would catch it seldom coincide---but \emph{prevention}: pipelines designed so the
bug cannot be written, in the discipline of making illegal states
unrepresentable~\cite{minsky2011ocaml} (\S\ref{sec:correctness:prevention}). The same
question is spreading to the read path: agents increasingly generate \emph{queries}
too, over semantic layers and ontologies that say \emph{where} and \emph{how} to ask
but not \emph{whether} the answer is correct---so the same design-time guarantee is
what that stack is missing (\S\ref{sec:beyond}).

\paragraph{From verification to synthesis.} This paper is \emph{Part~II} of a
two-part program. \emph{Part~I}~\cite{graintheory-pods} established the theory and
\emph{verification}: it formally defines the denotation of data---grain, entity key,
behavioral class---with computable grain-inference rules and proves the \emph{grain
homomorphism} on which everything here rests, so that given a pipeline annotated with
grains, the CalcG algorithm decides whether the output grain matches the target's
declared grain---a data-independent, decidable, zero-cost check. Verification answers
``is \emph{this} pipeline correct?''. This paper takes the next step---\emph{synthesis}
and \emph{methodology}---and answers ``how do I \emph{build} a pipeline that cannot be
incorrect?''. We present \emph{Pipeline
Denotational Design} (PDD): a methodology in which pipelines are composed from a
typed algebra of operations (the \emph{Pipeline Design Algebra}, PDA) where
\emph{every well-typed composition is grain-correct by construction}, and the
residual, genuinely data-dependent obligations are discharged as
\emph{automatically generated} verification queries. The algebra it instantiates,
\pda{}, is to data
pipelines what
a typed query builder (e.g.\ LINQ, jOOQ, Opaleye) is to raw SQL, and what Codd's relational
algebra~\cite{codd1970relational} is to pre-relational data wrangling: a small algebra
with laws in which correctness is a \emph{structural} property---settled by
construction---rather than a test result. The two comparisons stress complementary
facets: \emph{static type-safety} (a malformed pipeline is rejected before it runs) and
a \emph{foundational algebra with equational laws} (compose and reason, rather than
hand-code)---and \pda{} claims both at once, one level up: over whole pipelines, not single
queries.

\paragraph{Three levers on complexity.} PDD attacks the twofold problem on three
fronts. \emph{Simplicity by abstraction}: designers reason in the semantic domain
(grain, entity key, behavioral class), formally but away from engine-specific code, so
the centre of gravity moves from implementation to meaning. \emph{Uniformity by
patterns}: a small, typed vocabulary of operations and pre-verified patterns makes
every pipeline look the same modulo its data domain (a pattern is a pipeline
parameterized by that domain, \S\ref{sec:synthesis}), turning bespoke plumbing into
reusable structure. \emph{The design as a contract with AI}: the denotational design
is a formal blueprint an agent must satisfy and provide design-time evidence for, so AI
accelerates delivery without amplifying error. Simplicity and uniformity tame the
complexity; the contract, together with the three correctness layers below, verifies
the pipeline's correctness.

\paragraph{Correctness in three layers, at design time, at zero cost.} PDD verifies a
pipeline in three layers (\S\ref{sec:correctness}): \emph{grain} (does the output
grain match the target?), \emph{behavioral class} (does each operation respect the
read/write semantics of entities, events, versioned records, \dots?), and
\emph{domain} (do the business rules hold?). All three are verified \emph{at design
time, at zero cost}---with no access to data: grain by a type-level computation,
behavioral class by the type checker, and the domain rules by a proof-carrying
composition that establishes them on the output \emph{given the design's input
preconditions}. A data engineer controls the pipeline he builds, not the data a
provider sends through it; so the one thing left for runtime is to check that those
inputs meet the preconditions---\emph{data quality}, not code correctness---which the
method discharges as auto-generated, provably faithful validation queries. The slogan
``provably correct pipelines at zero cost, by construction'' is thus literal: a
pipeline's correctness is settled before any data is read.

\paragraph{Three concerns, cleanly separated.} Runtime testing conflates three
questions that PDD keeps apart. \emph{(i)~Design correctness}---does the pipeline
meet its specification?---is a property of the \emph{code}, settled at design time by
a type system (the \emph{Pipeline Correctness Theorem}, \S\ref{sec:correctness}): as
in Agda or Lean~4, business rules become types and a proof-carrying composition
propagates pre/postconditions so the design type-checks \emph{against its stated
target contract} before any data flows. \emph{(ii)~Implementation faithfulness}---does
the executable compute what the design denotes?---is discharged not by data but by the
grain homomorphism (\S\ref{sec:background}) and compilation-correctness
(Thm.~\ref{thm:compile}), which carry the design's meaning to the running engine.
\emph{(iii)~Data quality}---do real inputs satisfy the design's assumed
preconditions?---is the one genuinely data-dependent question, the boundary every
typed system keeps against untyped external data. Design-time checking eliminates
\emph{code} bugs (grain mismatches, BC violations, spurious joins); the homomorphism
governs the design-to-executable gap; and the generated queries guard against
\emph{data} bugs at that boundary. The
checker is a \emph{role}. At its dependent-types limit \emph{`mostly at design time'}
becomes \emph{`fully at design time'}: in PDL---the \emph{Pipeline Design Language},
our embedded DSL for pipeline design in Agda and Lean~4, encoding \pda{} and its
constraint logic---even the boundary constraints become types, so a design that type-checks is
\emph{completely} correct, carrying a machine-checked proof a reviewer discharges by recompiling; the
runtime residue then shrinks to the one check no type system can absorb---that
\emph{untyped} incoming data meets the design's preconditions. \Sysname{} as
deployed sits one notch left of that limit: the \pddskill{} tool plays the checker
role with an evidence document over the grain, behavioral-class, and
contract-derivable obligations---lighter than a proof assistant, but enough to make
the principle practical in production. \S\ref{sec:correctness:spectrum} places the
two on one spectrum. This is the data-pipeline instance of software engineering's
tests-vs-types-vs-proofs spectrum: with only schemas (structure) and untyped
operations, a pipeline's sole handle on correctness is to \emph{run} it and test
outputs---the data tests the field relies on today, which on sampled data miss the
grain bugs that matter. A type system rich enough to carry data \emph{with its
denotation} and operations as contracts propagated through composition
\cite{graintheory-pods} moves \emph{code} correctness to design time and confines
runtime checks to \emph{data quality} alone.

\paragraph{The engineer's new role: author the spec, check the certificate.}
Prevention reshapes \emph{who does what}. Formal verification is only ever as good as
its specification---a proof discharges the properties you state and is silent on the one
you forgot---and PDD turns that classical limitation into a division of labor. Its
specification is not a natural-language wish but the data \emph{denotation}
$(\grain{R},\ek{R},\BC{R})$ of source and target, plus the pre/postcondition business
rules (PDLC Step~1, \S\ref{sec:overview:pdlc}); and much of it is not the engineer's to
author---the \emph{source} denotation is \emph{inferred} from data during the
data-source analysis any pipeline already does, PDD or not, and the grain and
behavioral-class obligations are \emph{derived}---so the properties most easily forgotten
(don't double-count, respect versioning) cannot be omitted, and what remains his is
small: the target model and its business predicates. Given that spec, the agent does not
free-style: it faithfully instantiates the \emph{designated} pipeline pattern for that layer---each
layer of the architecture has one or more pre-verified patterns (\S\ref{sec:synthesis})---%
carrying the data from the source denotation to the target's without violating a
constraint, and emits a machine-checkable \emph{certificate} of correctness---an Agda
proof at the PDL end, an evidence ledger at the deployed end. The trust model is thus
\emph{proof-carrying}~\cite{necula1997pcc}: the agent is an untrusted producer that must
ship a certificate; the human is the trusted consumer who does only two
things---\emph{validate the spec} (the irreducible act no verifier can perform) and
\emph{check the certificate} (recompile the proof, or review the ledger). Because proofs
are costly to produce but cheap to check, this is exactly what makes delegating synthesis
to an AI safe. The engineer's role shifts from writing and reviewing code---infeasible at
the scale AI now generates it---to authoring a compact, data-anchored
\emph{specification} and checking the certificate rather than the code. Whether the agent
reliably \emph{reaches} a faithful instantiation is a separate, empirical question,
which our evaluation answers by comparing pipelines built with and without PDD
(\S\ref{sec:eval}).

\paragraph{A production realization.} We realize the methodology end-to-end in \sysname{}
(\S\ref{sec:toolchain}): a design tool that produces and verifies a formal design,
a typed PySpark framework (\edbname) implementing the PDA operations on a real engine,
and code-generation that compiles a verified design into a faithful implementation
plus its verification queries. \Sysname{} is in production use on an enterprise
data-engineering program---not a proof of concept---and is the source of our
evaluation (\S\ref{sec:eval}).

\paragraph{Contributions.}
\begin{itemize}[leftmargin=1.2em,itemsep=2pt]
\item \textbf{Pipeline Denotational Design} (PDD, \S\ref{sec:overview}): a
design-first methodology that settles a pipeline's correctness \emph{at design time,
at zero cost}, by designing in a semantic domain and deriving the implementation to
match---grounded in the grain homomorphism (\S\ref{sec:background}) that makes
design-level reasoning provably faithful to the data.
\item A \textbf{three-layer correctness framework} (\S\ref{sec:correctness})---grain
(CalcG), behavioral class (typing), and domain (contract-derivable at zero cost;
data-dependent as generated, provably-faithful queries, Thm.~\ref{thm:compile})---%
unified by the \textbf{Pipeline Correctness Theorem} (Thm.~\ref{thm:pct}): the
structural layers are discharged with no data access, and the proof-carrying
composition collapses the residual data-dependent obligations to a \emph{single
input-boundary check}, so the design is correct by construction \emph{against its
specification}---the only question left for data is whether real inputs meet that leaf
precondition. It delivers \emph{prevention, not detection}---grain, behavioral-class, and
fan/chasm-trap bugs become unrepresentable (\S\ref{sec:correctness:prevention}).
\item A \textbf{verification spectrum} (\S\ref{sec:correctness:spectrum}): the same PDD
design can be checked at three levels of rigor---runtime queries against the data, the
deployed type-level checker (which decides grain and behavioral class automatically, with
no external solver), or machine-checked Agda/Lean~4 proofs. The stronger the check, the
less is left to verify at runtime---from a check at every node down to a single check at
the input. A team picks its level by a pipeline's criticality, complexity, and scale:
correctness by construction is a dial, not a fixed cost.
\item \textbf{PDD Universality} (Thm.~\ref{thm:opgen}): the guarantee rests on grain
alone, so \emph{any} grain-inferring operation set---over batch or streaming
carriers---instantiates PDD, with worked instantiations for PDA, relational algebra,
and the dataflow model.
\item The \textbf{Pipeline Design Algebra} (PDA, \S\ref{sec:pda}), the concrete
\emph{instrument} we instantiate: four morphism classes (Capture/Transform/Declare/%
Apply) and a $\sim$30-operation catalog, each carrying a contract fixing how it
transforms grain, behavioral class, and cardinality; composition operators and
pre-verified patterns for which well-typed composition is grain-correct by
construction (Thm.~\ref{thm:synthesis}), plus a faithful bridge from design to
executable code.
\item \textbf{PDL}, the Pipeline Design Language (\S\ref{sec:related}): a
dependently-typed embedding of PDA and its
constraint logic in Agda/Lean~4 with by-construction pipeline-correctness proofs---the
proof end of the tests--types--proofs spectrum, mechanized end-to-end on a real
pipeline.
\item A \textbf{production realization}, \sysname{} (\S\ref{sec:toolchain}): a design
tool, a typed framework implementing the PDA operations on a real engine, and
code-generation that compiles a verified design into a faithful implementation---in
production use on an enterprise program, not a proof of concept.
\item \ifresults An \textbf{experimental evaluation}\else An \textbf{evaluation
design}\fi{} on real production pipelines
(\S\ref{sec:eval})---design-time defect capture, the zero-cost discharge ratio,
generated-query yield, code-generation faithfulness, cross-pipeline uniformity, and
runtime residue-collapse---that builds each pipeline three ways along the verification
spectrum: \emph{without} PDD (an agent codes directly from the specification, with no
formal design), \emph{with} PDD checked by the deployed evidence-emitting checker, and
\emph{with} PDD proved in PDL (Agda/Lean~4). The contrast isolates the design step from
the LLM's coding ability (no-PDD vs.\ deployed) and measures what the proof end adds
(proved vs.\ deployed). Ad-hoc hand-written pipelines and runtime-only data-quality
tooling serve as reference points that locate the arms on the wider landscape---the
latter notably \emph{unable to express} the grain and behavioral-class defects at all.
\end{itemize}

\noindent Together they make correctness-by-construction for data pipelines a
practical \emph{methodology}: grounded in theory (the grain homomorphism), broad
(three layers, not grain alone), realized in production, and evaluated quantitatively
on real pipelines against agentic and runtime-data-quality baselines.

% =============================================================================
% 2. OVERVIEW: WHAT PIPELINE DENOTATIONAL DESIGN IS   (budget ~1.0 pp)
% Sources: PDD.md §II.0 (semantic domain, lines 718–747); roadmap §4 (Codd
% analogy, three-layer framework). First full draft.
% =============================================================================

\section{Pipeline Denotational Design}
\label{sec:overview}

PDD adapts \emph{denotational design}~\cite{elliott2009denotational}---specifying a
system by a precise mathematical meaning and deriving the implementation to match
it---to data pipelines. Rather than designing in terms of SQL queries, PySpark
jobs, or Airflow DAGs, the engineer designs in a \emph{semantic domain} and lets
the implementation be generated to agree with that design; we call this the
\emph{executable design}.

\subsection{The Semantic Function}

The bridge between the two worlds is a semantic function $\sem{\cdot}$, a
structure-preserving map from concrete artifacts to their meanings
(Table~\ref{tab:semantic-function}). A table, DataFrame, or file denotes a
\emph{data collection} $c:\Coll{R}$ (\S\ref{sec:pda:collection}) carrying a grain
$\grain{R}$, entity key $\ek{R}$, and behavioral class $\BC{R}$; a SQL/PySpark/dbt
transformation denotes a
declarative \emph{PDA operation} (\S\ref{sec:pda}); an entire ETL pipeline
denotes a \emph{function} $\textsf{Pipeline}\;\textsf{Source}\;\textsf{Target}$
(\S\ref{sec:synthesis}); and a business rule or integrity requirement denotes a
\emph{data constraint}---a first-order-logic predicate (\S\ref{sec:correctness}). Following the discipline of denotational
design~\cite{elliott2009denotational}, $\sem{\cdot}$ is a single \emph{meaning}
function overloaded by the \emph{category} of its argument---written
$\sem{\cdot}_{\mathcal{C}}$ when the category $\mathcal{C}$ is ambiguous and dropped
otherwise; for each category it fixes a model $\sem{\mathcal{C}}$ and a map
$\sem{\cdot}_{\mathcal{C}}:\mathcal{C}\to\sem{\mathcal{C}}$. Its most fundamental case
is the \emph{data type} itself: a type $R$ denotes the \emph{denotation of data}
$\sem{R}=(\grain{R},\ek{R},\BC{R})$ of Part~I~\cite{graintheory-pods}, the triple every
collection carries. The defining property of $\sem{\cdot}$ is that it is a
\emph{homomorphism}:
\begin{equation}
\label{eq:homomorphism}
\sem{p\;\textsf{src}} \;=\; \sem{p}\;\sem{\textsf{src}}.
\end{equation}
The meaning of running a pipeline on data equals the meaning of the pipeline
applied to the meaning of the data. Equation~\eqref{eq:homomorphism} is what lets
design and implementation be developed independently and stay synchronized:
reasoning carried out on the design (the right-hand side) is faithful to what the
code actually computes (the left-hand side). \S\ref{sec:background} proves that
this homomorphism holds---it is the \emph{grain homomorphism} of grain theory---and
explains why it is \emph{computable}, which is what turns the principle into a
practical method.

\begin{table}[t]
\centering\small
\caption{The meaning function $\sem{\cdot}$, overloaded by argument category: for each
category $\mathcal{C}$ it gives a model $\sem{\mathcal{C}}$ and a semantic function
$\sem{\cdot}_{\mathcal{C}}:\mathcal{C}\to\sem{\mathcal{C}}$ (after
Elliott~\cite{elliott2009denotational}). The data-type row is the foundational case
(Part~I~\cite{graintheory-pods}); the next rows map the concrete artifacts that realize
it, and the last gives a \emph{data constraint} (a business rule or integrity
requirement) its denotation---a FOL predicate, realized concretely as a verification
query.}
\label{tab:semantic-function}
{\setlength{\tabcolsep}{3pt}
\begin{tabular}{@{}l c L{4.2cm}@{}}
\toprule
Argument & $\sem{\cdot}$ & Meaning (PDD) \\
\midrule
data type $R$ & $\to$ & $(\grain{R},\ek{R},\BC{R})$: denotation of data~\cite{graintheory-pods} \\
table / DataFrame / file & $\to$ & collection $\Coll{R}$ carrying $(\grain{R},\ek{R},\BC{R})$ \\
SQL / PySpark / dbt step & $\to$ & PDA operation (declarative intent) \\
ETL pipeline & $\to$ & $\textsf{Pipeline}\;\textsf{Source}\;\textsf{Target}$ (a function) \\
data constraint & $\to$ & FOL predicate: a $(\textsf{Pre},\textsf{Intra},\textsf{Post})$ spec \\
\bottomrule
\end{tabular}}
\end{table}

\subsection{Four Denotations, One Methodology}

PDD assigns meaning at four levels, each the subject of a later section: the
\emph{denotation of data} (\S\ref{sec:background})---what data \emph{is}: grain,
entity key, behavioral class; the \emph{denotation of operations}
(\S\ref{sec:pda})---what a transformation \emph{intends}: a PDA operation with a
contract; the \emph{denotation of pipelines} (\S\ref{sec:synthesis})---what an
end-to-end workflow \emph{means}: a composition of those operations; and the
\emph{denotation of data constraints} (\S\ref{sec:correctness})---what a
\emph{requirement} means: a first-order-logic predicate. Working in
this domain makes designs \emph{precise} (formal types remove ambiguity),
\emph{composable} (the meaning of the whole is the composition of the meanings of
the parts), \emph{verifiable} (correctness conditions are decidable or provable),
and \emph{implementation-independent} (a design is a specification, not code tied
to one engine).

The fourth denotation is worth a closer look, as it carries the domain layer of
\S\ref{sec:correctness}. A business rule, integrity constraint, or data-integration
rule denotes a first-order-logic predicate, realized in the type system as a static
relation between input and output ($\textsf{ConsStatic}$) or a stateful one over a
target's before/after state ($\textsf{ConsDynamic}$), and bundled with its
precondition into a $(\textsf{Pre}$, $\textsf{Intra}$, $\textsf{Post})$ specification. Such a
specification attaches to a pipeline or sub-pipeline (a pipeline is a DAG of pipelines)
and \emph{propagates through the DAG} via the pipeline operators
(\S\ref{sec:synthesis}), each composition discharging the obligation that one step's
$\textsf{Post}$ entails the next step's $\textsf{Pre}$. Its concrete realization is a
verification query, faithful to the predicate by Theorem~\ref{thm:compile}---so a
design's two implementations, its code and its checks, are both \emph{derived} from
its denotation, each with a faithfulness guarantee.

This mirrors what Codd established for queries---but at the level of \emph{meaning}.
Codd gave queries a semantic foundation, the relational model, under which a query's
meaning is fixed independently of how it is written or which algebra evaluates it.
Grain theory does the same for pipelines: the \emph{semantic domain}
$(\grain{R},\ek{R},\BC{R})$---with computable grain-inference rules (the denotation
function) and a compositional homomorphism (\S\ref{sec:background})---settles a
pipeline's correctness by meaning, before any operation set or engine is fixed.
\pdd{} is the \emph{denotational-design discipline} over that domain; \pda{}
(\S\ref{sec:pda}) is \emph{one} algebra that realizes it---chosen because its
operations model data-engineering intent, but replaceable by relational algebra,
the dataflow model's primitives~\cite{akidau2015dataflow}, or a typed PySpark---any
operation set whose steps carry grain-inference rules (as PDA's do), so CalcG can
thread the denotation through composition (PDD Universality,
Theorem~\ref{thm:opgen}). The foundation is the semantic domain, not
the operations: the semantic domain is to \pda{} as the relational model is to
relational algebra. The payoff, developed in \S\ref{sec:correctness}, is a three-layer
correctness framework---grain, behavioral class, and domain---under a single
principle: a pipeline's correctness is a property of its \emph{code}, settled at
design time. The three layers discharge it before any data flows
(\emph{Pipeline Correctness}, Theorem~\ref{thm:pct}); what remains for runtime is not
correctness but \emph{data quality}---whether the inputs meet the design's
preconditions.

\subsection{The Pipeline Development Life Cycle}
\label{sec:overview:pdlc}

PDD structures pipeline construction as a four-step life cycle (PDLC)---a
development life cycle for data pipelines---that carries a design from semantics to
a correct, optimized implementation (Table~\ref{tab:pdlc}). The rest of the paper
follows this spine.

\begin{description}[leftmargin=0em,itemsep=2pt,style=unboxed]
\item[Step 1 (Prerequisites).] Gather, for every source and target, the data
semantics $(\grain{R},\ek{R},\BC{R})$---the output of data-source
analysis and data modeling---and document the business rules and cross-system
integration logic from requirements analysis.
\item[Step 2 (Pipeline Denotational Design).] Express the \emph{what} of the
pipeline---its transformation logic---in PDA, and the business rules as
first-order-logic predicates in the constraint language (PDL). Correctness is then
established \emph{at design time}: CalcG verifies the grain transformation, the type
checker verifies behavioral-class transitions and the well-formedness and
composition of the business-rule predicates, and the data-dependent obligations are
emitted as verification queries (\S\ref{sec:correctness}). The output is a correct
denotational design---the \emph{blueprint} the implementation must follow.
\item[Step 3 (Executable Design).] Implement the blueprint with a library whose
operations match PDA closely (1:1 in many cases), yielding an \emph{executable
design} that conforms fully to the PDD; running it---then its verification
queries---confirms the data-dependent layer on real data. Because the homomorphism
(\S\ref{sec:background}) makes the implementation's meaning equal the design's,
conformance is structural rather than incidental.
\item[Step 4 (Optimized Implementation).] Finally, tune the executable design for
operational constraints---performance, cost, scalability---selectively and only as
needed. Optimizations are applied as \emph{semantics-preserving} pipeline rewrites
(pipeline equivalence, \S\ref{sec:synthesis}), so the correctness established in
Steps~2--3 is preserved by construction.
\end{description}

\noindent Human effort concentrates on Step~1---authoring and validating the
specification (the source and target denotation plus the business rules); Steps~2--4
then \emph{derive} the design, \emph{check} it, and \emph{preserve} its correctness,
each producing evidence the engineer consumes rather than constructs
(\S\ref{sec:intro}).

\begin{table}[t]
\centering\footnotesize
\setlength{\tabcolsep}{4pt}
\caption{The Pipeline Development Life Cycle. Correctness is established once, at
design time (Step~2), and preserved through implementation (Step~3) and
optimization (Step~4).}
\label{tab:pdlc}
\begin{tabular}{@{}L{1.5cm}L{1.95cm}L{2.95cm}c@{}}
\toprule
Step & Artifact & Correctness evidence & \S \\
\midrule
1.~Prerequisites & data semantics, business rules & --- & --- \\
2.~PDD & denotational design (blueprint) & CalcG, type checker, generated queries & \ref{sec:background}--\ref{sec:correctness} \\
3.~Executable design & implementation (code) & 1:1 conformance, query results on data & \ref{sec:toolchain} \\
4.~Optimized impl. & tuned pipeline & semantics-preserving rewrites & \ref{sec:synthesis},\,\ref{sec:toolchain} \\
\bottomrule
\end{tabular}
\end{table}

\subsection{Running Example}

We thread one example throughout: a \emph{customer-order} pipeline. Order events
(grain $\textsf{CustomerId}\times\textsf{OrderDtm}$, behavioral class
\textsf{IsEvent}) are read incrementally, enriched by a lookup to a versioned
\textsf{IsMultiVersion} customer table and a product table, then aggregated into a
per-customer summary (grain $\textsf{CustomerId}$, class \textsf{IsEntity}) and
upserted into the target. This small pipeline already exercises a capture, a
grain-safe join, an aggregation that changes the grain, and a stateful apply---and,
as \S\ref{sec:correctness} shows, CalcG proves its grain correct before any data is
read, while flagging the fan-trap variant of \S\ref{sec:intro} as a design error.

% =============================================================================
% 3. BACKGROUND: GRAIN THEORY AND THE GRAIN HOMOMORPHISM   (budget ~1.5 pp)
% Sources: PDD.md Part I + §II.0/II.1; arXiv foundations.tex (grain def, fund.
% theorems, grain-determines-semantics), adt-grain.tex (universality, windowing),
% relations.tex (equality/ordering/incomparability, lattice), entity-ek-semantics.tex
% (entity, entity key, behavioral classes), dependency-theory.tex (grain lift,
% homomorphism, compositionality). Cite PODS for proofs.
% Self-contained recap of the grain-theory results the synthesis story rests on.
% =============================================================================

\section{Background: Grain Theory and the Grain Homomorphism}
\label{sec:background}

This section recalls grain theory and the homomorphism \eqref{eq:homomorphism}
that PDD rests on. This material is \emph{Part~I}~\cite{graintheory-pods}: it defines
the denotation of data and its inference rules and \emph{proves} the grain
homomorphism---the foundation this paper (Part~II) builds the methodology on. The
theory is developed in full, with proofs, there; here we give a self-contained account
of the results the synthesis story needs, with intuition in place of formal proof, and
defer every proof to~\cite{graintheory-pods}. The one result in this section that is
\emph{new}---PDD Universality (Thm.~\ref{thm:opgen})---is a contribution of Part~II and
flagged as such. The
\emph{denotation of data} it builds is a triple $(\grain{R},\ek{R},\BC{R})$---the
grain, entity key, and behavioral class of a type---that says \emph{what} each
element represents, \emph{which} entity it is about, and \emph{how} it must be read
and written.

\subsection{Grain: the Irreducible Core of a Type}
\label{sec:bg:grain}

The \emph{grain} $\grain{R}$ of a type $R$ is its level of detail: the irreducible
identifying core that uniquely determines---and so represents---any element of
$R$~\cite{graintheory-pods}. Formally, a grain is a type $G$ that is
\emph{isomorphic} to $R$ (via the \emph{grain function}
$f_g : G \stackrel{\cong}{\longrightarrow} R$) and
\emph{irreducible}: no proper sub-type of $G$ is still isomorphic to $R$. The grain
is thus the canonical minimal representative of $R$'s isomorphism class---the
\emph{atom} of the type, the smallest type carrying $R$'s full identifying
structure. Any two grains of $R$ are themselves isomorphic, so the grain is unique
up to isomorphism and every result below is independent of how it is named (e.g.\ a
\textsf{WeatherObservation} has grain $\textsf{ObservationId}$, equivalently
$\textsf{StationId}\times\textsf{ObservationDtm}$). An \textsf{OrderLineItem} has
grain $\textsf{OrderId}\times\textsf{LineItemId}$; aggregating line items by order
coarsens the grain to $\textsf{OrderId}$. Grain may be \emph{internal}---its fields
are columns of $R$, as above---or \emph{external}: a collection of monthly balances
is identified by $\textsf{BalanceDate}$, which is not a stored field. The inverse of
the grain function is the \emph{grain projection} $\grainproj{R}:R\to\grain{R}$,
which extracts each element's identifying component.

\paragraph{Type-level field-set operations.} Throughout, a (product) type is
identified with its set of fields, and grain theory lifts the set operations to
types~\cite{graintheory-pods}: $R_1\tun R_2$, $R_1\tin R_2$, and $R_1\tdiff R_2$ are
the union, intersection, and difference of field sets, and $R_1\subt R_2$ holds when
$R_1$'s fields are contained in $R_2$'s. These act on \emph{schemas, at the type
level}---no data---and are the workhorse of grain computation: the grain lattice
below, CalcG (\S\ref{sec:correctness}), the equi-join grain
$\grain{R_1}\tun(\grain{R_2}\tdiff Jk)$, and the result-type entry of every operation
contract are all written with them.

\paragraph{A guaranteed key.} The grain projection is \emph{injective}: distinct
elements have distinct grains~\cite{graintheory-pods}. The grain is therefore a
superkey that the type \emph{itself guarantees}---it plays the role of a primary
key, but as a \emph{type-level} property fixed by the schema rather than a
collection-level constraint that must be declared and that filtering or a
careless transformation can silently break. This is exactly what lets a data
collection be kept grain-unique \emph{by construction} (\S\ref{sec:pda:collection}).

\paragraph{Compositional, idempotent, universal.} The grain operator distributes
over the type constructors---$\grain{R_1\times R_2}=\grain{R_1}\times\grain{R_2}$
and $\grain{R_1+R_2}=\grain{R_1}+\grain{R_2}$---and is idempotent,
$\grain{\grain{R}}=\grain{R}$ (the grain of a grain is itself). By the same recipe,
grain extends \emph{compositionally to every algebraic data type}: products, sums,
and the inductive (lists, trees) and coinductive (streams) types built from
them~\cite{graintheory-pods}. Grain theory is in this sense \emph{type-universal}.
The one subtlety is infinite data: a stream's grain is itself infinite---finitely
\emph{presented} as a type expression, but with no finite carrier. It acquires a
finite grain the moment the stream is \emph{windowed} or \emph{partitioned}, the
grain becoming the partition key $K$ (a per-station daily window has grain
$\textsf{StationId}\times\textsf{Day}$)---which is precisely the form in which a
stream processor evaluates an unbounded input. Grain reasoning thus covers batch
and streaming uniformly: a windowed stage is verified exactly as a table is
(\S\ref{sec:pda:collection}).

\subsection{Grain Relations and the Grain Lattice}
\label{sec:bg:relations}

\begin{table*}[t]
\centering\footnotesize
\setlength{\tabcolsep}{5pt}
\caption{The five core behavioral classes~\cite{graintheory-pods}. A type's
\emph{behavior} is its grain \emph{together with the inter-element dependency the
grain incorporates}; that dependency---not the grain shape---is what separates the
classes (e.g.\ \textsf{IsEvent} vs.\ \textsf{IsMultiVersion}). The classes form a
hierarchy, $\textsf{IsMultiVersion}\subset\textsf{IsEvent}\supset\textsf{IsSeqEvent}\supset\textsf{IsSnapshot}$,
each subclass adding constraints, and they subsume the standard data-modeling
paradigms. Only \emph{transaction} facts are \textsf{IsEvent}; \emph{periodic-snapshot}
facts are \textsf{IsSnapshot}.}
\label{tab:bc}
\begin{tabular}{@{}L{1.9cm}L{2.5cm}L{6.1cm}L{5.5cm}@{}}
\toprule
Behavioral class & Grain $\grain{R}$ & Inter-element dependency incorporated in the grain & Modeling-paradigm constructs it subsumes \\
\midrule
\textsf{IsEntity} & $\ek{R}$ & none---independent entities (no ordering field in the grain) & relational (3NF) entity; Kimball SCD-1 dimension; Data Vault hub; master / reference data \\
\textsf{IsEvent} & $\ek{R}\times\textsf{EventDtm}$ & happened-before partial order; elements are immutable occurrences (append-only) & Kimball \emph{transaction} fact; event log; event-sourced stream; audit trail \\
\textsf{IsMultiVersion} & $\ek{R}\times\textsf{FromDtm}$ & effective-before order; each element is a \emph{state} valid over an interval; consecutive same-entity versions must differ in payload & Kimball SCD-2 dimension; Data Vault satellite; temporal / bitemporal table \\
\textsf{IsSeqEvent} & $\ek{R}=\textsf{EventDtm}$ & strict total order---a single global sequence, no concurrency & time series; sequential ledger; CDC stream feed \\
\textsf{IsSnapshot} & $\ek{R}=\textsf{SnapshotDtm}$ & strict total order; each element covers the complete entity population at its snapshot time & Kimball \emph{periodic-snapshot} fact; monthly balance snapshot \\
\bottomrule
\end{tabular}
\end{table*}

Three relations compare the grains of two types~\cite{graintheory-pods}. Types are
\emph{grain-equivalent}, $R_1\eqg R_2$, when their grains are isomorphic---
equivalently, when $R_1\cong R_2$: they represent the same entities at the same
detail, possibly from different perspectives. They are \emph{grain-ordered},
$R_1\leg R_2$ (``$R_1$ \emph{refines} $R_2$''), when a surjection
$\grain{R_1}\thra\grain{R_2}$ exists, so $R_1$ carries at least as much detail as
$R_2$; equivalently a one-to-many map $R_1\thra R_2$ exists. This is the structural
pattern of a foreign key (many-to-one): e.g.\ $\textsf{OrderDetail}\leg\textsf{Order}$
via $(\textsf{orderId},\textsf{lineItemId})\mapsto\textsf{orderId}$. Ordering is a
partial order up to isomorphism, and---being surjection existence at the type
level---it is the structural counterpart of a functional dependency, admitting an
Armstrong-style axiomatization that propagates declared
orderings~\cite{graintheory-pods}. Finally, $R_1$ and $R_2$ are
\emph{grain-incomparable}, $R_1\incg R_2$, when neither refines the other---the
configuration behind the \emph{fan trap} of \S\ref{sec:intro}, where two facts of
incomparable grain are joined and silently inflate an aggregate.

Under $\leg$ the space of types forms a \emph{bounded lattice}
(Figure~\ref{fig:lattice}). Any two types $R_1,R_2$ have a least upper bound
$R_1\tin R_2$, their \emph{finest common coarsening}, and a greatest lower bound
$R_1\tun R_2$, their \emph{coarsest common refinement}; on product types these are
computed by intersection and union of the grain field sets. The lattice is what
makes grain \emph{inference} compositional: each operation's output grain is
obtained from its inputs' grains by these field-set operations, and the
incomparable grains arising from a join are resolved through the lattice---the
basis of the inference rules of \S\ref{sec:bg:computable} and the \calcg{}
algorithm of \S\ref{sec:correctness}.

\begin{figure}[t]
\centering
\begin{tikzcd}[column sep=small, row sep=normal]
  & R_1 \tin R_2 \ \text{\scriptsize(coarsest)} & \\
  R_1 \arrow[ur, "\leg"] & & R_2 \arrow[ul, "\leg"'] \\
  & R_1 \tun R_2 \ \text{\scriptsize(finest)} \arrow[ul, "\leg"] \arrow[ur, "\leg"'] &
\end{tikzcd}
\caption{The grain lattice for two incomparable types. The least upper bound
$R_1\tin R_2$ (top) is the finest common coarsening; the greatest lower bound
$R_1\tun R_2$ (bottom) the coarsest common refinement. Join inference resolves
incomparable grains through these operations.}
\label{fig:lattice}
\end{figure}

\subsection{Entity and Entity Key: a Grain Within the Grain}
\label{sec:bg:ek}

Every element is \emph{about} some subject---its \emph{entity}---and the
\emph{entity key} $\ek{R}$ is the grain of that subject
($\ek{R}\subseteq_{typ}\grain{R}$): the minimal fields identifying who or what the
element pertains to. Unlike the grain, which is computed mathematically, the entity
is a \emph{declared} semantic property, fixed by domain analysis (``what is this
data about?'')~\cite{graintheory-pods}. When the element \emph{is} the subject, the
entity key is the whole grain: an \textsf{OrderLineItem}'s subject is the line item
itself, so $\ek{R}=\grain{R}=\textsf{OrderId}\times\textsf{LineItemId}$. When the
element is an \emph{occurrence} or \emph{version} of a subject, the entity key is a
proper part of the grain---a grain within the grain: a customer version (grain
$\textsf{CustomerId}\times\textsf{FromDtm}$) is about a customer, so
$\ek{R}=\textsf{CustomerId}$. The entity key is the \emph{reconciliation point
across types of different grain}: a current-state \textsf{Customer} table (grain
$\textsf{CustomerId}$) and its SCD-2 history (grain
$\textsf{CustomerId}\times\textsf{FromDtm}$) describe the same customers at
different detail, and so must agree on entity key---\emph{entity preservation}, the
collection relation $=_{ek}$ (\S\ref{sec:correctness}), which is strictly more
informative than a cardinality check ($|c_1|=|c_2|$) because it reveals \emph{which}
entities are present or missing. This is what lets data at one grain be joined,
integrated, or augmented with data at another (\S\ref{sec:synthesis}).

\subsection{Behavioral Class: Grain Determines Behavior}
\label{sec:bg:bc}

\paragraph{One schema, many behaviors.} The same fields admit different grains, and
each grain is a different \emph{kind} of data. For \textsf{Customer} with fields
$(\textsf{CustomerId},\textsf{Name},\dots,\textsf{FromDtm},\textsf{CreatedOn})$:
\begin{itemize}[leftmargin=1.2em,itemsep=1pt,topsep=2pt]
\item grain $\textsf{CustomerId}$---an \emph{entity}: each row is a customer's
current state;
\item grain $\textsf{CustomerId}\times\textsf{FromDtm}$---\emph{versioned}: each row
is a time-stamped state, valid over an interval;
\item grain $\textsf{CustomerId}\times\textsf{CreatedOn}$---an \emph{event}: each
row is an immutable creation occurrence.
\end{itemize}
One table, three grains, three correct ways to read and write it---so without a
declared grain the data cannot be interpreted at all. In practice engineers resolve
this ambiguity by guesswork (column names, sample data, tribal knowledge), the root
cause of the transformation errors of \S\ref{sec:intro}. What the grain pins down
is the type's \emph{behavior}.

\paragraph{Behavioral class.} A type's \emph{behavioral class} $\BC{R}$ is its
\emph{behavior}: the grain \emph{together with the dependency among elements that the
grain incorporates}~\cite{graintheory-pods}. A field changes behavior only when it
enters the grain---in the example above one set of \textsf{Customer} fields becomes
independent entities, time-ordered states, or creation events purely by its grain.
This is why \emph{behavior}, not domain meaning, dictates the code. The dependency
is what separates classes that share a grain \emph{shape}: \textsf{IsEvent} elements
are immutable occurrences under a happened-before order, whereas
\textsf{IsMultiVersion} elements are \emph{states} valid over an interval under an
effective-before order, with the added invariant that consecutive versions of an
entity must differ in payload---so $\textsf{IsMultiVersion}\subset\textsf{IsEvent}$
is a genuine refinement, not a renaming. Because $\BC{R}$ is determined by grain, it
is \emph{read off the schema}; the five core classes form a hierarchy and a handful
of them \emph{unify the data-modeling paradigms}---relational/3NF, dimensional,
Data~Vault, document, time-series, event-sourced, and graph (Table~\ref{tab:bc}).
Order-indexed analogues (\textsf{IsOrdered}, \textsf{IsSeqOrdered}), an N:M relation
class (\textsf{IsNMRelation}, e.g.\ Data~Vault links and graph edges), and a
parent--child mixin (\textsf{HasParent}) extend the taxonomy. Together these form the
\emph{type-level denotation} $\sem{R}=(\grain{R},\ek{R},\BC{R})$---and they carry
precisely what the schema cannot. \emph{A schema is only structure.} The fields of
\textsf{Customer} do not, by themselves, say \emph{what} each row represents (its
level of detail, fixed by $\grain{R}$), \emph{which} entity it integrates on (its
integration point, $\ek{R}$), or \emph{how} it may be read and written ($\BC{R}$)---
the very questions any correct transformation must answer. A design that is
\emph{declarative of its intent}, and verifiable as such, therefore cannot act on
bare schemas: it must act on types that carry their denotation $\sem{R}$. That
denotation is the object the algebra of \S\ref{sec:pda} manipulates and the layers
of \S\ref{sec:correctness} verify.

\subsection{Why PDD Holds: The Grain Homomorphism}
\label{sec:background:homomorphism}

The homomorphism \eqref{eq:homomorphism} that licenses design-time reasoning is a
theorem of grain theory. Write $\grainproj{R}:R\to\grain{R}$ for the grain
projection. For any transformation $h:R_1\to R_2$, its \emph{grain lift}---the
\emph{grain component} of the transformation's denotation $\sem{h}$, capturing its
grain-to-grain content and discarding all payload---is the composite
\[
\gl{h} \;=\; \grainproj{R_2}\circ h\circ \grainproj{R_1}^{-1}
       \;:\; \grain{R_1}\to\grain{R_2}.
\]
It is what \calcg{} computes (the entity-key and behavioral-class components are
determined alongside it, \S\ref{sec:pda:classes}); we keep the distinct name to mark
that it tracks only the grain.

\begin{theorem}[Grain homomorphism~\cite{graintheory-pods}]
\label{thm:grain-hom}
For any $h:R_1\to R_2$, grain projection commutes with transformation:
$\grainproj{R_2}\circ h = \gl{h}\circ\grainproj{R_1}$
(Figure~\ref{fig:grain-hom}). Moreover the grain lift is compositional,
$\gl{h_2\circ h_1}=\gl{h_2}\circ\gl{h_1}$, and $h$ is a surjection iff $\gl{h}$ is
(so $R_1\leg R_2$ exactly when $\gl{h}$ is surjective).
\end{theorem}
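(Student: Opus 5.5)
The whole argument rests on one fact recalled in \S\ref{sec:bg:grain}: the grain projection $\grainproj{R}:R\to\grain{R}$ is a bijection. It is the inverse of the grain function $f_g:\grain{R}\stackrel{\cong}{\longrightarrow}R$, so $\grainproj{R}^{-1}=f_g$ is a genuine two-sided inverse, with $\grainproj{R}^{-1}\circ\grainproj{R}=\mathrm{id}_R$ and $\grainproj{R}\circ\grainproj{R}^{-1}=\mathrm{id}_{\grain{R}}$. With this in hand, each part of Theorem~\ref{thm:grain-hom} reduces to an equational calculation on the definition $\gl{h}=\grainproj{R_2}\circ h\circ\grainproj{R_1}^{-1}$. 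I will prove the three claims in the order stated.

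\emph{Commutation.} Compose $\gl{h}$ on the right with $\grainproj{R_1}$:
\[
\gl{h}\circ\grainproj{R_1}=\grainproj{R_2}\circ h\circ(\grainproj{R_1}^{-1}\circ\grainproj{R_1})=\grainproj{R_2}\circ h .
\]
This is exactly the square of Figure~\ref{fig:grain-hom}.

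\emph{Compositionality.} Take $h_1:R_1\to R_2$ and $h_2:R_2\to R_3$. Insert the identity $\grainproj{R_2}^{-1}\circ\grainproj{R_2}=\mathrm{id}_{R_2}$ between $h_2$ and $h_1$:
\[
\gl{h_2\circ h_1}=\grainproj{R_3}\circ h_2\circ\grainproj{R_2}^{-1}\circ\grainproj{R_2}\circ h_1\circ\grainproj{R_1}^{-1}=\gl{h_2}\circ\gl{h_1}.
\]
As a side remark, $\gl{\mathrm{id}}=\mathrm{id}$, so $\varphi$ is a functor.

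\emph{Surjectivity.} By definition, $\gl{h}$ is $h$ conjugated by bijections. Composing a map with bijections on either side preserves and reflects surjectivity. Hence $h$ is surjective iff $\gl{h}$ is.

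\emph{The parenthetical on $\leg$.} Here I reconcile two definitions: $R_1\leg R_2$ was defined by the existence of a surjection $\grain{R_1}\thra\grain{R_2}$, equivalently of a surjection $R_1\thra R_2$. Given a surjective $h:R_1\to R_2$, the lift $\gl{h}$ is a surjection on grains. Conversely, given a surjection $s:\grain{R_1}\thra\grain{R_2}$, set $h=\grainproj{R_2}^{-1}\circ s\circ\grainproj{R_1}$. Then $h$ is a surjection $R_1\to R_2$ with $\gl{h}=s$. So $R_1\leg R_2$ holds precisely when some $h$ has a surjective lift.

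\emph{Main obstacle.} The algebra above is trivial; the real difficulty is justifying that $\grainproj{R}$ is a true bijection in every case the paper allows. For external grains, such as $\textsf{BalanceDate}$, the grain is not a stored field, so it is not a field projection. For infinite streams, the grain has no finite carrier until the stream is windowed. For each of these, I would argue directly from the definition of grain as a type isomorphic to $R$, not from field-set inclusion. Injectivity of $\grainproj{R}$ then holds by that isomorphism, and surjectivity onto $\grain{R}$ holds by taking the grain up to isomorphism. For streams, I would restrict the statement to windowed or partitioned carriers. Only after this step are the equational calculations legitimate.
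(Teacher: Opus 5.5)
Your proof is correct and matches the paper's approach. The paper defers the proof to Part~I and remarks only that commutation holds by construction; your argument follows the same line, since $\gl{h}$ is $h$ conjugated by the grain projections: compositionality is the cancellation $\grainproj{R_2}^{-1}\circ\grainproj{R_2}=\mathrm{id}_{R_2}$, and the surjectivity clause (read existentially for $\leg$, as you rightly do) is conjugation by bijections. Your ``main obstacle'' is not a real one, though: bijectivity of $\grainproj{R}$ is part of the definition of a grain, because the grain function $f_g:\grain{R}\cong R$ is an isomorphism whether the grain is internal, external, or infinite. So no case analysis is needed, and restricting streams to windowed carriers would only weaken the theorem.
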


\begin{figure}[t]
\centering
\begin{tikzcd}[column sep=large, row sep=normal]
  \grain{R_1} \arrow[r, "\gl{h}"] & \grain{R_2} \\
  R_1 \arrow[u, "\grainproj{R_1}"] \arrow[r, "h"'] & R_2 \arrow[u, "\grainproj{R_2}"']
\end{tikzcd}
\caption{The grain homomorphism. The two paths from $R_1$ to $\grain{R_2}$ agree:
transforming then projecting to grain equals projecting to grain then applying the
grain lift. The top row reasons about grain alone; the bottom row about full types.}
\label{fig:grain-hom}
\end{figure}

\noindent The commutativity holds by construction (the lift is \emph{built} to make
grain-level reasoning faithful); the content is the compositionality. By
Theorem~\ref{thm:grain-hom}, the grain-level meaning of a whole pipeline is the
composition of the grain-level meanings of its steps---so a designer can reason
about an entire pipeline on the top row of Figure~\ref{fig:grain-hom}, never
touching the data on the bottom row. This is precisely the homomorphism
\eqref{eq:homomorphism} of \S\ref{sec:overview}, and the foundation of
Theorem~\ref{thm:synthesis}.

\subsection{Why PDD Is Computable}
\label{sec:bg:computable}

A homomorphism alone would be only an abstract guarantee. PDD is a practical
\emph{method} because two further ingredients make the grain lift \emph{computable
without touching data}. First, grain theory supplies \emph{inference rules}: for
every relational-algebra operation there is an explicit rule that computes the
output grain from the input grains (restriction and extension preserve it,
projection and grouping change it, equi-join follows
$\grain{R_1}\cup_{typ}(\grain{R_2}-_{typ}Jk)$), each established and proved
in~\cite{graintheory-pods} (Thms.~36--42). These rules \emph{are} $\gl{h}$ made
concrete---they say exactly what each operation's denotation is, as a finite
computation on field sets, resolved through the grain lattice of
\S\ref{sec:bg:relations}. Second, because $\BC{R}$ is determined by the grain, the
denotation $(\grain{R},\ek{R},\BC{R})$ is recomputable after every step and acts as
an \emph{active constraint system}, rejecting semantically invalid operations at
design time. Thus after each operation the three questions that matter---\emph{what
is the result's grain, entity key, and behavioral class?}---are all answerable at
the type level, which is exactly what \S\ref{sec:pda}--\ref{sec:correctness}
exploit to verify pipelines at zero cost. The specific PDA operators of
\S\ref{sec:pda} are one such instantiation; any transformation language with
computable grain inference would support PDD equally. Because the inference rules
act on field sets and the pipeline DAG, never on the elements, grain reasoning is
independent of a collection's cardinality: a windowed stream is verified exactly as
a table is, finiteness being a property of \pda{}'s batch carrier rather than of the
grain theory (\S\ref{sec:pda:collection}). Part~I already observes that its \calcg{}
verification algorithm is not tied to relational algebra: it needs only a
\emph{computable} per-operation grain rule over a finite DAG, and windowed collections
supply one for streaming pipelines~\cite{graintheory-pods}. We make this a
\emph{universality} result: grain is defined for every algebraic data type
(\S\ref{sec:bg:grain}), and PDD inherits that universality over the operation set and
the carrier as well.

\begin{theorem}[PDD Universality]
\label{thm:opgen}
Let $\textsf{Op}$ be a \emph{grain-inferring} operation set---operations over
grain-theoretic types, realized over any carrier $F$ (a container type---\pda{}'s
finite-bag collection, a stream, \dots), in which every $\textsf{op}\in\textsf{Op}$
carries a \emph{faithful} grain rule $\calcg[\textsf{op}]$, i.e.\
$\sem{\textsf{op}\,c}=\calcg[\textsf{op}]\,\sem{c}$ for every input $c$. Then for every pipeline $p$ composed from $\textsf{Op}$:
\emph{(i)}~the homomorphism $\sem{p\,\textsf{src}}=\sem{p}\,\sem{\textsf{src}}$ holds;
\emph{(ii)}~$\calcg[p]$ is computable by threading the per-operation rules through the
pipeline DAG, $\calcg[\textsf{op}_1\seqc\textsf{op}_2]=\calcg[\textsf{op}_2]\circ\calcg[\textsf{op}_1]$;
and \emph{(iii)}~grain correctness---whether
$\calcg[p](\grain{\textsf{Source}})=\grain{\textsf{Target}}$---is decidable at design
time, with no data access. Design-time verification (\S\ref{sec:correctness}) therefore
applies to \emph{any} such $\textsf{Op}$, of which \pda{} is one instance. The
denotation is total on the types involved because grain extends to arbitrary algebraic
data types, inductive and coinductive alike~\cite{graintheory-pods}; and since
decidability needs only finite \emph{schemas}, not finite data, the carrier $F$ may be
unbounded.
\end{theorem}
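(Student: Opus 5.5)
The plan is structural induction on the pipeline DAG. The only per-operation fact available is the faithfulness hypothesis $\sem{\textsf{op}\,c}=\calcg[\textsf{op}]\,\sem{c}$. We \emph{define} the design-level denotation of a composite pipeline by $\sem{p}:=\calcg[p]$, where $\calcg[p]$ is built from the pipeline syntax. For a single operation, $\calcg[\textsf{op}]$ is the given rule. For sequential composition we set $\calcg[p_1\seqc p_2]:=\calcg[p_2]\circ\calcg[p_1]$. For the multi-input and multi-output combinators (fork, product), we take the product or pairing of the component rules on the tuple of input denotations. This is exactly the order in which a topological traversal of the finite DAG threads denotations.

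For \emph{(i)}, the base case is precisely the faithfulness hypothesis. For the inductive step, assume $p=p_1\seqc p_2$ and compute
$\sem{p\,\textsf{src}}=\sem{p_2\,(p_1\,\textsf{src})}=\calcg[p_2]\,\sem{p_1\,\textsf{src}}=\calcg[p_2]\,(\calcg[p_1]\,\sem{\textsf{src}})=\calcg[p]\,\sem{\textsf{src}}$.
The product and fork cases are componentwise: a pair of collections denotes the pair of denotations, using that grain distributes over products (\S\ref{sec:bg:grain}). The grain component of this argument is the compositionality clause of Theorem~\ref{thm:grain-hom}, $\gl{h_2\circ h_1}=\gl{h_2}\circ\gl{h_1}$. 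The $\ek{\cdot}$ and $\BC{\cdot}$ components follow because they are determined by the grain (\S\ref{sec:bg:bc}), so they are recomputed at each step by the same threading.

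Nothing in this argument inspects elements of the carrier $F$. Faithfulness is stated for every input $c$ regardless of $F$, so the induction goes through verbatim for streams. Totality of $\sem{\cdot}$ on the types occurring in $p$ comes from the type-universality of grain for inductive and coinductive ADTs. In the unbounded case the finite \emph{presentation} of the grain is used rather than a carrier.

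Part \emph{(ii)} is then immediate from the definition together with (i). For \emph{(iii)}, the argument is as follows:
\begin{itemize}
\item $\grain{\textsf{Source}}$ is a finite type expression (a finite schema).
\item Each $\calcg[\textsf{op}]$ is a computable map on such expressions, namely field-set operations in the lattice of \S\ref{sec:bg:relations}.
\item The DAG is finite, so $\calcg[p](\grain{\textsf{Source}})$ is computed in finitely many steps.
\item Comparing the result with $\grain{\textsf{Target}}$ is then a decidable equality of finite type expressions up to isomorphism.
\end{itemize}

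The main obstacle is that last comparison, together with making ``computable rule'' precise. The hypothesis as stated only says the rule is faithful, not that it is effective. I would make effectiveness explicit as part of ``grain-inferring''; for PDA it is witnessed by the inference rules of \S\ref{sec:bg:computable}. I would then argue that $\eqg$ on finite schemas reduces to a canonical-form comparison. For product types this means comparing field sets. For sums and recursive types it means comparing normalized type expressions, which is where I would invoke the isomorphism-invariance of grain from Part~I rather than prove decidability of type isomorphism in general.
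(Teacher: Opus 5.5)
Your proposal follows the paper's proof essentially step for step: induction over composition driven by faithfulness at each operation (the paper inducts on a linear chain and handles the DAG by topological order, where you spell out the product and fork cases), totality from the ADT-universality of grain, and decidability because a finite DAG of computable field-set rules yields a finite schema that is then compared against $\grain{\textsf{Target}}$. Your explicit effectiveness assumption is the one the paper uses tacitly when it says each rule is a finite field-set operation, and the isomorphism worry is unnecessary: the paper's comparison, which is what the statement's $=$ means, is plain equality of finite field sets and trivially decidable, and isomorphism-invariance of grain would not by itself give a decision procedure for sum or recursive type expressions anyway.
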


\begin{proof}[Proof sketch]
Everything rests on the rules being \emph{exact}. Faithfulness says that running an
operation and then reading off its grain gives the same answer as applying that
operation's rule to the input's grain:
$\sem{\textsf{op}\,c}=\calcg[\textsf{op}]\,\sem{c}$. Chaining operations therefore chains
their rules: for $p=\textsf{op}_1\seqc\cdots\seqc\textsf{op}_n$, applying the rules in
order turns the source grain into the pipeline's output grain,
$\sem{p\,c}=\calcg[\textsf{op}_n]\cdots\calcg[\textsf{op}_1]\,\sem{c}$ (a one-line
induction on the chain; this is the compositionality of the grain homomorphism,
\S\ref{sec:background})---which is exactly (i) and (ii). Threading the per-operation
rules through the DAG this way \emph{is} the \calcg{} verification algorithm of Part~I:
walk the DAG in topological order, reusing a computed grain at a fan-out and applying
the binary rule at a fan-in~\cite{graintheory-pods}. For (iii), each rule is a finite
computation on field sets and the DAG is finite, so the target check
$\calcg[p](\grain{\textsf{Source}})=\grain{\textsf{Target}}$ terminates without reading
any data---in time $O(|V|\cdot k\cdot|F|)$ for field-set rules (a DAG of $|V|$
operations, fan-in $k$, at most $|F|$ fields per type)~\cite{graintheory-pods}. Finally,
a grain is defined for every algebraic data type~\cite{graintheory-pods}, and no rule
inspects the data---only the schema---so an unbounded stream is verified exactly as a
finite table. \fullproof
\end{proof}

\begin{table}[t]
\centering\small
\caption{The grain rule $\calcg[\textsf{op}](G_{\text{in}})$ is the same finite
field-set computation across operation sets and carriers: a group-by sets the grain to
its key, an equi-join unions the input grains minus the join key, a window adds itself
to the grain. \pda{} instantiates a pattern that relational algebra and the dataflow
model share.}
\label{tab:opgen}
\begin{tabular}{@{}L{2.1cm}L{2.3cm}L{3.0cm}@{}}
\toprule
Operation set (carrier) & Operation & $\calcg[\textsf{op}](G_{\text{in}})$ \\
\midrule
Relational algebra (relations) & equi-join $\bowtie_{Jk}$ & $\grain{R_1}\tun(\grain{R_2}\tdiff Jk)$ \\
 & aggregation $\gamma_{X}$ & $X$ \\
\pda{} (finite bags) & \texttt{lookup}\,$Jk$ & $\grain{R_1}\tun(\grain{R_2}\tdiff Jk)$ \\
 & \texttt{groupby-ek} & $\ek{R}$ \\
Dataflow (bounded/ unbounded) & \texttt{GroupByKey} & $K$ \\
 & windowed \texttt{GroupByKey} & $K\times\textsf{Window}$ \\
\bottomrule
\end{tabular}
\end{table}

% =============================================================================
% 4. THE PIPELINE DESIGN ALGEBRA (PDA)   (budget ~2.25 pp)
% Drafted from PDD.md §II.2 (lines 839–2693). First full draft.
% =============================================================================

\section{The Pipeline Design Algebra}
\label{sec:pda}

\pda{} is the concrete \emph{instrument} with which this paper instantiates PDD---one
grain-inferring operation set among those Theorem~\ref{thm:opgen} admits (relational
algebra and the dataflow model are others), chosen because its operations model
data-engineering intent directly. The methodology rests on the semantic domain
(\S\ref{sec:overview}), not on these particular operations; this section fixes the
vocabulary the rest of the paper is written in.

In PDD, the meaning of a concrete data operation---a SQL query, a PySpark
transformation, a dbt model---is a \emph{PDA operation}. PDA operations are
\emph{declarative}: they express the intent of a transformation (the \emph{what})
without prescribing an implementation (the \emph{how}), which is what makes a
design engine-independent and formally verifiable. Every operation acts on a
\emph{data collection}, the container we define first.

\begin{remark}[\pda{} is one instantiation]
\label{rem:pda-instantiation}
Nothing in this section is load-bearing for PDD in the way the semantic domain
(\S\ref{sec:overview}) is. Any operations that model concrete transformations and
admit computable grain inference after each step would support PDD equally
(PDD Universality, Theorem~\ref{thm:opgen})---relational algebra, the dataflow
model's primitives, or a typed PySpark all serve, over carriers ranging from relations
to unbounded streams. \pda{} is the instantiation we choose because its operations
model data-engineering intent directly. The foundation is the semantic domain---grain, behavioral class, and entity
key, with their inference rules---not these particular operations.
\end{remark}

\subsection{The Data Collection}
\label{sec:pda:collection}

Where grain theory (\S\ref{sec:background}) reasons about data \emph{types}, the
algebra acts on \emph{data collections}. Colloquially, a data collection is a
dataset with a guaranteed primary key---its grain. Formally, the meaning of a
concrete data source---a table, a Parquet file, a Spark DataFrame, a window of a
stream---is a data collection $c:\Coll{R}$: a finite, unordered bag whose elements
are kept \emph{grain-unique by construction}. It is built from two primitives,
$\emptyset_{coll}$ and an insertion that adds an element only if no element with the
same grain is already present, so every collection satisfies the grain-uniqueness
invariant
$\forall r_1,r_2\in c.\ \textit{grain}\,r_1=\textit{grain}\,r_2\Rightarrow r_1=r_2$.
The abstraction is implementation-independent: the carrier is any type constructor
$C:\mathsf{Set}\to\mathsf{Set}$---a functor---that supports the collection interface
($\emptyset_{coll}$, the grain-unique \texttt{insert-coll}, \texttt{delete-coll}, and the
\texttt{foldr} catamorphism) and maintains that invariant; lists, bags, database
tables, and DataFrames are all instances. Type extractors---\texttt{GetG},
\texttt{GetEK}, \texttt{GetBC}---then recover its semantics $(\grain{R},\ek{R},\BC{R})$
from the schema alone, so designs are portable across engines. The collection is not part of the grain-theory
denotation, which is collection-free (\S\ref{sec:background})---it is the
\emph{carrier} that \pda{}'s operation set commits to. Grain theory assumes only a
bare ``finite bag'' to state its inference rules over relational algebra operations,
and a grain denotation never inspects the carrier; \pda{}, by contrast, \emph{needs} one: the write operations of
\S\ref{sec:pda:apply} are element-level merges, and the class invariants below
($\subseteq_c$, completeness, grain-unique maintenance) are collection-level---none
is visible to grain alone. The carrier is \emph{finite} not because grain theory
demands it, but because \pda{}'s operations are \emph{batch} captures, transforms,
and applies over datasets; a streaming operation set would denote the same grains
over an unbounded carrier. Only at execution---the Executable Design
(\S\ref{sec:toolchain})---is the carrier realized concretely, as a typed finite
collection such as a DataFrame.

\subsection{Four Morphism Classes}
\label{sec:pda:classes}

Every PDA operation belongs to exactly one of four classes. Three act on
\emph{data}---reading, transforming, and writing---and a fourth acts only at the
\emph{type} level:
\[
\begin{array}{ll}
\textsf{IsCapture}\;A = A \to A & \text{read (endomorphism)}\\
\textsf{IsTransform}\;A\,B = A \to B & \text{transform (pure function)}\\
\textsf{IsDeclare}\;A\,B = A \to B & \text{re-declare (type-level only)}\\
\textsf{IsApply}\;A = A \to A \to A & \text{write (binary merge)}
\end{array}
\]
where $A,B$ range over arbitrary types; every catalog operation instantiates them
at a collection ($A=\Coll{R}$). Leaving the classes polymorphic, rather than fixing
the carrier to $\Coll{\cdot}$, is what lets a pipeline range over a composite type
such as the product built by $\parc$ (\S\ref{sec:synthesis}).
A capture filters but never adds or alters elements ($\textsf{op}\,c \subseteq_c c$).
A transform is a pure, total function that may change the type, grain, and
behavioral class. A \emph{declare} changes only the denotation---the declared grain,
behavioral class, and derived entity key---performing \emph{no} operation on the
data; it is the one family with no relational-algebra analogue, sound under a
precondition (its declared grain is a key of the input) discharged at design time
(\S\ref{sec:correctness:regrain}). An apply merges incoming data into a
before-state, with an explicit conflict-resolution rule.

The three \emph{data} acts are the stages of the \emph{Capture--Transform--Apply}
(CTA) pipeline anatomy that \S\ref{sec:synthesis} composes into pipelines; a declare
acts only on the type and adds no stage, as its composition shows.

\begin{proposition}[Declares compose away]
\label{prop:declare-composes}
A declare changes the denotation but touches no data, so on its own it is neither a
capture (which keeps the denotation) nor a transform (which touches data). Its
precondition---the declared grain is a key of the input---is established only by a
\emph{data} operation that makes that grain a key (a capture or transform such as
\texttt{pick}, \texttt{checkout}, \texttt{groupby}); another declare, doing no data
work, cannot supply it, so declares do not chain. Hence a grain-changing declare is
always adjacent to a capture or transform, and the composite ($C\mathbin{\circ}D$,
$D\mathbin{\circ}C$, $T\mathbin{\circ}D$, $D\mathbin{\circ}T$) both touches data and
changes the type---a transform. A declare therefore introduces no pipeline stage: it
folds into a process-phase transform, leaving the CTA anatomy intact.
\end{proposition}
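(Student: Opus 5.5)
The plan is a case analysis on the four morphism classes. It runs in three parts: first, that a declare is neither a capture nor a transform; second, that declares do not chain; third, that every admissible composite involving a declare is a transform. For each class I fix two invariants. One is its effect on data, i.e.\ the underlying carrier map on elements. The other is its effect on the denotation $\sem{R}=(\grain{R},\ek{R},\BC{R})$.

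\emph{Step 1: the classes are separated by these two invariants.}
\begin{itemize}
\item A capture satisfies $\textsf{op}\,c\subseteq_c c$ and keeps the type, so it keeps the denotation.
\item A transform is a pure function on elements, possibly changing type and grain.
\item A declare is the identity on elements but changes the declared denotation.
\end{itemize}
A grain-changing declare therefore violates the capture condition, because its type changes. It is also not a genuine transform, because its data map is the identity. Here I read ``transform'' operationally, as a morphism that does data work, in line with the paper's description of the classes. This gives the first sentence of the statement.

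\emph{Step 2: non-chaining, by a precondition argument.} The declare's precondition is that the declared grain $G'$ is a key of the input collection, and it is discharged at design time (\S\ref{sec:correctness:regrain}). Suppose $D_2\circ D_1$ is given. The input of $D_2$ is the output of $D_1$, whose elements are exactly those of $D_1$'s input. So $G'_2$ being a key of $D_2$'s input is the same data-level fact as $G'_2$ being a key of $D_1$'s input, and $D_1$ contributes nothing toward it.

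Keyness at design time is certified only by the grain-uniqueness invariant of the carrier (\S\ref{sec:pda:collection}) together with the inference rules. The rules change the key only through operations with a data effect: \texttt{pick}, \texttt{checkout}, and \texttt{groupby}, all of which set the grain by Table~\ref{tab:opgen}. The chain $D_2\circ D_1$ then collapses to a single declare $D_2$ applied to the prior data-producing operation. So without loss of generality a grain-changing declare has a capture or transform neighbour that supplies its precondition.

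\emph{Step 3: composites.} For each of $C\circ D$, $D\circ C$, $T\circ D$ and $D\circ T$:
\begin{itemize}
\item the data map is the composite of a non-identity data map with the identity, so it does data work (for $C$ this means filtering, which is nontrivial);
\item the type map is the composite of the declare's type change with the other operation's, so the type changes.
\end{itemize}
The composite is then a pure total function $\Coll{R}\to\Coll{R'}$ with $R'\neq R$. By Theorem~\ref{thm:opgen}(ii) its grain rule is the composite of the two rules, so it is a faithful, grain-inferring transform. Since a capture must keep the type, the composite cannot be a capture, and it is therefore a transform. It occupies the process phase, so the CTA anatomy gains no stage.

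\emph{Main obstacle.} The hardest part is Step 2, because ``cannot supply'' must be made precise. The claim is that keyness of a declared grain is not an invariant that any declare establishes. I would first try to argue via the grain lift: a declare's $\gl{D}$ is a relabelling with no data content, so by Theorem~\ref{thm:grain-hom} it cannot witness injectivity of a new projection. If that is too informal, I would state it as a design-rule lemma about when the checker accepts the precondition.
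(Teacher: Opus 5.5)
Your three-step structure matches the argument the paper packs into the proposition itself; no separate proof is given. The first half of your Step~2 is exactly its crux: $D_1$ is the identity on elements, so ``$G'_2$ is a key of $D_2$'s input'' is the same collection-level fact as ``$G'_2$ is a key of $D_1$'s input.'' Your reading of ``declares do not chain'' as ``$D_2\circ D_1$ either fails to type-check or collapses to $D_2$ on the upstream data operation'' is, if anything, more careful than the paper's phrasing.

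\textbf{The gap is in the second half of Step~2: your account of how keyness is certified.} \texttt{pick} and \texttt{checkout} do not set the grain, and Table~\ref{tab:opgen} says nothing about them. They are captures that \emph{preserve} grain and BC while establishing a coarser projection as a key (\S\ref{sec:pda:capture}). The declare's precondition is then discharged by the preceding operation's \emph{Keys} postcondition (\S\ref{sec:pda:declare}, Corollary~\ref{cor:sound-regrain}), not by the carrier's grain-uniqueness invariant plus the grain rules. Taken literally, your mechanism makes every grain-changing declare undischargeable. Grain uniqueness certifies only the \emph{current} grain as a key. If the preceding operation had already set the grain to $K$, a declare to $K$ would not change the grain. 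The worked example contradicts your reading: \texttt{checkout} keeps $\textsf{CustID}\times\textsf{FromDtm}$, and only \texttt{declare-entity} moves to $\textsf{CustID}$. Restate Step~2 through Keys rows instead. A declare, being the identity on data, can contribute at most its own declared grain or pass through keys certified upstream. So it never creates a key that some data operation had not already established.

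\textbf{Your ``main obstacle.''} The grain-lift route via Theorem~\ref{thm:grain-hom} is type-level. It cannot witness injectivity of a projection on the particular collections a pipeline produces (Remark~\ref{rem:enforced-quantifier}). Your fallback ``design-rule lemma'' is precisely the Keys-postcondition mechanism above, so use that.

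\textbf{Step~3.} Two inferences there do not follow from composing. First, a capture can be the identity on data: \texttt{full}, as in the converter pattern $\texttt{full}\circ\texttt{declare-entity}$. So ``non-identity data map'' fails for $C\circ D$. Second, a transform's type change could in principle be undone by the declare, so ``the type changes'' is not automatic. The paper's own wording glosses over the same corner. The robust version restricts attention to the neighbour that actually supplies the precondition and classifies the composite by exclusion. It contains that selecting or aggregating operation, so it is not a declare. It does not merely filter under an unchanged denotation, so it is not a capture.
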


\paragraph{Grain, lens, and key.} The three type-level properties are \emph{total}:
every collection has a grain $\grain{R}$, a behavioral class $\BC{R}$, and an entity
key $\ek{R}$---none is ever absent (a scalar aggregate such as a customer count is
its own irreducible grain, $\ek{R}=\grain{R}$, an $\textsf{IsEntity}$; a
sequential-event log has $\ek{R}=\grain{R}$ equal to the event timestamp, which
indexes the successive states of its single global entity). The grain does not pin
the BC down by itself; it offers a
\emph{finite menu} of admissible lenses, each gated by a dependent-type
constraint---the grain $\textsf{CustomerId}\times\textsf{Updated\_at}$ admits
$\textsf{IsEntity}$, or $\textsf{IsEvent}$/$\textsf{IsMultiVersion}$ on
$\textsf{Updated\_at}$ when their constraints hold---and the engineer \emph{declares}
which lens to view the data through (a \texttt{declare-$\langle$bc$\rangle$}---a
\emph{declare}, \S\ref{sec:pda:declare}---can re-lens a type even when nothing else
about it changes). The entity key is then \emph{derived} from
the grain and the chosen BC by that class's own rule, so the determination order is
grain $\to$ BC $\to$ EK; the contract records BC as \emph{declared}, EK as
\emph{derived}. Hence an operation that preserves the grain and keeps the lens
preserves BC and EK automatically---captures and applies are BC-correct for
free---and only a grain- or lens-changing transform must declare a new BC, whose
admissibility is discharged at the type level by the dependent-type constraint
gating that lens (\S\ref{sec:correctness}). The whole denotation
$(\grain{R},\ek{R},\BC{R})$ is thus fixed and checked at design time; only
\emph{domain} constraints (Layer~3) are ever data-dependent, and those are business
rules over values, not the denotation.

\subsection{Operation Contracts}
\label{sec:pda:contracts}

Every operation carries a \emph{contract}: a fixed-schema table stating its
precondition, postcondition, and the way it transforms type, grain, entity key,
behavioral class, element/entity containment, cardinality, keys, and which
algebraic laws it satisfies (Table~\ref{tab:contract-schema}). Contracts are not
documentation---they are the proof obligations that the three correctness layers
of \S\ref{sec:correctness} discharge. To keep the catalog compact, each class
states its \emph{defaults} once; an operation's contract lists only deviations.

\begin{table}[t]
\centering\small
\caption{The operation-contract schema. Every PDA operation instantiates these
rows; class defaults are stated once per class and only deviations are repeated.}
\label{tab:contract-schema}
\begin{tabular}{@{}ll@{\hspace{8pt}}L{4.3cm}@{}}
\toprule
Group & Row & Meaning \\
\midrule
Conditions & Pre / Post & input requirements; guarantees on the result \\
Type-level & Result type & how $F$ relates to $R$ ($=R$, $\subt R$, $R\tun D$, \dots{}) \\
           & Grain & $\grain{\text{Res}}$: preserved / declared / computed \\
           & BC & $\BC{\text{Res}}$: preserved / declared (the lens) \\
           & Entity key & $\ek{\text{Res}}$: preserved / derived (from grain $+$ BC) \\
Coll-level & Elements & $\text{Res}\subseteq_c c$, or new elements constructed \\
           & Entities & entity-key containment \\
           & Cardinality & bounds on $\lVert\text{Res}\rVert$ \\
           & Keys & key/superkey preservation \\
Algebraic  & Laws & idempotent, assoc., commutative, identity, \dots \\
\bottomrule
\end{tabular}
\end{table}

\subsection{Capture: Reading Data}
\label{sec:pda:capture}

Captures are endomorphisms; each behavioral class has \emph{canonical} captures
that enforce its intended read pattern (Table~\ref{tab:capture}). Reading an
\textsf{IsEvent} log is a time-window \texttt{delta}; reading the current state of
an \textsf{IsMultiVersion} table is a \texttt{checkout} at a timestamp---the
point-in-time query familiar from Kimball SCD-2 dimensions and
Data~Vault satellites; a
full scan of a versioned table is deliberately \emph{not} canonical---it must be
requested explicitly through the converter pattern below. All captures share the
defaults: result type $=R$, grain/EK/BC preserved, $\text{Res}\subseteq_c c$, and
$\lVert\text{Res}\rVert\le\lVert c\rVert$.

\begin{table}[t]
\centering\small
\caption{The capture catalog. Each capture is canonical for the behavioral
classes shown; all preserve grain and BC.}
\label{tab:capture}
\begin{tabular}{@{}lL{2.2cm}p{3.0cm}@{}}
\toprule
Operation & Canonical for & Reads \\
\midrule
\texttt{full} & \textsf{IsEntity}, \textsf{IsNMRelation} & the entire collection \\
\texttt{cdc} & \textsf{IsEntity} & rows new/changed vs.\ a before-state \\
\texttt{delta(lo,hi)} & \textsf{IsEvent} & events in a time window \\
\texttt{snapshot(t)} & \textsf{IsSnapshot} & the snapshot at time $t$ \\
\texttt{checkout(t)} & \textsf{IsMultiVersion} & latest version per entity $\le t$ \\
\texttt{asof(t)} & poly. & BC-dispatched ``state as of $t$'' \\
\texttt{filter(P)} & any & rows satisfying $P$ \\
\texttt{pick($G$,\textit{ord},\textit{dir})} & any & top-ranked row per $G$-partition \\
\bottomrule
\end{tabular}
\end{table}

\paragraph{Converter pattern.} To apply a non-canonical capture, first convert the
behavioral class with a declare (\S\ref{sec:pda:declare}). For example,
$\texttt{full}\circ\texttt{declare-entity}$ reads \emph{all} entities of a
multi-version collection---making an otherwise dangerous full read of a temporal
table explicit and auditable, rather than an accident.

\paragraph{\texttt{pick} generalizes \texttt{checkout}.}
$\texttt{pick}(G,\textit{ord},\textit{dir})$ keeps the top-ranked row per
$G$-partition---an endomorphism (the result rows are unchanged, just fewer) that
\emph{preserves} grain and BC while establishing $G$ as a key of the result.
$\texttt{checkout}$ is its \textsf{IsMultiVersion} instance (partition by entity,
rank by version time); $\texttt{pick}$ lifts the pattern to any type and ordering.
This is why it is a capture, not a transform: it selects rows rather than declaring
a new grain. Restoring a coarser grain after a \texttt{-low} fan-out is then a
$\texttt{pick}$ followed by a re-lens (\S\ref{sec:correctness:regrain})---the
selection establishes the key, the re-lens declares the grain.

\subsection{Apply: Writing Data}
\label{sec:pda:apply}

Apply operations merge incoming data \texttt{new} into a before-state
\texttt{before}. All satisfy two invariants---\emph{completeness} (every source
grain appears in the result) and \emph{idempotency} (re-applying the same source
is a no-op)---and are distinguished by a \emph{conflict-resolution rule} for
grains present in both arguments (Table~\ref{tab:apply}). The catalog reveals a
clean symmetry: \texttt{append} (first-writer-wins) and \texttt{upsert}
(last-writer-wins) differ by exactly one \texttt{delete-coll}:
\[
\begin{aligned}
\texttt{append} &= \textsf{foldr}\;\texttt{insert-coll},\\
\texttt{upsert} &= \textsf{foldr}\;(\texttt{insert-coll}\circ\texttt{delete-coll}).
\end{aligned}
\]
The multi-version applies (\texttt{commit-append}, \texttt{commit-upsert})
implement the SCD~Type~2 pattern: append a new version only when the payload
changed, and (for \texttt{commit-upsert}) close the previous version.

\paragraph{Apply is pure: no mutable state, no state logic.} Although an apply
\emph{writes}, it is a pure function $\texttt{apply}:A\to A\to A$ of an
\emph{immutable} before-state and source, returning a \emph{fresh} collection
$r=\texttt{apply}\;\texttt{before}\;\texttt{new}$; nothing is updated in place. Its
stateful condition $\textsf{Intra}$ (Thm.~\ref{thm:pct}) is therefore a plain
predicate relating the two values in hand---for a no-salary-cut policy on an
\textsf{IsEntity} target,
\[
\begin{aligned}
\textsf{Intra}(\texttt{before}, r)\;\equiv\;
&\forall e\in r,\,e_0\in\texttt{before}.\;\textit{key}(e)=\textit{key}(e_0)\\
&\Rightarrow\; e.\textsf{salary}\ge e_0.\textsf{salary},
\end{aligned}
\]
discharged by comparing the result $r$ against \texttt{before} directly. Because the
before-state is an explicit \emph{value} rather than a mutated cell, the aliasing and
interference that motivate state logics (separation logic, frame rules) cannot
arise: referential transparency gives non-interference for free, and predicates over
distinct collections compose with $\wedge$, not a separating conjunction. This is why
the same Hoare-style, refinement/dependent-typed discipline (\S\ref{sec:related})
covers \emph{every} morphism class, writes included, with no state-logic caveat.

\begin{table}[t]
\centering\small
\caption{The apply catalog. Conflict resolution distinguishes the operations;
all preserve grain and BC and are idempotent.}
\label{tab:apply}
\begin{tabular}{@{}lll@{}}
\toprule
Operation & Canonical for & Conflict resolution \\
\midrule
\texttt{append} & \textsf{IsEvent} & first-writer-wins \\
\texttt{upsert} & \textsf{IsEntity} & last-writer-wins \\
\texttt{replace} & \textsf{IsEntity} & stateless (discard before) \\
\texttt{merge-delete} & \textsf{IsEntity} & symmetric difference \\
\texttt{commit-append} & \textsf{IsMultiVersion} & append if payload changed \\
\texttt{commit-upsert} & \textsf{IsMultiVersion} & SCD2: append + close prior \\
\bottomrule
\end{tabular}
\end{table}

\subsection{Transform: Reshaping Data}
\label{sec:pda:transform}

Transforms are the algebra's expressive core. They subdivide into families
(Table~\ref{tab:transform}), of which the joins are the most consequential for
correctness. Each transform is a pure, \texttt{foldr}-derived function with a
denotational spec; since the value-level meaning is classical relational algebra,
we summarize the catalog by what composition and correctness turn on---its effect
on grain, behavioral class, and type.

\paragraph{Set operations.} \texttt{union}, \texttt{diff}, \texttt{intersect},
\texttt{sym-diff} (and \texttt{id}) are grain-based, same-type, same-BC, and
grain-preserving; under an agreement precondition they form a Boolean algebra.

\paragraph{Joins.} Every join assigns the two collections distinct roles---a
\emph{primary} $R_1$ (the one flowing through the pipeline) and a \emph{supplementary}
$R_2$---and its result grain is fixed at the type level by the join grain-inference
theorems of~\cite{graintheory-pods} (principally the equi-join grain theorem,
Thm.~37). The grain-ordered equi-joins vary along two axes. \emph{Row preservation}
separates \texttt{lookup}, which keeps a primary row only when it matches (inner),
from \texttt{augment}, which keeps every primary row and leaves the supplementary
fields empty when it does not (left-outer). \emph{Grain direction} is set by the
ordering: the default forms require $R_1 \leg R_2$---the primary is the finer
side---whose witness \emph{is} the join function, so no key is named and each primary
row has \emph{at most one} match; the result then preserves the primary grain,
$\grain{\textit{Res}}=\grain{R_1}$---no fan-out, no fan trap, by construction. The
\texttt{-low} variants invert the ordering ($R_2 \leg R_1$): the primary descends to
the finer supplementary grain, $\grain{\textit{Res}}=\grain{R_2}$, a deliberate
fan-out the engineer declares by choosing the variant. The free equi-joins
\texttt{lookup-on}/\texttt{augment-on}---the explicit-key counterparts of
\texttt{lookup}/\texttt{augment}, with the \texttt{on}\,$jk_1\,jk_2$ clause as
sugar---and \texttt{natural-join} (keyed on shared columns) take the join key
directly and need no grain ordering.
\texttt{semi-join} and \texttt{anti-join} keep the primary rows that respectively do
and do not match; each returns a sub-collection of the primary, so
$\grain{\textit{Res}}=\grain{R_1}$. A \texttt{theta-join}, finally, matches on an
arbitrary predicate---no key relates the inputs---so the result takes the Cartesian
grain $\grain{R_1}\times\grain{R_2}$.

\paragraph{Selections.} The value selections are grain-preserving projections on the
data: \texttt{sel-safe} drops payload while keeping all grain columns (grain and BC
preserved), and \texttt{sel-grain}/\texttt{sel-ek} project to the bare
grain/entity-key \emph{values}. All are genuine projections and always well-defined.
The grain- and BC-\emph{changing} re-lenses---\texttt{declare-entity},
\texttt{declare-event}, \dots---are a separate class: they touch no data and are
treated in \S\ref{sec:pda:declare}.

\paragraph{Enrichment, grouping, and structure.} \texttt{enrich} extends each
element with computed or window-derived columns (grain-preserving);
\texttt{groupby-*} aggregates, declaring the grouping columns as the new grain and
BC; and \texttt{explode} unnests a nested field, inheriting the nested type's grain
under a grain-disjointness precondition. (Row selection per partition---\texttt{pick}---is
a capture, \S\ref{sec:pda:capture}, not a transform.) Together with the captures,
declares (\S\ref{sec:pda:declare}), and applies, these operations form the complete
vocabulary that pipeline designs are written in; every one of them comes with the
contract that the next sections exploit.

\begin{table*}[t]
\centering\footnotesize
\caption{The transform catalog: result grain $\grain{\text{Res}}$ and behavioral
class $\BC{\text{Res}}$ for each operation (``pres.''~$=$ preserved from the input).
The \texttt{groupby-$\langle$bc$\rangle$} family (like the declares of
\S\ref{sec:pda:declare}) has one operation per behavioral class; the
declared class fixes the result grain---\textsf{IsEntity}: $\ek{R}$;
\textsf{IsEvent}: $\ek{R}\times\textit{EventDtm}$; \textsf{IsMultiVersion}:
$\ek{R}\times\textit{FromDtm}$; \textsf{IsSeqEvent}: $\textit{EventDtm}$;
\textsf{IsSnapshot}: $\textit{SnapshotDtm}$; \textsf{IsOrdered}:
$\ek{R}\times\textit{OrdType}$; \textsf{IsSeqOrdered}: $\textit{OrdType}$;
\textsf{IsNMRelation}: $G_1\times G_2$.}
\label{tab:transform}
\begin{tabular}{@{}p{3.3cm}p{3.3cm}p{1.7cm}p{7.7cm}@{}}
\toprule
Operation & $\grain{\text{Res}}$ & $\BC{\text{Res}}$ & Description \\
\midrule
\multicolumn{4}{@{}l}{\emph{Set operations} (same type)}\\
\texttt{id} & $\grain{R}$ & pres. & identity \\
\texttt{union}, \texttt{diff}, \texttt{intersect}, \texttt{sym-diff} & $\grain{R}$ & pres. & grain-based Boolean algebra \\
\addlinespace
\multicolumn{4}{@{}l}{\emph{Joins} (primary $R_1$, supplementary $R_2$)}\\
\texttt{lookup} & $\grain{R_1}$ & $\BC{R_1}$ & grain-safe inner join; $R_1\leg R_2$ forces $J_k=\grain{R_2}$ (at most one match); unmatched primary rows dropped \\
\texttt{augment} & $\grain{R_1}$ & $\BC{R_1}$ & grain-safe left-outer join; every primary row kept \\
\texttt{lookup-low}, \texttt{augment-low} & $\grain{R_2}$ & $\BC{R_2}$ & inverted ordering ($R_2\leg R_1$); primary descends to the finer grain \\
\texttt{semi-join}, \texttt{anti-join} (incl.\ \texttt{-on}) & $\grain{R_1}$ & $\BC{R_1}$ & keep primary rows with / without a match ($\text{Res}\subseteq_c c_1$) \\
\texttt{lookup-on}, \texttt{augment-on}, \texttt{natural-join} & $\grain{R_1}\tun(\grain{R_2}\tdiff J_k)$ & follows grain & explicit-key equi-join (Thm.~37), no ordering required; equals $\grain{R_1}$ iff $J_k$ covers $\grain{R_2}$. \texttt{natural-join}: $J_k=R_1\tin R_2$ \\
\texttt{theta-join} & $\grain{R_1}\times\grain{R_2}$ & follows grain & arbitrary predicate $\theta$; Cartesian grain \\
\addlinespace
\multicolumn{4}{@{}l}{\emph{Selections}}\\
\texttt{sel-safe} & $\grain{R}$ & pres. & drop payload, keep all grain columns \\
\texttt{sel-grain} & $\grain{R}$ & pres. & project to grain values \\
\texttt{sel-ek} & $\ek{R}$ & \textsf{IsEntity} & project to entity-key values \\
\addlinespace
\multicolumn{4}{@{}l}{\emph{Enrichment} (grain-preserving)}\\
\texttt{enrich}, \texttt{enrich-metadata}, \texttt{enrich-window} & $\grain{R}$ & pres. & extend each row with computed, metadata, or window-derived fields \\
\addlinespace
\multicolumn{4}{@{}l}{\emph{Grouping and aggregation}}\\
\texttt{agg} & $\grain{A}$ & \textsf{IsEntity} & scalar aggregate: whole collection $\to$ one row \\
\texttt{groupby-$\langle$bc$\rangle$} & declared & $\langle$bc$\rangle$ & partition by the grouping key (the new grain) and aggregate \\
\addlinespace
\multicolumn{4}{@{}l}{\emph{Structural}}\\
\texttt{explode} & $\grain{S}$ & $\BC{S}$ & unnest a nested-collection field; grows cardinality (sub-collections grain-disjoint) \\
\bottomrule
\end{tabular}
\end{table*}

\subsection{Declare: Re-declaring the Denotation}
\label{sec:pda:declare}

A \texttt{declare-$\langle$bc$\rangle$} changes only a collection's
\emph{denotation}---its grain, behavioral class, and derived entity key---and
performs \emph{no} operation on the data (Table~\ref{tab:declare}). It is the one
operation family with no relational-algebra counterpart: where a transform reshapes
tuples, a declare re-interprets the same rows under a new grain. Each takes the
projections naming the new grain components and is sound under a single
\emph{precondition}---the declared grain is a key of the input---discharged at design
time by the preceding operation's \emph{Keys} postcondition and checked by the type
checker; nothing is read (\S\ref{sec:correctness:regrain}). This is the operational
core of denotational design: a grain is \emph{declared and verified}, not computed.

\begin{table}[t]
\centering\small
\caption{The declare catalog: one \emph{re-lens} per behavioral class. Each declares
the grain and BC shown (and derives the entity key) with no data operation, under the
precondition that the declared grain is a key of the input
(\S\ref{sec:correctness:regrain}).}
\label{tab:declare}
\begin{tabular}{@{}lll@{}}
\toprule
Operation & Declares grain & Declares BC \\
\midrule
\texttt{declare-entity}       & $\ek{R}$                        & \textsf{IsEntity} \\
\texttt{declare-event}        & $\ek{R}\times\textit{EventDtm}$ & \textsf{IsEvent} \\
\texttt{declare-multiversion} & $\ek{R}\times\textit{FromDtm}$  & \textsf{IsMultiVersion} \\
\texttt{declare-seq-event}    & $\textit{EventDtm}$             & \textsf{IsSeqEvent} \\
\texttt{declare-snapshot}     & $\textit{SnapshotDtm}$          & \textsf{IsSnapshot} \\
\texttt{declare-ordered}      & $\ek{R}\times\textit{OrdType}$  & \textsf{IsOrdered} \\
\texttt{declare-seq-ordered}  & $\textit{OrdType}$              & \textsf{IsSeqOrdered} \\
\texttt{declare-nm-relation}  & $G_1\times G_2$                 & \textsf{IsNMRelation} \\
\bottomrule
\end{tabular}
\end{table}

\noindent The precondition is the declare's only contract with the rest of the
pipeline, and it is deliberately \emph{indifferent to how the key was
established}---this is the strength of the approach. Whether the preceding step is a
\texttt{checkout}, a \texttt{pick}, a \texttt{groupby}, or an arbitrarily long
sub-pipeline, all the declare requires is that step's \emph{Keys} postcondition---that
the target grain is a key of its output; the declare neither knows nor cares what
produced that guarantee. Only when no such guarantee holds must the collapse first be
\emph{defined}, by any operation whose contract establishes the key
(e.g.\ \texttt{checkout}, \texttt{pick}, \texttt{groupby}, or bespoke), before the
re-lens applies.

\paragraph{Worked example: a versioned dimension to its current state.} The four
classes compose in one line---using pipe-forward $\mathbin{|>}$ and forward
composition $\seqc$ (Table~\ref{tab:pipeline-operators}, \S\ref{sec:synthesis}). Let
$\textsf{src}:\Coll{\textsf{CustVer}}$ be an SCD2 customer dimension---an
$\textsf{IsMultiVersion}$ type with denotation
$(\textsf{CustID}\times\textsf{FromDtm},\ \textsf{CustID},\ \textsf{IsMultiVersion})$.
To read each customer's current state as of time $t$:
\[
\begin{aligned}
&\textsf{src}\ \mathbin{|>}\ \texttt{checkout}\,t\ \seqc\
\texttt{declare-entity}\,\textsf{cust\_id}\\
&\qquad{}\seqc\ \texttt{sel-safe}\,\textsf{payload}.
\end{aligned}
\]
$\texttt{checkout}\,t$ (a capture, Table~\ref{tab:capture}) keeps the latest version
per entity $\le t$, preserving the grain but \emph{establishing $\textsf{CustID}$ as a
key} of its result. $\texttt{declare-entity}\,\textsf{cust\_id}$ (a declare,
Table~\ref{tab:declare}) re-lenses to
$(\textsf{CustID},\textsf{CustID},\textsf{IsEntity})$---its precondition, that
$\textsf{CustID}$ is a key, discharged by checkout's postcondition at the type level,
with no data touched. Finally $\texttt{sel-safe}\,\textsf{payload}$ (a transform,
Table~\ref{tab:transform}) drops the now-redundant $\textsf{FromDtm}$---a safe payload
reduction, since the declared grain $\textsf{CustID}$ is retained. The result type is
$\Coll{\textsf{CustVer}\tdiff\textsf{FromDtm}}$ with denotation
$(\textsf{CustID},\textsf{CustID},\textsf{IsEntity})$: one current-state row per
customer, an entity. Every step's denotation---and the capture$\to$declare
hand-off---is settled at design time, before any data is read.

% =============================================================================
% 5. SYNTHESIS: COMPOSING CORRECT PIPELINES   (budget ~1.75 pp)  [CORE]
% Drafted from PDD.md §II.3 (lines 2694–3645). First full draft.
% =============================================================================

\section{Synthesizing Pipelines by Composition}
\label{sec:synthesis}

\S\ref{sec:pda} gave the operations; this section composes them into pipelines.
This is the \emph{synthesis} side of the verification--synthesis duality: where
CalcG (\S\ref{sec:correctness}) \emph{verifies} that a given DAG is grain-correct,
the operators here \emph{construct} a pipeline as a single well-typed function
$\textsf{Source}\to\textsf{Target}$ whose correctness is guaranteed by the way it
was built.

\subsection{The Pipeline Type and the CTA Constructor}

A pipeline is a function from a before-state and a source to a new target state:
\[
\textsf{Pipeline}\;\textsf{Source}\;\textsf{Target}
  = \textsf{Target} \to \textsf{Source} \to \textsf{Target},
\]
where $\textsf{Source}$ and $\textsf{Target}$ are arbitrary types. The CTA
constructor instantiates them at collections ($\Coll{R}$); keeping the type
polymorphic lets the pipeline product $\parc$ range over a product
$\textsf{S}_1\times\textsf{S}_2$---a pair of collections, not a collection of pairs.
The before-state enables stateful applies (\texttt{upsert}, \texttt{append}); a
stateless pipeline simply passes $\emptyset_{coll}$, written $[\,\textsf{pip}\,]_{SL}$.
The canonical way to build a pipeline is the \emph{CTA constructor}, which wires a
capture, a transform, and an apply into the data flow
\[
\begin{aligned}
&\cta{\text{capture}}{\text{transform}}{\text{apply}} = \lambda\,\textsf{trg}\,\textsf{src}.\\
&\quad \textsf{src} \mathbin{|>} \text{capture} \seqc \text{transform} \seqc \text{apply}\;\textsf{trg},
\end{aligned}
\]
where $|>$ is pipe-forward ($x \mathbin{|>} f = f\,x$) and $\seqc$ is forward
composition ($(f\seqc g)\,x = g(f\,x)$). Read left to right, the source is
captured, transformed, then applied to the target---exactly the engineer's mental
model of a data flow.

\subsection{Composition Operators}

A handful of operators express any pipeline topology
(Table~\ref{tab:pipeline-operators}). Forward composition $\seqc$ chains the
operations inside a CTA and, once before-states are supplied, chains whole pipelines
(a stateless medallion Landing\,$\to$\,Integration\,$\to$\,Serving is
$[\,p_1\,]_{SL} \seqc [\,p_2\,]_{SL} \seqc p_3$). Two operators combine independent
pipelines: the pipeline product $\parc$ splits a product source across two branches
(parallel fan-out), while the pipeline fork $\forkc$ feeds \emph{one} source to both
branches and pairs their targets---the way a single pattern maintains two collections
at once, e.g.\ a current-state table and its history. Pipeline sequencing $\seqp$
chains sub-pipelines by materializing each prefix statelessly while the last stage
stays stateful; it is the idiomatic way a composite pattern wires its stages. Finally,
$\textsf{flattenp}$ restructures the nested tuples that products create. The three
topology-forming operators already suffice: \emph{any} DAG---sequential, parallel, or
arbitrary fan-in/fan-out---is a combination of $\seqc$, $\parc$, and
$\textsf{flattenp}$ ($\forkc$ and $\seqp$ are definable from them), reducing a whole
pipeline to one composable function. The operators obey the expected laws---$\seqc$
is associative with identity \texttt{id}, and $\parc$ distributes over $\seqc$---so
designs can be refactored and optimized with machine-checkable equivalences.

\begin{table}[t]
\centering\small
\caption{Pipeline operators. Any DAG topology is expressible with $\seqc$,
$\parc$, and $\textsf{flattenp}$; the fork $\forkc$ and pipeline-sequencing
$\seqp$ are derived conveniences for dual-target and multi-stage patterns.}
\label{tab:pipeline-operators}
\begin{tabular}{@{}lp{5.9cm}@{}}
\toprule
Operator & Meaning \\
\midrule
$x \mathbin{|>} f$ & pipe-forward: $f\,x$ \\
$f \seqc g$ & forward composition $g\circ f$: chains operations, and
before-state-applied pipelines \\
$[\,\textsf{pip}\,]_{SL}$ & stateless projection: $\textsf{pip}\;\emptyset_{coll}$ \\
$\textsf{pip}_1 \parc \textsf{pip}_2$ & pipeline product: split a product source
across two branches \\
$\textsf{pip}_1 \forkc \textsf{pip}_2$ & pipeline fork: feed one source to both
branches, pair the two targets \\
$\textsf{pip}_1 \seqp \textsf{pip}_2$ & pipeline sequencing: prefix stateless,
last stage stateful \\
$\textsf{flattenp}$ & flatten nested product tuples \\
$f \mathbin{>\!>\!>} g$ & proof-carrying composition (\S\ref{sec:correctness}) \\
\bottomrule
\end{tabular}
\end{table}

\subsection{Well-Typed Composition is Grain-Correct}
\label{sec:synthesis:thm}

The payoff of building pipelines this way is that grain correctness is not checked
after the fact---it is a property of well-typedness. Each operation's contract
fixes how it transforms grain (\S\ref{sec:pda:contracts}), and CalcG composes
these rules functionally (\S\ref{sec:correctness}). Hence:

\begin{theorem}[Compositional grain correctness]
\label{thm:synthesis}
For pipelines $\textsf{pip}_1,\textsf{pip}_2$ and the operators of
Table~\ref{tab:pipeline-operators},
$\calcg[\textsf{pip}_1 \seqc \textsf{pip}_2] = \calcg[\textsf{pip}_2]\circ\calcg[\textsf{pip}_1]$
and
$\calcg[\textsf{pip}_1 \parc \textsf{pip}_2] = \calcg[\textsf{pip}_1]\times\calcg[\textsf{pip}_2]$.
Consequently, if every stage's contract is satisfied, the composed pipeline's
output grain is determined---and a design whose computed grain equals the declared
target grain is grain-correct by construction (Thm.~\ref{thm:calcg}).
\end{theorem}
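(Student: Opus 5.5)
The plan is to prove both equations by structural induction on how the pipelines are built from the operators of Table~\ref{tab:pipeline-operators}. The base case is a single CTA stage, where $\calcg$ is the composite of the per-operation rules. The inductive step for $\seqc$ is Theorem~\ref{thm:opgen}(ii), lifted from operations to pipelines; the inductive step for $\parc$ uses the componentwise behaviour of grain on products. The idea is to treat $\calcg[\cdot]$ as the grain component $\gl{\cdot}$ of the pipeline's denotation, so that compositionality is exactly Theorem~\ref{thm:grain-hom}.

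\textbf{Sequential case.} I would first fix the meaning of $\calcg$ on a pipeline. Once its before-state is supplied (as $\seqc$ requires), a pipeline is a function $\textsf{Source}\to\textsf{Target}$. By faithfulness, $\sem{\textsf{pip}\,c}=\calcg[\textsf{pip}]\,\sem{c}$. For a CTA stage $\cta{C}{T}{A}$, the capture and apply preserve grain: their contract defaults (\S\ref{sec:pda:capture}, \S\ref{sec:pda:apply}) give identity grain rules. Hence $\calcg[\cta{C}{T}{A}]=\calcg[T]$. A declare inside $T$ folds into a transform by Proposition~\ref{prop:declare-composes}, so it needs no separate case. For $\textsf{pip}_1\seqc\textsf{pip}_2$, unfolding $(f\seqc g)\,x=g(f\,x)$ and applying faithfulness twice gives
$\sem{(\textsf{pip}_1\seqc\textsf{pip}_2)\,c}=\calcg[\textsf{pip}_2]\,(\calcg[\textsf{pip}_1]\,\sem{c})$.
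This is the required identity, and it is equivalently the compositionality clause $\gl{h_2\circ h_1}=\gl{h_2}\circ\gl{h_1}$ of Theorem~\ref{thm:grain-hom}. For stateless prefixes $[\,\textsf{pip}\,]_{SL}$, the same argument applies after instantiating the before-state to $\emptyset_{coll}$. Since $\emptyset_{coll}$ does not affect the type-level grain, $\calcg[[\,\textsf{pip}\,]_{SL}]=\calcg[\textsf{pip}]$.

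\textbf{Product case.} By definition, $\textsf{pip}_1\parc\textsf{pip}_2$ acts componentwise on a pair source $\textsf{S}_1\times\textsf{S}_2$ and yields a pair target. Its grain is therefore computed with the distributive law $\grain{R_1\times R_2}=\grain{R_1}\times\grain{R_2}$ from \S\ref{sec:bg:grain}. This law is applied to the pair of collections (a product of carriers), not to a collection of pairs, as the paper stresses. The grain projection of a product is then the product of the grain projections. Writing $h=h_1\times h_2$, we get $\gl{h_1\times h_2}=\gl{h_1}\times\gl{h_2}$ directly from the definition of $\gl{\cdot}$, since $(\grainproj{}\times\grainproj{})\circ(h_1\times h_2)\circ(\grainproj{}\times\grainproj{})^{-1}$ factors componentwise. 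The derived operators $\forkc$, $\seqp$, and $\textsf{flattenp}$ need no separate treatment: they are definable from $\seqc$ and $\parc$ (plus the diagonal and reassociation isomorphisms, whose grain rules are the corresponding type isomorphisms), so induction covers them.

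\textbf{Consequence and main obstacle.} The consequence follows by induction on the DAG expression. If each stage's contract (precondition) holds, each stage's rule is faithful, so $\calcg[p]$ is a composite of determined field-set maps, and the output grain is determined. Equality with the declared target grain is then exactly the CalcG acceptance condition of Thm.~\ref{thm:calcg}. The step I expect to be the main obstacle is justifying that $\calcg$ is \emph{well-defined} on stateful pipelines, whose type $\textsf{Target}\to\textsf{Source}\to\textsf{Target}$ is binary, not unary. One must argue that the target grain does not depend on the before-state. My approach would rely on the apply contract, which says applies are grain-preserving and type-preserving on $\textsf{Target}$. Under that contract, the grain component is a function of the source alone, and the unary treatment above is justified.
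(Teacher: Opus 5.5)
Your proposal is correct and takes essentially the paper's route. The $\seqc$ law is CalcG's functional composition of the per-operation contract rules, grounded in the compositionality of the grain homomorphism (Thm.~\ref{thm:grain-hom}, as threaded in Thm.~\ref{thm:opgen}(ii)), and the consequence is read off Thm.~\ref{thm:calcg}. One small point on the $\parc$ case: the law $\grain{R_1\times R_2}=\grain{R_1}\times\grain{R_2}$ concerns product \emph{element} types, i.e.\ a collection of pairs, so it is not quite the right justification for a \emph{pair} of collections; the equation already follows from $\parc$ acting componentwise, together with your componentwise factoring of $\gl{h_1\times h_2}$.
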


\noindent Because composition of grain rules mirrors composition of operations, a
designer never reasons about the whole pipeline at once: local well-typedness at
each stage yields global grain correctness. This is the precise sense in which PDD
is to data pipelines what type-safe query construction is to SQL.

\subsection{Pipeline Patterns}
\label{sec:synthesis:patterns}

Most production pipelines are instances of a few recurring shapes. Since a pipeline
\emph{denotes a function} (\S\ref{sec:overview},\,\S\ref{sec:synthesis}), a
\emph{pattern} denotes a \emph{parametrically polymorphic} function---a map from data
semantics ($\grain{\cdot}$, $\ek{\cdot}$, $\BC{\cdot}$ of its sources and target)
to a concrete CTA pipeline---whose operations are fixed but whose types, grains,
and arities vary. Patterns consume their semantic constraints as \emph{evidence}:
\texttt{append} consumes \textsf{IsEvent} evidence, \texttt{lookup} consumes
grain-ordering evidence. Variable arity (e.g.\ $n$ providers, an FK chain of depth
$k$) is handled by the \texttt{iter} combinator,
$\texttt{iter}\;\textsf{op}\;[c_1,\dots,c_n] = \textsf{op}\,c_1 \seqc \cdots \seqc \textsf{op}\,c_n$,
over which CalcG distributes. A pattern is \emph{correct} when every valid
instantiation is correct---verified once, symbolically, on the parametric types,
then reused for free at every instantiation. PDA ships three named patterns:

\begin{itemize}[leftmargin=1.2em,itemsep=2pt]
\item \textbf{Ingestion} ($\text{external source}\to\text{staging}$): a full or
incremental extract. The extract mode fixes the apply---\texttt{replace} (full
refresh) or \texttt{append} (incremental)---and, critically, may change the
staging collection's behavioral class.
\item \textbf{Staging-to-Landing} ($\text{staging}\to\text{persistent landing}$): promotes
the staged data to a \emph{dual target}---a current-state collection and a history
collection---written together with the fork $\forkc$. The staging BC fixes each
target's BC and apply: an entity source, for instance, yields an \textsf{IsEntity}
current state maintained by \texttt{upsert} and an \textsf{IsEvent} history maintained
by \texttt{append}, with the cross-target invariant that the latest point-in-time of
the history equals the current state.
\item \textbf{Augmentation} ($\text{Landing}\to\text{Integration}\to\text{Serving}$): the
medallion workhorse, and itself the composition---via $\seqp$---of \emph{four}
pre-verified sub-patterns,
{\small\[
\begin{aligned}
\textsf{augmentation} = {}&\textsf{delta-unif.}\seqp\textsf{delta-augment}\\
  &\seqp\textsf{stage-prep}\seqp\textsf{target-load},
\end{aligned}
\]}%
each of which is again a CTA. \emph{Delta-unif.}\ builds a target-grain \emph{driver}
delta---the target records to (re)process this run---by navigating every provider to
the target entity key according to its kind: \emph{direct} providers carry the full
target EK (projected with \texttt{sel-ek}), \emph{semi-direct} providers carry the
target entity but lack some EK columns (completed row-preservingly via
\texttt{augment}/\texttt{enrich}), and \emph{indirect} providers do not carry the
target entity (navigated via \texttt{lookup}, dropping rows that fail), then unifies
them with \texttt{iter union}. \emph{Delta-augment} adds the business content, joining
in the augmentation collections with \texttt{iter augment}. \emph{Stage-prep} enriches
and re-lenses the result to the target's behavioral class
(\texttt{declare-$\langle$bc$\rangle$}); and \emph{target-load} writes it with a
BC-dispatched apply. Each sub-pattern is verified in isolation, so Augmentation is
grain-correct \emph{by construction from its parts}.
\end{itemize}

\noindent Patterns compose exactly as pipelines do---by $\seqp$---and
\emph{the composite of patterns is again a pattern}: a CTA whose capture is the first
sub-pattern's capture, whose apply is the last's, and whose transform is everything
between. The named vocabulary is therefore \emph{closed under composition}
(Augmentation is one instance of that closure), giving a small, pre-verified set of
pipeline shapes from which real pipelines are assembled and, by
Theorem~\ref{thm:synthesis}, inherit grain correctness automatically.

% =============================================================================
% 6. THREE-LAYER CORRECTNESS   (budget ~1.25 pp)
% Drafted from PDD.md §II.5 (lines 3740–4105). First full draft.
% =============================================================================

\section{Three-Layer Correctness}
\label{sec:correctness}

PDD establishes pipeline correctness in three layers (Table~\ref{tab:three-layer}),
\emph{all at design time, at zero cost}---with no access to data. This is the central
claim of the paper, and it is a theorem: a pipeline's correctness is a property of its
\emph{code}, settled before any data flows (Pipeline Correctness, Thm~\ref{thm:pct}),
under the design's assumptions about its inputs. A data engineer controls the pipeline
he builds, not the data a provider sends through it; so the one obligation that
genuinely depends on data---whether the inputs meet those assumed preconditions---is
\emph{data quality}, not code correctness, and it leaves the zero-cost correctness of
the code untouched. The method discharges that obligation as auto-generated validation
queries at the input boundary every typed system keeps against untyped external data.
Code bugs are eliminated by construction, at zero cost; data bugs are caught at that
boundary.

\begin{table}[t]
\centering\small
\caption{All three correctness layers are established \emph{at design time, at zero
cost}---with no access to data. Only data-dependent constraints add a runtime check,
and it validates the \emph{inputs} (data quality), not the code.}
\label{tab:three-layer}
\begin{tabular}{@{}p{2.0cm}p{3.4cm}p{1.8cm}@{}}
\toprule
Layer & Establishes (design time, zero cost) & Runtime input check \\
\midrule
Grain & output grain $=$ target (CalcG) & --- \\
Behavioral class & operations respect their BC (typing) & --- \\
Domain (contract-derivable) & $\subseteq_c$, $\eqg$, cardinality bounds (contracts) & --- \\
Domain (data-dependent) & FK, ranges, business rules, given input preconditions & generated query \\
\bottomrule
\end{tabular}
\end{table}

\subsection{Layer 1: Grain Correctness via CalcG (Zero Cost)}
\label{sec:correctness:calcg}

The grain inference rules of Part~I~\cite{graintheory-pods} assign every PDA
operation a grain transformation that involves only finite set operations
($\tun,\tin,\tdiff,\subt$) on field sets. The grain-computation operator
$\calcg[\textsf{op}](G_{\text{in}})$ applies these rules; for example captures,
applies, set operations, \texttt{enrich}, and \texttt{sel-safe} preserve the input
grain, \texttt{groupby} sets the grain to the grouping columns, and an equi-join
yields $\textsf{JoinGrain}(G_1,G_2,Jk)=G_1\tun(G_2\tdiff Jk)$ (Thm.~37). CalcG
composes functionally over pipelines, $\calcg[\textsf{op}_1\seqc\textsf{op}_2] =
\calcg[\textsf{op}_2]\circ\calcg[\textsf{op}_1]$, which is what makes
Theorem~\ref{thm:synthesis} hold.

A pipeline is \emph{grain-correct} when the data it \emph{produces} lands at
exactly the intended level of detail: for every source, the grain of its output
collection equals the target's declared grain $\grain{\textsf{Target}}$. This is a
semantic property of the pipeline's output---the one a fan trap or a missing
aggregation violates (\S\ref{sec:intro})---yet, by the grain homomorphism
(\S\ref{sec:background}), it is decidable \emph{without running the pipeline}.

\begin{theorem}[Data-independent grain correctness~\cite{graintheory-pods}]
\label{thm:calcg}
A pipeline $\textsf{pip}:\textsf{Pipeline}\,(\Coll{\textsf{Source}})\,(\Coll{\textsf{Target}})$
is grain-correct iff
$\calcg[\textsf{pip}](\grain{\textsf{Source}}) = \grain{\textsf{Target}}$.
The check operates entirely on the type-level schema and its declared
determinations---foreign keys and declared grains, which are \emph{metadata, not
data}---so it needs no data access; CalcG is total and decidable, running in
$O(\lvert V\rvert\cdot k)$ time on a DAG of $\lvert V\rvert$ operations with
fan-in $k$.
\end{theorem}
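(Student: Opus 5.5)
The plan is to reduce the statement to the grain homomorphism (Theorem~\ref{thm:grain-hom}) together with the compositional threading already established in PDD Universality (Theorem~\ref{thm:opgen}) and Compositional grain correctness (Theorem~\ref{thm:synthesis}). The proof has three parts: the ``iff'', data-independence, and totality/decidability with the complexity bound.

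\emph{The equivalence.} Grain-correctness is the semantic statement that for every source collection $c$ the output $\textsf{pip}\,\textsf{trg}\,c$ has grain $\grain{\textsf{Target}}$. First, induct on the structure of $\textsf{pip}$ as a term built from PDA operations with $\seqc$, $\parc$ and $\textsf{flattenp}$. Each operation's CalcG rule is faithful, in the sense that its contract's grain row is exactly the grain inference theorem of Part~I (Thms.~36--42). The inductive step uses the composition clauses of Theorem~\ref{thm:synthesis}. This yields $\sem{\textsf{pip}\,c}=\calcg[\textsf{pip}]\,\sem{c}$, i.e.\ clause~(i) of Theorem~\ref{thm:opgen}. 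Since $\sem{c}$ has grain component $\grain{\textsf{Source}}$ for every $c:\Coll{\textsf{Source}}$, the output grain is $\calcg[\textsf{pip}](\grain{\textsf{Source}})$ \emph{uniformly in $c$}. Both directions follow at once. If this type-level value equals $\grain{\textsf{Target}}$, every run lands at the target grain. Conversely, if every run does, then in particular the (type-level) output grain equals $\grain{\textsf{Target}}$, and grain equality is taken up to isomorphism, as in \S\ref{sec:bg:grain}.

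The stateful apply needs a separate remark: its before-state argument $\textsf{trg}$ has type $\Coll{\textsf{Target}}$. Applies preserve grain by contract, so the pipeline's output grain is independent of $\textsf{trg}$.

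\emph{Data-independence and complexity.} Every CalcG rule is a finite expression in $\tun,\tin,\tdiff,\subt$ over field sets, possibly consulting declared foreign keys or declared grains. These enter only through the grain-ordering premises of \texttt{lookup}/\texttt{augment} and the key precondition of declares. No rule takes a collection as input, so evaluation reads only metadata. For decidability, I would run CalcG in topological order over the finite DAG. At each node, apply the unary rule or the binary rule at fan-in, and reuse memoized grains at fan-out. The final step compares field sets. This is $|V|$ rule applications, each combining at most $k$ input grains, giving $O(|V|\cdot k)$ when field-set operations are counted as unit cost (with the $|F|$ factor of Theorem~\ref{thm:opgen} absorbed). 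Totality holds because every catalog operation has a defined rule, and the grain operator is total (\S\ref{sec:pda:classes}).

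\emph{The main obstacle.} The hard part is the faithfulness of the rules in the presence of \emph{declared} grains and the \texttt{-low}/free joins. A declare's rule is correct only under its key precondition, and the ordering-based joins assume the ordering actually holds in the data. I would handle this by making grain-correctness relative to satisfied contracts, matching the hypothesis ``every stage's contract is satisfied'' in Theorem~\ref{thm:synthesis}. The key preconditions are discharged at the type level by the preceding step's \emph{Keys} postcondition (\S\ref{sec:pda:declare}). Declared foreign keys are treated as metadata assumptions, whose runtime validity is a data-quality question rather than part of this theorem. Once that scoping is fixed, the rest is the routine induction above.
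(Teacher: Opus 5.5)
Your proposal is correct and follows essentially the paper's route. The paper does not reprove this theorem (it is cited from Part~I), but the argument it gives in the proof of Theorem~\ref{thm:opgen} is exactly your reduction: per-operation faithfulness chained by induction along the pipeline, the \calcg{} walk in topological order with reuse at fan-out and binary rules at fan-in, finite field-set rules read off the schema, and declares discharged through the preceding step's \emph{Keys} postcondition as in Corollary~\ref{cor:sound-regrain}. Your unit-cost reading of field-set operations is also the right way to reconcile the stated $O(\lvert V\rvert\cdot k)$ with the $O(\lvert V\rvert\cdot k\cdot\lvert F\rvert)$ bound given there.
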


\paragraph{Running example.} For the customer-order pipeline---order events
(grain $\textsf{CustomerId}\times\textsf{OrderDtm}$) joined to customers and
products, then aggregated to a per-customer summary (grain
$\textsf{CustomerId}$)---CalcG threads the grain through the chain:
\texttt{delta} preserves $\textsf{CustomerId}\times\textsf{OrderDtm}$;
\texttt{lookup} on products preserves it (grain-ordered, no fan-out);
\texttt{groupby-ek} sets it to $\textsf{CustomerId}$; \texttt{upsert} preserves it.
The computed grain $\textsf{CustomerId}$ equals the target's declared grain, so the
design is grain-correct---established before any data is read. Had the designer
used \texttt{lookup-low} (fan-out) or omitted the grouping, CalcG would report a
grain mismatch, the signature of a double-counting or row-multiplication bug.

\subsection{Layer 2: Behavioral-Class Correctness (Compile Time)}

Each operation declares its behavioral-class constraint as a type parameter, and
the type checker enforces it: a \texttt{delta} on an \textsf{IsEntity} collection,
or an \texttt{append} to an \textsf{IsEntity} target, is a type error. Non-canonical
reads must go through the explicit converter pattern (\S\ref{sec:pda:capture}). By
the BC-preservation principle, grain correctness already implies BC preservation
wherever grain is preserved, so this layer adds only the cases where BC is
deliberately changed---and those are caught at compile time, for free.

\subsection{Sound Re-Graining: From Collection Keys to Type Grain}
\label{sec:correctness:regrain}

The deliberate grain/BC changes of Layer~2 are the \emph{declares}
$\texttt{declare-$\langle$bc$\rangle$}$ of \S\ref{sec:pda:declare}: each
\emph{re-lenses} a collection, declaring a new grain $K$ and behavioral class
$\langle\textsf{bc}\rangle$. A re-lens carries no data
operation; it is sound exactly when the data already keys on $K$. Pinning down ``keys
on $K$'' needs a bridge between the collection level, where keys are observed, and the
type level, where grain lives. Throughout, $k:R\twoheadrightarrow K$ is a \emph{field-set
projection} (the double-headed arrow $\twoheadrightarrow$ denotes a surjection: $K$
retains a subset of $R$'s fields, so $k$ is onto $K$). Call such a $k$ a \emph{superkey}
of a collection $c:C\,R$ when it is injective on the elements of $c$, and a \emph{key}
when it is moreover \emph{minimal}---no proper sub-projection of $k$ is still injective.

\begin{theorem}[Universal Superkey is the Grain]
\label{thm:univ-key-grain}
For a field-set projection $k:R\twoheadrightarrow K$ (so $k$ is surjective), the
following are equivalent:
\emph{(1)} $k$ is injective on the \emph{type} $R$;
\emph{(2)} $k$ is a superkey of \emph{every} collection $c:C\,R$;
\emph{(3)} $k$ is itself \emph{grain-preserving}---a grain isomorphism $R\cong K$
\emph{witnessed by $k$}, whence $R\eqg K$. Moreover $K$ is a \emph{grain} of $R$
(irreducible, $\grain{K}=K$) iff no proper sub-projection of $k$ satisfies
\emph{(1)--(3)}---minimality is \emph{type-level}, not per-collection. (Clause~(3)
names $k$ as the witness on purpose: that $R\eqg K$ holds for \emph{some} isomorphism is
weaker---an arbitrary surjection between grain-equivalent types need not be injective; it
is $k$ \emph{itself} being the isomorphism that yields injectivity.)
\end{theorem}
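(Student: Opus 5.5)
I will prove the equivalences as a cycle (1)$\Rightarrow$(2)$\Rightarrow$(1) together with (1)$\Leftrightarrow$(3), and then treat the minimality clause separately. Throughout I use that $k$ is a field-set projection, hence total and surjective onto $K$ by hypothesis.

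\emph{(1)$\Rightarrow$(2)} is immediate. If $k$ is injective on all of $R$, it is injective on the elements of any $c:C\,R$, because a collection's elements are elements of $R$.

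\emph{(2)$\Rightarrow$(1)} is where the universal quantifier over collections does the work. Given $r_1\neq r_2$ in $R$ with $k\,r_1=k\,r_2$, I build the two-element collection $c=\texttt{insert-coll}\,r_2\,(\texttt{insert-coll}\,r_1\,\emptyset_{coll})$. Then $k$ fails to be injective on $c$, contradicting (2). The one subtlety is that \texttt{insert-coll} refuses an element whose grain is already present. I must therefore check that $r_1\neq r_2$ forces $\grainproj{R}\,r_1\neq\grainproj{R}\,r_2$. This is exactly injectivity of the grain projection (the ``guaranteed key'' paragraph of \S\ref{sec:bg:grain}), so both elements really do land in $c$. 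I expect this to be the main obstacle, since it is the only step that touches the carrier's grain-uniqueness invariant. I will state it carefully and note that it needs only the two-element instance of the collection interface.

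\emph{(1)$\Leftrightarrow$(3).} If $k$ is injective and surjective, it is a bijection $R\to K$, i.e.\ an isomorphism $R\cong K$ witnessed by $k$. By the definition of grain equivalence in \S\ref{sec:bg:relations} ($R_1\eqg R_2$ iff $R_1\cong R_2$), this gives $R\eqg K$. Conversely, if $k$ itself is the isomorphism, it is injective. I will add a short remark, mirroring the parenthetical in the statement, that merely knowing $R\eqg K$ via some other isomorphism does not make $k$ injective. This is why (3) names $k$ as the witness.

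\emph{Minimality clause.} Recall that a grain of $R$ is a type isomorphic to $R$ with no proper sub-type still isomorphic to $R$ (\S\ref{sec:bg:grain}). Suppose no proper sub-projection $k'$ of $k$, onto $K'\psub K$, satisfies (1). For the forward direction I need $K$ irreducible with respect to isomorphism to $R$ over field-set subtypes. Any proper sub-type $K'$ of $K$ isomorphic to $R$ via the restricted projection would yield such a $k'$, so $K$ is a grain. Idempotence $\grain{\grain{R}}=\grain{R}$ then gives $\grain{K}=K$. For the converse, an injective proper sub-projection $k'$ would exhibit a proper sub-type $K'$ with $R\cong K'$, contradicting irreducibility of $K$. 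Finally, since (1) quantifies over the type rather than over any particular collection, this minimality is type-level. I will observe that a per-collection minimal key can be strictly smaller, which is the contrast the statement intends.
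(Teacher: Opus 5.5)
Your proposal is correct and follows the paper's proof essentially step for step: (1)$\Rightarrow$(2) is trivial, (2)$\Rightarrow$(1) goes contrapositively through a two-element collection that is admissible because the grain projection is injective, and (1)$\Leftrightarrow$(3) holds because a surjection is injective iff it is a bijection. Minimality likewise uses the one-to-one correspondence between proper subtypes $K'\psub K$ and proper sub-projections $k'$ of $k$, and your reading of irreducibility as ``isomorphic via the restricted projection'' is the same identification the paper makes when it applies (1)$\Leftrightarrow$(3) to $k'$. The only differences are cosmetic, such as your explicit idempotence step for $\grain{K}=K$, but you should state, as the paper does, that the minimality clause presupposes $k$ already satisfies (1)--(3), since otherwise $K\cong R$ is unavailable and irreducibility alone does not make $K$ a grain.
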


\begin{proof}[Proof sketch]
(1)$\Rightarrow$(2) is immediate: injective on the type $R$ implies injective on the
elements of every collection. (2)$\Rightarrow$(1), contrapositively, turns a type-level
collision $k\,r_1=k\,r_2$ with $r_1\neq r_2$ (whose grains differ by Grain
Uniqueness~\cite{graintheory-pods}) into a two-element collection on which $k$ is no
superkey. (1)$\Leftrightarrow$(3): a surjection is injective iff it is a bijection, so
$k$ injective on $R$ coincides with $k$ being an isomorphism $R\cong K$ (clause~(3)),
whence $R\eqg K$. Minimality is the Grain Inference Theorem's irreducibility
condition~\cite{graintheory-pods}. \fullproof
\end{proof}

\begin{remark}[The quantifier is enforced, not observed]
\label{rem:enforced-quantifier}
Clause~(2) cannot be checked against data: a finite collection can always be filtered
until a proper sub-projection becomes a key (a given collection may key on less than the
grain)---a key is not inferable from any single instance, which is why relational
systems \emph{enforce} keys as declared constraints rather than infer them. Here the quantifier is discharged at the type level, never
empirically---the grain is injective on $R$ by construction (\S\ref{sec:background})---
and, for the re-graining below, only over the collections a preceding operation
actually produces, on which that operation's postcondition \emph{guarantees} the key.
The pipeline is the enforcer; the type-checker is the auditor.
\end{remark}

\begin{corollary}[Design-Time Sound Re-Grain]
\label{cor:sound-regrain}
$\texttt{declare-$\langle$bc$\rangle$}\,k$ re-declares its input at grain $K$ and class
$\langle\textsf{bc}\rangle$ under the \emph{precondition} that $k$ is a key of every
input; by Theorem~\ref{thm:univ-key-grain} this makes $K$ the grain of the re-declared
type $\textit{Res}$. Hence for \emph{any} \pda{} pipeline $t:C\,A\to C\,B$---one operation
or arbitrarily many---whose contract guarantees that $k$ is a key of $t\,c$ for every
$c$, the composition $t\seqc\texttt{declare-$\langle$bc$\rangle$}\,k$ is sound \emph{by
construction}: $t$'s postcondition discharges the precondition on every collection $t$
produces, the type-checker verifies the composition at design time, and the result
denotation is
$(\grain{}=K,\ \BC{}=\langle\textsf{bc}\rangle,\ \ek{}\text{ derived from }K)$---with no
data read.
\end{corollary}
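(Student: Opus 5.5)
The plan is to reduce the corollary to Theorem~\ref{thm:univ-key-grain} in two stages: first at the level of a single declare, then lifting to an arbitrary prefix $t$ by reading the precondition off $t$'s contract.

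\emph{Stage 1: the single declare.} The declare's precondition says that $k$ is a key of every collection the declare may receive. The first thing to settle is what ``every input'' ranges over. It should not range over all of $C\,R$, which would be far too strong (Remark~\ref{rem:enforced-quantifier}). Instead it ranges over the refined input type $\textit{Res}$: the type whose inhabitants are exactly the elements of collections satisfying the precondition. With that reading, $k$ is a superkey of every collection $c:C\,\textit{Res}$, which is clause~(2) of Theorem~\ref{thm:univ-key-grain} instantiated at $R:=\textit{Res}$. Clause~(3) then gives $\textit{Res}\cong K$ witnessed by $k$. The minimality half of ``key'' (no proper sub-projection is injective) is, by the theorem's ``moreover'' clause, exactly irreducibility, so $\grain{\textit{Res}}=K$. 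Since the declare touches no data (Proposition~\ref{prop:declare-composes}), its value-level action is the identity. Hence the only content to establish is this type-level fact.

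The main obstacle sits in Stage~1. Transporting ``key on every collection'' to ``injective on the type'' via (2)$\Rightarrow$(1) needs the two-element collection $\{r_1,r_2\}$ from the theorem's proof to be an admissible input, i.e.\ to satisfy the precondition. My first attempt is to define $\textit{Res}$ so that this holds by construction: $\textit{Res}$ is the refinement whose elements are those of $t$'s outputs, and every subcollection of a collection on which $k$ is injective is again one on which $k$ is injective, since keys are closed under subsets. If that fails for minimality, which is not closed under subsets, I would fall back to superkey-ness together with the type-level minimality supplied by the theorem's ``moreover'' clause. This is legitimate because the theorem states that minimality is type-level and not per-collection.

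\emph{Stage 2: composition.} Take any \pda{} pipeline $t:C\,A\to C\,B$ whose contract's \emph{Keys} row guarantees that $k$ is a key of $t\,c$ for all $c$. This is exactly the declare's precondition instantiated on the image of $t$, so the Hoare-style composition rule (one step's Post entails the next step's Pre) discharges it. This is the proof-carrying composition of Table~\ref{tab:pipeline-operators}, and it is independent of how many operations $t$ contains, because only its contract is consulted. Applying Stage~1 with $\textit{Res}$ taken as $B$ refined by $t$'s postcondition yields grain $K$.

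\emph{Stage 3: the remaining components.} The BC is $\langle\textsf{bc}\rangle$ by the declare's definition (Table~\ref{tab:declare}), with admissibility checked by the dependent-type gate. The EK is derived from $K$ and the chosen BC by the determination order grain$\to$BC$\to$EK of \S\ref{sec:pda:classes}. Each of these steps manipulates only contracts and schemas, so no data is read.
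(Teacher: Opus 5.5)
Your Stages~2 and~3 match the paper's argument: $t$'s \emph{Keys} postcondition discharges the declare's precondition on the image of $t$, the class is declared, and the entity key is derived. Stage~1, however, has a genuine gap. The claim that $k$ is a superkey of \emph{every} $c:C\,\textit{Res}$ fails under both of your definitions of $\textit{Res}$. Being a key is a pairwise, collection-level property, and no refinement of the \emph{element} type can capture it, because a collection of type $C\,\textit{Res}$ may mix elements drawn from \emph{different} admissible collections. Closure of keys under sub-collections of a \emph{single} collection therefore does not help.

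Concretely, let $t=\texttt{checkout}\,t_0$ on an SCD2 dimension, with two versions $r_1,r_2$ of one customer, both having $\textsf{FromDtm}\le t_0$ and $r_1$ the earlier. The input $\{r_1\}$ yields $\{r_1\}$, and the input $\{r_1,r_2\}$ yields $\{r_2\}$; both outputs are keyed by $k=\textsf{CustID}$. Yet $\{r_1,r_2\}$ is grain-unique at $\textsf{CustID}\times\textsf{FromDtm}$, hence a legitimate $C\,\textit{Res}$ under your ``elements of $t$'s outputs'' reading, and $k$ collides on it. Your first reading is worse: at the superkey level every singleton is keyed by $k$, so $\textit{Res}$ is all of $B$, and you are back at the universal quantifier that Remark~\ref{rem:enforced-quantifier} rules out. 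Hence $(2)\Rightarrow(1)$ is unavailable, $k$ is not injective on $\textit{Res}$, and clause~(3) fails. Your minimality fallback cannot rescue this, because the theorem's ``moreover'' clause is stated for a $k$ that already satisfies (1)--(3).

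The paper never passes through type-level injectivity. It reads the precondition per collection: $k$ being a key of each input $c'=t\,c$ says exactly that $c'$ is grain-unique at $K$. So $c'$ already satisfies the collection invariant of the re-lensed type $\textit{Res}$ (same schema, grain $K$ by declaration, a distinct type under semantic identity), and the re-lens is an identity on data that drops and chooses nothing. The quantifier is thus discharged only over the image of $t$, which is what Remark~\ref{rem:enforced-quantifier} insists on. Exactness of $\grain{}=K$ comes from the postcondition supplying a \emph{minimal} key, not from clause~(3). Replacing your Stage~1 with this per-collection well-formedness argument closes the gap, and the rest of your proposal then goes through. Note also that it is the precondition, not Proposition~\ref{prop:declare-composes}, that makes the ``identity on data'' land in $C\,\textit{Res}$ at all.
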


\begin{proof}[Proof sketch]
The precondition makes the input grain-unique at $K$, so by
Theorem~\ref{thm:univ-key-grain} $K$ is its grain; and since the re-lens does no data
action---dropping and choosing nothing---declaring $\grain{}=K$ is sound. $t$'s
postcondition supplies the key on every collection $t$ produces, discharging the
precondition over the image, so the composition type-checks with no premise depending on
data; the result denotation is the one $\texttt{declare-$\langle$bc$\rangle$}$ declares. \fullproof
\end{proof}

\paragraph{Why a \emph{key}, and what $\textit{Res}$ is.} Minimality is mandatory, not a
refinement: a bare superkey gives only grain-\emph{equivalence} ($R\eqg K$ with
$\grain{K}\psub K$), under which the contract's claim $\grain{\textit{Res}}=K$ would be
false. The re-lens performs \emph{no data action}---it neither aggregates nor
selects---so it drops and merges nothing, type-checking exactly when a witness
establishes the precondition. And because no data is touched, $\textit{Res}$ keeps the
input's schema ($\textit{Schema}(\textit{Res})\equiv\textit{Schema}(R)$) yet carries a
different denotation ($\sem{\textit{Res}}\neq\sem{R}$); since identity in denotational
design is \emph{semantic}---$A=B\iff\sem{A}=\sem{B}$~\cite{elliott2009denotational}---%
$\textit{Res}$ is a \emph{distinct} grain-theoretic type, the same structure re-lensed to
a new grain and class, \emph{not} an isomorphic alternative grain of $R$ (the
\textsf{CustomerEntity}-vs-\textsf{CustomerMVersion} distinction of
Part~I~\cite{graintheory-pods}).

\paragraph{The postcondition is all that matters.} Not which operation, nor how long the
pipeline: $\texttt{checkout}$ (one version per entity), $\texttt{pick}$ (one row per
partition key), or any pipeline ending in such a step all supply the key. This is exactly
the $\textsf{IsMultiVersion}\to\textsf{IsEntity}$ re-grain of \S\ref{sec:pda:declare}
($\texttt{checkout}\seqc\texttt{declare-entity}\seqc\texttt{sel-safe}$), where
$\texttt{checkout}$'s postcondition supplies the key and the $\texttt{declare-entity}$
re-lens is the type-level move the corollary justifies. When $k$ is \emph{not} a key of
the input, the re-lens does not apply---collapsing then requires \emph{choosing} a
representative, the job of $\texttt{group-by}$ (which aggregates) or $\texttt{pick}$
(which selects by a declared ordering), not of a re-lens.

\paragraph{Stricter target classes.} Beyond the grain, a stricter target class carries
its own structural invariant (e.g.\ consecutive versions differ, for
$\textsf{IsMultiVersion}$) as a further precondition on the re-lens, discharged the same
way---by a preceding operation's postcondition, at the type level.

\subsection{Layer 3: Domain Correctness}
\label{sec:correctness:domain}

The third layer covers business rules beyond grain and BC. Its constraints are not
opaque predicates: they are drawn primarily from a \emph{structured vocabulary}---the
\emph{collection relations} of Table~\ref{tab:collrel}---in three scopes: \emph{element}
($\forall e\in c.\,P\,e$), \emph{collection} (a predicate over one collection, e.g.\
grain-uniqueness), and \emph{cross-collection} (a predicate relating several, e.g.\ a
foreign key). The constraints that matter most \emph{are} these relations: a foreign key
is a subset ($\subseteq$), and uniqueness is a grain-\emph{key}
($\texttt{IsKeyOf}$---equivalently a superkey; the grain projection is always one,
\S\ref{sec:background}). Their implication hierarchies
($=_c\Rightarrow=_g\Rightarrow=_{ek}$; $\subseteq_c\Rightarrow\subseteq_g\Rightarrow
\subseteq_{ek}$) express strictness gradations a single first-order predicate cannot---the
same FK at three strengths (identical values, up to grain isomorphism, up to entity key).
Arbitrary first-order-logic predicates enter only as an \emph{escape hatch}, for business
rules that reduce to no relation. This choice is not cosmetic---it is the crux of the
layer's cost. Only a relation can be \emph{free}: a property a contract \emph{guarantees}
as a collection relation is discharged at zero cost, at design time, whereas a raw
first-order predicate is \emph{always} specification-time---it generates a boundary query
regardless. The collection relations are therefore not an alternate encoding of the same
FOL predicates; they are precisely the fragment the pipeline's \emph{own contracts}
discharge for free, so stating a constraint relationally is what draws it onto the
zero-cost, code-correctness side of the line rather than the data-quality side
(\S\ref{sec:correctness}). Domain correctness accordingly splits by \emph{cost}.

\paragraph{Contract-derivable constraints (zero cost).} Many collection-level
relations follow from operation contracts alone, by structural induction over the
pipeline---no data required. A pipeline built from captures automatically produces
a subset of its input ($\subseteq_c$, hence $\subseteq_g$ and $\subseteq_{ek}$)
with bounded cardinality; an apply's result differs from its before-state by
exactly its source; grain-type equivalence $\eqg$ follows wherever CalcG shows
grain preservation; and because the grain projection is always a key
(\S\ref{sec:background}), grain-uniqueness ($\texttt{grain}\;\texttt{IsKeyOf}\;c$) is
free as well. These extend the zero-cost guarantee beyond grain and BC at no
specification cost.

\paragraph{Data-dependent constraints (specification + generated query).} What is
left---foreign-key integrity on values, value ranges, partition counts, and
arbitrary business predicates---genuinely depends on the data. PDD specifies these
\emph{primarily through the collection relations}---a foreign key as an asserted subset
$\subseteq$, uniqueness as a grain-key $\texttt{IsKeyOf}$---and, for business rules that
reduce to no relation, as arbitrary first-order-logic predicates (the \emph{escape
hatch}). All are carried as a $(\textsf{Pre}$, $\textsf{Intra}$,
$\textsf{Post})$ triple, type-checked for well-formedness and composed through the
proof-carrying operator $\mathbin{>\!>\!>}$, which discharges each
$\textsf{Post}_n\Rightarrow\textsf{Pre}_{n+1}$ obligation at composition time
(intermediate obligations that are contract-derivable are inherited as axioms,
so the engineer proves only what is genuinely new). This discharge is itself
\emph{design-time}: the $\mathbin{>\!>\!>}$ obligations are type-level, settled
before any data flows, so the pipeline is correct \emph{by construction given its
preconditions}, and what genuinely depends on data is only whether the inputs
\emph{satisfy} the resulting leaf preconditions---a data-quality question. This full
discharge is the \emph{proof-backed} reading: collapsing every intermediate
obligation to a leaf precondition requires actually proving the
$\textsf{Post}_n\Rightarrow\textsf{Pre}_{n+1}$ entailments, which the
dependently-typed realization (PDL) does mechanically
(\S\ref{sec:correctness:spectrum}). The deployed \pddskill{} checker discharges the
structural layers but does not prove these entailments; it emits a verification query
at each node where a data-dependent constraint is declared. Either way the constraint
is not a hole in code correctness but a data-quality control point---at the input
boundary under a proof, at each declaring node otherwise.

\begin{table}[t]
\centering\small
\caption{Collection relations (the complete family). $\texttt{sel-ek}$ is the entity-key
projection, $\textit{Gr}(c)=\{\textit{grain}\,e\mid e\in c\}$ its set of grain values, and
$f:\texttt{GetG}\,c_1\xrightarrow{\sim}\texttt{GetG}\,c_2$ a grain isomorphism witnessing
$\texttt{GetG}\,c_1\eqg\texttt{GetG}\,c_2$; $f[\textit{Gr}(c_1)]=\{f(v)\mid v\in\textit{Gr}(c_1)\}$
is the image of a set under $f$. Because grain is a superkey (the grain projection is
injective; \S\ref{sec:background}), a collection is determined by $\textit{Gr}(\cdot)$,
so $=_g$ may compare grain-value sets rather than elements. $\supseteq_g$ is grain
\emph{generalization}---a partition into specializations. The key predicates take a
field-set projection $k:R\to\text{Key}$ on the \emph{left} and a collection on the right;
$\texttt{IsKeyOf}$ adds minimality, and the grain projection is always a key
(\S\ref{sec:background}). The equality/subset blocks run strongest to weakest:
$=_c\Rightarrow\, =_g\Rightarrow\, =_{ek}$ (and $=_g\Rightarrow\, =_{\#}$),
$\subseteq_c\Rightarrow\subseteq_g\Rightarrow\subseteq_{ek}$.}
\label{tab:collrel}
\begin{tabular}{@{}ll@{}}
\toprule
Relation & Definition \\
\midrule
$c_1 =_c c_2$ & $\forall e\in c_1.\,e\in c_2 \,\wedge\, \forall e\in c_2.\,e\in c_1$ \\
$c_1 =_g c_2$ & $\texttt{GetG}\,c_1\eqg\texttt{GetG}\,c_2 \,\wedge\, f[\textit{Gr}(c_1)]=\textit{Gr}(c_2)$ \\
$c_1 =_{ek} c_2$ & $\texttt{sel-ek}\,c_1 =_g \texttt{sel-ek}\,c_2$ \\
$c_1 =_{\#} c_2$ & $\lvert c_1\rvert = \lvert c_2\rvert$ \\
\addlinespace
$c_1 \subseteq_c c_2$ & $\forall e\in c_1.\; e\in c_2$ \\
$c_1 \subset_c c_2$ & $c_1\subseteq_c c_2 \,\wedge\, \neg(c_2\subseteq_c c_1)$ \\
$c_1 \subseteq_g c_2$ & $\exists P.\; \texttt{filter}\,P\,c_2 =_g c_1$ \\
$c_1 \subseteq_{ek} c_2$ & $\exists P.\; \texttt{filter}\,P\,c_2 =_{ek} c_1$ \\
\addlinespace
$c \supseteq_g \{c_i\}_{i=1}^{n}$ & $\bigl(\bigwedge_i c_i\subseteq_g c\bigr) \,\wedge\, \textstyle\sum_i\lvert c_i\rvert=\lvert c\rvert$ \\
\addlinespace
$k\;\texttt{IsSuperKeyOf}\;c$ & $\forall e_1,e_2\in c.\; k\,e_1=k\,e_2\Rightarrow e_1=e_2$ \\
$k\;\texttt{IsKeyOf}\;c$ & $k\;\texttt{IsSuperKeyOf}\;c \,\wedge\, k\text{ minimal}$ \\
\bottomrule
\end{tabular}
\end{table}

\paragraph{Collection relations and referential integrity.} The constraints above
draw on a small algebra of \emph{collection relations} (Table~\ref{tab:collrel})
grading how two collections correspond---by elements ($=_c$, $\subseteq_c$), by grain
up to isomorphism ($=_g$, $\subseteq_g$: the same data under different schemas, e.g.\
$\textsf{CustomerId}\leftrightarrow\textsf{Username}$), or by entity ($=_{ek}$,
$\subseteq_{ek}$: the same entities at possibly different grains, e.g.\ a
multi-version history and its current state). These grade \emph{referential
integrity}: the foreign key ``$c_1$'s FK column is contained in $c_2$'s grain'' is
\emph{strict} under $\subseteq_c$ (keys match by value), a \emph{grain} FK under
$\subseteq_g$ (match up to representation---a $\textsf{DeptID}$ resolves against a
$\textsf{DeptSlug}$-keyed table), or an \emph{entity-key} FK under $\subseteq_{ek}$
(an event stream's keys resolve against the entity table despite a finer grain). The
hierarchy grades \emph{strictness}; \emph{cost} is a separate axis, set by
provenance. A $\subseteq$ that an operation's contract guarantees---a capture's
output $\subseteq_c$ its input, with the implied $\subseteq_g$/$\subseteq_{ek}$---is
contract-derivable and free at every level. A $\subseteq$ \emph{asserted} between
independent collections---any FK constraint, strict or up-to-isomorphism---is
data-dependent and compiles to the anti-join of
\S\ref{sec:correctness:queries}. Uniqueness ($\texttt{IsKeyOf}$) and
generalization/partition ($\supseteq_g$) grade the same way: free when an operation's
contract establishes the key or partition (a \texttt{groupby} makes its grouping
projection a key, a \texttt{pick} its partition projection), data-dependent---and
compiled to a grouping query---when asserted of independent data.

\subsection{From Specifications to Verification Queries}
\label{sec:correctness:queries}

A well-formed FOL predicate has syntactic structure that compiles mechanically to
a verification query in SQL or PySpark: $\forall r\in c.\,P$ becomes
\texttt{SELECT * FROM c WHERE NOT (P)}, $\textsf{FK}(c_1,f,c_2)$ becomes an
anti-join, $\textsf{Unique}(c,f)$ a \texttt{GROUP BY \dots HAVING COUNT(*)>1}, and
so on. Each query is designed to return the empty set exactly when its predicate
holds.

\begin{theorem}[Compilation correctness]
\label{thm:compile}
For every well-formed predicate $P$ and its compiled query $Q(P)$,
\[
P \text{ holds} \;\Longleftrightarrow\; Q(P) \text{ returns the empty set.}
\]
Because the compilation is structural recursion over the predicate syntax, there
is no manual translation step at which the test can drift from the specification.
\end{theorem}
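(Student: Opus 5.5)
The plan is a structural induction on the syntax of well-formed predicates. I prove a stronger invariant than the theorem states: for every predicate $P$ with free collection variables, every assignment of collections to those variables, and every binding of the outer element variables, the query $Q(P)$ evaluates to a relation of \emph{counterexample witnesses} to $P$. The precise claim is that a tuple is in $Q(P)$ iff it is an assignment to $P$'s outermost universally bound variables under which the matrix of $P$ fails. The theorem is then the special case of closed $P$: the witness set is empty iff no counterexample exists, iff $P$ holds.

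First I fix the semantics. Predicates are interpreted over finite carriers, namely \pda{} collections of \S\ref{sec:pda:collection}, and queries are interpreted under the standard set/bag relational semantics. Emptiness is insensitive to duplicates, so the bag-versus-set distinction can be set aside.

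The base cases are the collection relations of Table~\ref{tab:collrel} together with the atomic comparisons.
\begin{itemize}
\item For $\forall r\in c.\,P$ with quantifier-free $P$, the query is \texttt{SELECT * FROM c WHERE NOT (P)}. It is empty iff no $r\in c$ has $\neg P(r)$. The key step is that SQL's translation of a quantifier-free $P$ into a \texttt{WHERE} condition agrees with the two-valued Boolean meaning of $P$.
\item $\textsf{FK}(c_1,f,c_2)$, i.e.\ $\subseteq$, compiles to an anti-join. It returns exactly those $e\in c_1$ with no matching key in $c_2$, so it is empty iff the inclusion holds.
\item $\textsf{Unique}(c,f)$, i.e.\ $\texttt{IsSuperKeyOf}$, compiles to \texttt{GROUP BY f HAVING COUNT(*)>1}. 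It is nonempty iff two distinct elements of $c$ share an $f$-value, which is the negation of injectivity on $c$. Grain-uniqueness of $c$ ensures that "distinct" means distinct elements.
\item For the graded relations $=_g$, $\subseteq_g$, $=_{ek}$, $\subseteq_{ek}$, I first rewrite them through their table definitions: apply the witnessing grain isomorphism $f$, or \texttt{sel-ek}, and then reduce to $=_c$ and $\subseteq_c$ on grain-value sets. This is justified because a collection is determined by $\textit{Gr}(\cdot)$. Each $=$ becomes two anti-joins combined by \texttt{UNION}.
\end{itemize}

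The inductive cases follow the connectives.
\begin{itemize}
\item Conjunction compiles to \texttt{UNION} of the witness queries: a union is empty iff both parts are.
\item Universal quantification nests a \texttt{FROM} over the collection and extends the witness tuple.
\item Implication and negation are handled by pushing them into the \texttt{WHERE} condition of the innermost select. This is where I restrict "well-formed" to the fragment whose quantifiers range over collections, namely safe, range-restricted formulas, so that negation is never applied to a bare existential.
\item An existential $\exists r\in c.\,\varphi$ in a negative position becomes \texttt{NOT EXISTS}.
\end{itemize}
Compositionality of the recursion means each case uses only the inductive hypothesis on its immediate subformulas. That is the precise sense of "no drift" claimed in the statement.

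I expect the main obstacle to be NULL and three-valued logic. \texttt{WHERE NOT (P)} silently drops rows on which $P$ evaluates to UNKNOWN, so a row with a NULL field could violate $P$ yet never be returned. \texttt{augment} genuinely produces such NULLs. My first approach is to have the compiler emit $\texttt{NOT COALESCE}(P,\texttt{FALSE})$, or equivalently the condition \texttt{(P) IS NOT TRUE}, at every \texttt{WHERE}, and in the base-case lemma to prove that this collapses three-valued evaluation to the two-valued FOL meaning. If the paper's FOL treats missing supplementary fields as an option type, I will instead prove the lemma relative to that encoding.

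A secondary point to check is the anti-join for $\subseteq_g$ under a non-identity isomorphism $f$. There I must show that the join condition $f(\textit{grain}\,e_1)=\textit{grain}\,e_2$ captures exactly the image condition $f[\textit{Gr}(c_1)]\subseteq \textit{Gr}(c_2)$. This follows because $f$ is a bijection on grain values.
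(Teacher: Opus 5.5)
Your overall route is the one the paper intends: a structural induction in which every construct compiles to its own violation set. The paper itself offers nothing beyond the structural-recursion remark and the sample clauses for $\forall$, FK and \textsf{Unique}. You are also right that its literal clause \texttt{WHERE NOT (P)} breaks on the NULLs that \texttt{augment} produces.

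\textbf{The main gap: your NULL repair fails.} Emitting \texttt{(P) IS NOT TRUE} at each \texttt{WHERE} does not recover the two-valued meaning. Kleene evaluation followed by a single final UNKNOWN$\to$FALSE collapse is not Boolean evaluation. Take $P(r)=\neg(r.x=5)$ with $r.x$ NULL:
\begin{itemize}
\item under either natural two-valued reading (a comparison with a missing value is false, or option-type structural equality), $P(r)$ is true;
\item yet \texttt{NOT (r.x = 5)} is UNKNOWN, so your query returns $r$ as a violation.
\end{itemize}
The tautology $(r.x=5)\vee\neg(r.x=5)$ fails the same way. So the base-case lemma you plan to prove is false under \emph{every} two-valued semantics.

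The collapse has to happen at the atoms instead. Fix how an atom over a missing value is interpreted, and compile each atom to match: for example, wrap each comparison as \texttt{(a) IS TRUE}, or use \texttt{IS NOT DISTINCT FROM} for structural equality. Then every compiled subcondition, including those inside \texttt{EXISTS}, is TRUE or FALSE only, the Kleene connectives coincide with the Boolean ones, and your induction goes through.

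That same choice must be threaded through your base cases, which as written disagree with each other:
\begin{itemize}
\item \texttt{GROUP BY} puts all NULLs in one group, i.e.\ it implements structural equality. So \texttt{HAVING COUNT(*)>1} flags two NULL-keyed rows that the ``comparison with NULL is false'' reading considers distinct.
\item An anti-join whose join condition uses \texttt{=} never matches NULL to NULL, i.e.\ it implements the other reading.
\item The anti-join must also be written with \texttt{NOT EXISTS}, not \texttt{NOT IN}. \texttt{NOT IN} returns nothing as soon as the inner column contains a NULL.
\end{itemize}

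\textbf{Two smaller repairs.}

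First, your strengthened invariant is not preserved by the conjunction case. $P_1\wedge P_2$ has no single prefix of outermost universal variables, and \texttt{UNION} needs union-compatible schemas. You can fix this in either of two ways:
\begin{itemize}
\item project the witness queries to a constant (or tag and pad them), and state the invariant as ``nonempty under a binding iff the formula fails under it''; or
\item more simply, prove one lemma: the compiled \emph{condition}, with bounded quantifiers as \texttt{EXISTS}/\texttt{NOT EXISTS}, is TRUE exactly when the formula holds. Then read off the witness form only for a prenex universal prefix.
\end{itemize}

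Second, Table~\ref{tab:collrel} is not exhausted by anti-joins and \texttt{GROUP BY}:
\begin{itemize}
\item \texttt{IsKeyOf} adds minimality, a non-monotone clause. Its violation query must return a row iff some one-field-smaller sub-projection $k'$ of $k$ is still injective on $c$ (these suffice, since injectivity is preserved by adding fields). Concretely, that is a \texttt{NOT EXISTS} around the \texttt{GROUP BY}~$k'$ duplicate query.
\item $=_{\#}$ and $\supseteq_g$ need count comparisons, which are duplicate-sensitive. Your remark that emptiness ignores duplicates does not cover them; grain-uniqueness does.
\item The definitions of $\subseteq_g$ and $\subseteq_{ek}$ quantify existentially over filter predicates, so you must eliminate that quantifier explicitly before reducing to image inclusion. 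Choose the predicate that keeps exactly those $e\in c_2$ whose grain (resp.\ entity-key) value corresponds under $f$ to one of $c_1$'s.

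\end{itemize}
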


\noindent This closes the loop. The pipeline architect writes the design and its
constraints once; the design's correctness is checked from structure alone, at zero
cost; and the only runtime artifacts---the data-quality checks the data engineers
actually run---are \emph{generated} from the same specification, guaranteed by
Theorem~\ref{thm:compile} to test exactly what was specified. Putting the layers
together, a pipeline is correct by construction when
$\calcg[\textsf{pip}](\grain{\textsf{Source}})=\grain{\textsf{Target}}$ (grain),
the design is BC-well-typed (BC), and the domain triple holds (domain)---all three
established at design time given the inputs' preconditions, eliminating double
counting, spurious join records, grain mismatches, and BC violations by construction
rather than by testing. The full
constraint language and its mechanization in Agda~PDL are
in-progress~(\S\ref{sec:related}); the FOL$\to$SQL generator is already
implemented in \sysname{} (\S\ref{sec:toolchain}).

The three layers together formalize the paper's organizing principle---\emph{a
pipeline's correctness is a property of its code, settled at design time}---as a
single data-independence theorem.

\begin{theorem}[Pipeline Correctness]
\label{thm:pct}
Let
\[\textsf{pip}:\textsf{Pipeline}\,(\Coll{\textsf{Source}})\,(\Coll{\textsf{Target}})\]
carry a constraint specification $(\textsf{Pre}$, $\textsf{Intra}$, $\textsf{Post})$:
$\textsf{Pre}$ a precondition on the source, $\textsf{Post}$ the target property (its
grain, behavioral class, and domain predicates), and $\textsf{Intra}$ the stateful
condition of any apply. If \emph{(i)}~$\textsf{pip}$ is BC-well-typed,
\emph{(ii)}~$\calcg[\textsf{pip}](\grain{\textsf{Source}})=\grain{\textsf{Target}}$,
and \emph{(iii)}~composition discharges every internal obligation
$\textsf{Post}_i\Rightarrow\textsf{Pre}_{i+1}$, then for every source collection $c$,
\[
c\models\textsf{Pre}\;\Longrightarrow\;\textsf{pip}\,c\models\textsf{Post}\wedge\textsf{Intra},
\]
and this implication is established \emph{with no data access}. The sole
data-dependent obligation is whether $c\models\textsf{Pre}$---a data-quality check at
the input boundary, not a property of the pipeline's code.
\end{theorem}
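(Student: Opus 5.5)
The plan is to prove the Pipeline Correctness theorem by splitting $\textsf{Post}$ into its three conjuncts: a grain part, a behavioral-class part, and a domain part.
Each conjunct will be discharged by the earlier result for its layer.
We then glue the pieces together along the pipeline structure by induction over the composition operators of Table~\ref{tab:pipeline-operators}.
Concretely, the induction runs on the syntax of $\textsf{pip}$ as built from CTA constructors with $\seqc$, $\parc$, $\textsf{flattenp}$ and the derived $\forkc$, $\seqp$.
By the remark after Table~\ref{tab:pipeline-operators} the primitive three operators suffice.
The induction hypothesis for a sub-pipeline $p_i$ is the Hoare-style triple $c\models\textsf{Pre}_i\Rightarrow p_i\,c\models\textsf{Post}_i\wedge\textsf{Intra}_i$.
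This is established from hypotheses that mention only types, contracts, and proof terms.

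\textbf{Grain conjunct.} From hypothesis~(ii) and Theorem~\ref{thm:calcg}, $\textsf{pip}$ is grain-correct, so $\grain{\textsf{pip}\,c}=\grain{\textsf{Target}}$ for every $c$.
Theorem~\ref{thm:synthesis}, via the homomorphism of Theorem~\ref{thm:grain-hom} and Theorem~\ref{thm:opgen}(i), guarantees this computed grain is the actual output grain.
Note that this conjunct does not even need $c\models\textsf{Pre}$.

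\textbf{BC conjunct.} Hypothesis~(i), together with the grain$\to$BC$\to$EK determination order of \S\ref{sec:pda:classes}, gives the declared class and entity key of the output.
Grain-preserving captures and applies keep the lens automatically.
Each declare is sound by Corollary~\ref{cor:sound-regrain}, whose key precondition is supplied by the preceding operation's \emph{Keys} postcondition.

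\textbf{Domain conjunct and $\textsf{Intra}$.} This is the inductive step proper.
For a CTA atom, the operation contracts of \S\ref{sec:pda:contracts} give the Hoare triple directly.
Because applies are pure functions of an explicit before-state (\S\ref{sec:pda:apply}), $\textsf{Intra}$ is an ordinary predicate on $(\texttt{before},r)$ and composes by $\wedge$, with no frame rule needed.
For $p_1\seqc p_2$, we chain the two triples using hypothesis~(iii), $\textsf{Post}_1\Rightarrow\textsf{Pre}_2$.
For $p_1\parc p_2$, the pre- and postconditions are conjunctions over the product components, and the triples combine componentwise.
Contract-derivable relations ($\subseteq_c$, $\eqg$, grain-uniqueness) enter as axioms at each node.

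The data-independence claim then follows by inspection.
Every premise used is a type-level computation (CalcG), a typing judgement, a contract, or an entailment proof discharged at composition time.
None of these quantifies over the elements of a particular $c$, except the single antecedent $c\models\textsf{Pre}$, which therefore remains as the only runtime obligation.

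The main obstacle is making hypothesis~(iii) precise enough to support the induction.
The difficulty is the gap between $\textsf{Post}_i\Rightarrow\textsf{Pre}_{i+1}$ holding as a \emph{proved} entailment and merely being checked by a query at a node.
The theorem is only true in the proof-backed reading of \S\ref{sec:correctness:domain}.
I would therefore first fix (iii) to mean that each entailment is a closed proof term, valid for all collections of the intermediate type.
A second subtlety is fan-out and fan-in in a general DAG, where one intermediate feeds several preconditions.
Here I would use that $\forkc$ is definable from $\parc$ with duplication, so the required precondition is the conjunction of the consumers' preconditions, and (iii) must discharge that conjunction.
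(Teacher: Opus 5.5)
Your proposal is correct and follows essentially the same route as the paper's proof. Both use the same three-layer split: grain from hypothesis~(ii) and Theorem~\ref{thm:calcg}, behavioral class from~(i) plus Corollary~\ref{cor:sound-regrain} for the re-lenses, and the domain conjunct by induction, chaining each operation's contract through the obligations~(iii), with contract-derivable ones inherited as axioms so that every premise is type-level. The differences are minor: the paper inducts over a linear chain $\textsf{op}_1\seqc\cdots\seqc\textsf{op}_n$ and handles the DAG case only by appeal to topological order, where you induct on $\seqc$/$\parc$ and make the fan-in/fan-out conjunction explicit; and the paper ends with an explicit transfer step that invokes the grain homomorphism and Theorem~\ref{thm:compile} for the faithfulness of the runtime checks.
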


\begin{proof}[Proof sketch]
The conclusion conjoins the three layers, each discharged statically. \emph{Grain:}
by Theorem~\ref{thm:calcg}, (ii) forces the output grain to equal the target's,
data-independently. \emph{Behavioral class:} (i) is decided by the type checker.
\emph{Domain:} the specification composes by structural induction over the pipeline
DAG---each operation is correct against its own contract,
$\textsf{Pre}_{\textsf{op}}\Rightarrow(\textsf{Post}_{\textsf{op}}\wedge\textsf{Intra}_{\textsf{op}})$,
true by the operation's denotation, and the obligations~(iii) chain these so the leaf
$\textsf{Pre}$ entails the target $\textsf{Post}$; contract-derivable intermediate
obligations are inherited as axioms (\S\ref{sec:correctness:domain}). All three
discharges are type-level, so the entailment holds with no data read. Proved in the
semantic domain, it transfers to the concrete run by the grain homomorphism
(\S\ref{sec:background}), and each domain predicate's runtime check is faithful to its
specification by Theorem~\ref{thm:compile}. \fullproof
\end{proof}

\noindent Theorem~\ref{thm:pct} is the \emph{unified-correctness} predicate of the
Agda~PDL encoding (\S\ref{sec:related}). It is what makes ``settled at design time''
precise: the implication $\textsf{Pre}\Rightarrow\textsf{Post}$ is pure code analysis,
and the only \emph{data} residue is the input-boundary check of
\S\ref{sec:correctness:spectrum}---data quality, not code correctness. The one further
obligation, that the executable faithfully realizes the design, is discharged by the
grain homomorphism (Thm.~\ref{thm:compile}), again without reading data. Note too that
the theorem is stated \emph{against the design's declared} $\textsf{Post}$: it
guarantees the pipeline computes what its specification says, not that the
specification captures intent---the latter is what the human reviews on the design.

\subsection{Prevention, Not Detection}
\label{sec:correctness:prevention}

The three layers do more than \emph{verify} a finished design; they make whole
classes of bug \emph{unrepresentable}---the discipline of making illegal states
unrepresentable~\cite{minsky2011ocaml}, where a property carried by the type system is
not a test that might fail but a state that cannot be written. The mechanism is the
behavioral class: it fixes, at the type level, which reads and writes are well-typed
and at what grain. Temporal reads are gated by the class---\texttt{checkout}, the
current-version read, requires \textsf{IsMultiVersion}---and collapsing a collection to
a coarser grain requires a \emph{key} that only such a read establishes. We make this
concrete with two real bugs, each a costly silent error that ships through tests, and
each the same omission: using a \emph{versioned} (\textsf{IsMultiVersion}) collection
without the \texttt{checkout} its canonical reading demands. One is on the read side,
one on the write side; in both, the wrong design is caught at design time---by
\pddskill{} today, on any pipeline, and ruled out by the type level itself in the
dependently-typed PDL (\S\ref{sec:correctness:spectrum}).

\begin{figure}[t]
\begin{lstlisting}[style=pda]
-- Dim : IsMultiVersion   (grain EK x FromDtm)
-- a current-state lookup needs Dim at entity grain
-- WRONG: re-lens an all-versions read to entity grain
dimEnt = dim |> full |> declare-entity ek
--   declare-entity needs  ek IsKeyOf input;
--   a full read has many versions/entity -> unmet
-- RIGHT: checkout makes ek a key (one current row
--   per entity), discharging the precondition
dimEnt = dim |> checkout t |> declare-entity ek
\end{lstlisting}
\caption{Read side. \texttt{full} reads all versions; presenting the dimension at
entity grain needs the re-lens \texttt{declare-entity}, whose key precondition
($\text{ek}\;\textsf{IsKeyOf}$ input) only \texttt{checkout}---the canonical
\textsf{IsMultiVersion} read---discharges. \pddskill{} flags the missing
\texttt{checkout} at design time; in the dependently-typed PDL the unmet precondition
makes the all-versions design ill-typed.}
\label{fig:read-bug}
\end{figure}

\paragraph{Example 1: an over-broad read.} A pipeline maintains an SCD2 customer
dimension $\textsf{Dim}$ (\textsf{IsMultiVersion}, grain
$\textsf{CustomerKey}\times\textsf{EffectiveDt}$). To find the customers affected by a
changed address---the address carries no customer key---it looks the address up
against $\textsf{Dim}$ and \texttt{UNION}s the matched keys into the upsert delta. The
natural code reads \emph{all} versions of $\textsf{Dim}$, so an address that belonged
to a \emph{past} version of a customer still matches and emits a spurious key; the
target then accrues bogus new versions. (The \texttt{UNION} is a red herring---it is
correct; the emitted key is a distinct wrong \emph{inclusion}, not a duplicate to
collapse.) PDD prevents this by grain: the lookup needs $\textsf{Dim}$ at \emph{entity}
grain (one current row per customer), reached only through the re-lens
\texttt{declare-entity}, whose precondition---that the customer key is a \emph{key} of
its input---holds only after \texttt{checkout} collapses each customer to its current
version. An all-versions read cannot discharge it: \pddskill{} flags the missing
\texttt{checkout} at design time on any pipeline, and in the dependently-typed PDL the
unmet precondition makes the design ill-typed by construction
(Fig.~\ref{fig:read-bug}, \S\ref{sec:correctness:spectrum}). \texttt{checkout} is the
canonical read for \textsf{IsMultiVersion}.

\begin{figure}[t]
\begin{lstlisting}[style=pda]
-- stg: one row per line item; each file re-sends
-- an order's COMPLETE line-item set
-- WRONG: treat each line item as its own event
snap = reads |> pick {lineKey} file DESC
-- latest file PER LINE ITEM; a dropped line is
-- still "latest for itself" => never deleted
-- RIGHT: the file versions the ORDER
snap = reads
   |> groupby-multiversion {orderKey} {file}
                           {lines = collect(*)}
   |> checkout now      -- latest file per order
   >> explode lines     -- back to line-item rows
\end{lstlisting}
\caption{Write side. Keeping the latest file \emph{per line item} (top) never deletes a
dropped line item. Typing the input as \textsf{IsMultiVersion} at the \emph{order}
grain forces an order-level \texttt{checkout} (bottom) that drops it by construction.}
\label{fig:write-bug}
\end{figure}

\paragraph{Example 2: a missed deletion.} A source re-sends, in every file, an order's
\emph{complete} set of line items; a line item removed upstream simply stops appearing,
with no delete record. Flattened to one row per line item, the tempting design keeps
the latest file \emph{per line item}---but a dropped line item is still the latest
\emph{for itself}, so it is never deleted (Fig.~\ref{fig:write-bug}, top). The fix is
to state the input's true semantics: each file is a new version of the \emph{order},
not the line item. Typed that way, the algebra reconstructs the order version
(\texttt{groupby-multiversion}), \texttt{checkout}s the latest file \emph{per order},
and \texttt{explode}s back to line-item rows---dropping the deleted one by construction
(Fig.~\ref{fig:write-bug}, bottom). This sharpens the principle: a behavioral-class
\emph{label} is necessary but not sufficient; declaring the input \textsf{IsEvent}
type-checks yet still admits the wrong \texttt{pick}. Only the \emph{right} class and
grain---\textsf{IsMultiVersion} at the order grain---forces the correct
\texttt{checkout}.

\paragraph{Prevention versus the detection lottery.} Each bug is invisible to today's
defenses. It produces a wrong result only on a specific data pattern---a
historical-only address, a line item dropped between two files---that sampled tests
rarely contain, and a runtime check catches it only if someone already suspected the
versioning and wrote exactly that assertion. Prevention depends on neither
coincidence---and the alternative is demonstrably unreliable. Under the code-first
status quo both bugs were committed by experienced data engineers working without a
denotational design; and, given the same inputs (the natural-language specification
and schemas) but no design, both were reproduced by a frontier model, Anthropic's
Opus~4.8 (1M-context). Neither human expertise nor model capability substitutes for
the design step. Table~\ref{tab:devmodes} contrasts the development modes: ad-hoc
coding, a natural-language spec implemented by a human or an LLM, and black-box
agentic codegen all \emph{can} ship the bug, whereas a typed design makes it
unrepresentable.

\begin{table}[t]
\centering\small
\caption{The same versioning bug under four development modes. Only a typed design
\emph{prevents} it; the others can at best \emph{detect} it, and only with the exact
triggering data \emph{and} the exact verification query.}
\label{tab:devmodes}
\begin{tabular}{@{}L{2.4cm}L{2.7cm}c@{}}
\toprule
Development mode & Artifact the engineer controls & Prevented? \\
\midrule
jump-in-and-code & code only & no \\
NL spec $\to$ implement (human or LLM) & prose, silent on version grain & no \\
black-box agentic codegen & nothing reviewable & no \\
\textbf{\pdd{} (typed design)} & typed design (\textsf{IsMultiVersion}) & \textbf{yes} \\
\bottomrule
\end{tabular}
\end{table}

\subsection{The Verification Spectrum: Two Realizations of the Checker}
\label{sec:correctness:spectrum}

The three layers describe \emph{what} must hold; they do not fix \emph{how strongly}
it is established. That is a separate axis---software engineering's
\emph{tests--types--proofs} spectrum (\S\ref{sec:related})---and the same \pdd{}
design can be checked at more than one point on it, because the ``checker'' of
\S\ref{sec:intro} is a \emph{role}, not a fixed tool. \Sysname{} realizes that
role two ways (Table~\ref{tab:spectrum}), and the choice is monotone: \emph{the
richer the type system, the less is left for runtime}.

\begin{table}[t]
\centering\small
\caption{Two realizations of the checker on one \pdd{} design, against the
runtime-test baseline. The residue shrinks monotonically as assurance strengthens:
structural obligations fall to design-time typing, data-dependent obligations fall to
a machine-checked proof, and only the untyped-input boundary is irreducible.}
\label{tab:spectrum}
\begin{tabular}{@{}L{1.7cm}L{2.6cm}L{2.6cm}@{}}
\toprule
Checker & Code-correctness assurance & Runtime residue \\
\midrule
tests (ad-hoc $+$ DQ tools) & none---run and inspect outputs & a query per node, for everything \\
\textbf{\pddskill{}} (types; deployed) & grain $+$ BC $+$ contract-derivable, discharged at design time as AI-reviewed evidence & data-dependent checks (FK, ranges, business rules), per node \\
\textbf{PDL} (proofs; Agda/Lean~4) & \emph{all} layers, business predicates included, machine-checked & input boundary \emph{only} \\
\bottomrule
\end{tabular}
\end{table}

At the runtime-test end, every obligation is a query against materialized data---one
or more per DAG node, in the manner of dbt's per-node tests~\cite{dbt}. The \pddskill{}
checker moves the \emph{structural} obligations---grain, behavioral class, and the
contract-derivable constraints---to design time, discharging them into an evidence
document and leaving only the genuinely data-dependent constraints (FK integrity,
value ranges, business predicates) as generated queries
(\S\ref{sec:correctness:queries}). At the dependent-types end, PDL---the Pipeline
Design Language, an embedding of \pda{} and its constraint logic in Agda/Lean~4---lifts
even those predicates to \emph{types}: a value violating a precondition cannot be
constructed, so a design that type-checks carries a machine-checked proof that its
output meets grain, behavioral class, \emph{and} the business predicates. The
type-checker \emph{is} the proof, discharged by recompiling (\texttt{agda~--safe}).

Two points sharpen what the proof end buys. First, the residue never vanishes: a
deployment engine (Databricks, Snowflake, BigQuery) ingests \emph{untyped} external
data, so the typed preconditions cannot be enforced \emph{on the inputs}; they must
be checked, which is the input-boundary query the executable design emits (PDLC
Step~3, \S\ref{sec:overview:pdlc}). A proof establishes ``\emph{if} the inputs meet
their preconditions \emph{then} every downstream collection is correct''; the
boundary query checks the antecedent. Second, the proof is what \emph{collapses the
residue to that boundary}. For a \emph{deterministic} design this backward propagation
is always available: a target predicate $P$ on the output is, verbatim, a predicate on
the inputs---namely $P$ precomposed with the pipeline's denotation
$\sem{\textsf{pip}}$---so the proof-carrying composition
(\S\ref{sec:correctness:domain}) rewrites each per-node constraint into a condition at
the input. The per-node data-quality queries the deployed checker emits---potentially
hundreds across a large DAG, one or more per transformation output---are then entailed
by the boundary check and become \emph{formally redundant}: with the inputs verified,
every intermediate and target node is correct by the proof. Two honest caveats. What
collapses is the residue's \emph{location}, not always its \emph{cost}: for an
\emph{emergent} constraint---one with no local pre-image on the source, e.g.\ a
per-group sum bound---the boundary predicate \emph{is} that aggregation and is no
cheaper to check than recomputing the pipeline, a cost the deployed checker instead
pays per node. And one obligation is not data quality at all: that the executable
\emph{faithfully realizes} the design under the non-determinism the semantic domain
abstracts away---evaluation order, non-associative floating-point reduction---which the
grain homomorphism (\S\ref{sec:background}) and compilation-correctness
(Thm.~\ref{thm:compile}) govern, not a boundary query. Within those bounds PDL does
more than raise assurance from reviewed to machine-checked; it shrinks the runtime
footprint from a check at every node to a single check at the boundary.

Across both realizations the artifact the human consumes is a \emph{certificate}---a
machine-checked proof (PDL) or an evidence ledger (the deployed checker)---so verifying
a design reduces to \emph{checking} that certificate, a cheap and delegable act.
Validating the \emph{specification} itself---that its target denotation and business
predicates capture intent---remains the one irreducibly human task
(\S\ref{sec:intro}); it is where a proof, silent on the property one forgot to state,
cannot help.

% =============================================================================
% 7. THE PDD TOOLCHAIN: A PRODUCTION REALIZATION   (budget ~1.0 pp)
% Anonymous by default: the software is "the PDD toolchain" (\edbname + \pddskill +
% \buildskill). Branded names, if ever enabled, come from name macros defined in the
% private branding file (not shipped); \brandednote carries any branded-only aside.
% Production figures are non-identifying CONTEXT, never measured results (see \S8).
% NOTE: arXiv publishes .tex source -> this file must name no company or system.
% =============================================================================

\section{\Sysname}
\label{sec:toolchain}

The methodology of \S\S\ref{sec:pda}--\ref{sec:correctness} is realized end-to-end
by \sysname{}, a design-first system in production use.\brandednote{} It has three
cooperating components plus the human workflow that ties them together.

\paragraph{Design and verification.} The design tool (a
Claude skill, \pddskill{}) takes the source and target data semantics
($\grain{\cdot}$, $\ek{\cdot}$, $\BC{\cdot}$) and produces a \emph{PDD design}---a
formal specification of the pipeline, expressed in the PDA operator algebra. It discharges grain correctness with CalcG, BC
correctness by typing, and the contract-derivable domain constraints
automatically; it emits the remaining data-dependent obligations as FOL predicates
compiled to SQL verification queries (Thm.~\ref{thm:compile}); and it produces a
human-readable \emph{evidence document} recording every CalcG step, BC check,
derived constraint, and generated query. It is the source of this paper's
zero-cost verification and auto-generated queries.

\paragraph{\edbname{} --- the executable target.} The \edbfull{} (\edbname) is a typed
PySpark framework of $90{+}$ reusable operations implementing the PDA catalog
(\texttt{delta}, \texttt{cdc}, \texttt{checkout}, \texttt{augment},
\texttt{lookup}, \texttt{sel-*}, \texttt{append}/\texttt{upsert}/\texttt{commit-upsert},
SCD2, hashing, validation). Because each \edbname{} operation carries its PDA contract, a
correct design maps to correct code one operation at a time, and grain/BC
correctness ``falls out'' of one-line typed calls. This 1:1 PDA$\to$\edbname{}
correspondence is what makes the \edbname{} program an \emph{executable design}
(\S\ref{sec:overview:pdlc})---an implementation whose meaning equals the design's by
the homomorphism (\S\ref{sec:background})---and shows that PDA is realizable on a
production engine.

\paragraph{Code generation --- design to deployment.} A family of Claude skills
(\ddlskill{}, \buildskill{}, \jobskill{})
compiles a PDD design into the executable design: DDL, a faithful \edbname{} implementation
plus its validation queries, and deployable job bundles (PDLC Step~3). The human
verifies the \emph{design}; the agent generates the \emph{code}; the generated
queries verify the data. This is the answer to ``who verifies AI-generated
pipelines?''---the design is the artifact a human can check and an agent can
faithfully compile. Operational tuning for performance, cost, and scale (PDLC
Step~4) is then applied as semantics-preserving rewrites
(\S\ref{sec:synthesis}), preserving the guarantees by construction.

\paragraph{Deployment context and reproducibility.} \Sysname{} is in production use
on an enterprise data-engineering program---not a proof of concept---where the
design-first workflow builds a portfolio of pipelines across a three-layer
(landing / integration / serving) architecture over many heterogeneous source systems. We
treat deployment figures purely as \emph{context}, never as measured results:
reported build-step speedups, for instance, are bounded end-to-end by the non-coding
share of delivery (requirements analysis, data access, review, governance), so a large
\emph{local} speedup is a modest \emph{end-to-end} one---the measured results are the
controlled study of \S\ref{sec:eval}. Because the executable-design framework is
proprietary, the reproducible artifact accompanying this paper is the method's
\emph{outputs}---formal designs, evidence documents, generated verification queries,
and the PDL/Agda mechanization---rather than the engine itself.

% =============================================================================
% 8. EXPERIMENTAL EVALUATION   (budget ~2.25 pp)   [empirical core]
% Experimental design from roadmap §5.2 + exec-deck (anonymized) + pdd-designs.
% Three arms: A (with PDD, deployed checker) / B (with PDD, PDL proofs) /
%   C (without PDD, agentic). C vs A isolates the design step. Anonymous names.
% RESULTS PENDING: measurements scheduled per roadmap timeline.
% Tables below are scaffolded with headers/captions; numbers are placeholders.
% =============================================================================

\ifresults
\section{Experimental Evaluation}
\else
\section{Evaluation Design}
\fi
\label{sec:eval}

\ifresults\else
\noindent\emph{This preprint fixes the methodology and the evaluation design; the
measurement campaign is in progress and the quantitative results will appear in a
later revision. This section states the central claim under test, the research
questions, and the three-arm protocol.}\par\medskip
\fi
We evaluate the central claim of the paper---\emph{most pipeline correctness is
dischargeable at design time at zero cost, leaving only data-dependent obligations
as auto-generated verification queries}---on real production pipelines built with
\sysname{}. The evaluation asks whether the \emph{methodology} delivers on three
fronts: that it is \emph{broad} (multiple patterns, behavioral classes, and modeling
paradigms, not grain alone), that it \emph{prevents} the defects agentic coding ships,
and that its benefits are \emph{measured} (defect capture, discharge, effort, and
faithfulness). The unit of analysis is a \emph{pipeline unit}---one target table built
by one pipeline.
% NOTE: numbers in this section are PLACEHOLDERS pending the measurement campaign
% (roadmap §5.2). Prose states the design and expected result shape.

\subsection{Research Questions}

\begin{description}[leftmargin=2.4em,style=nextline,itemsep=2pt]
\item[RQ1] Does design-first catch correctness defects at \emph{design time} that
ad-hoc and LLM-generated pipelines ship to runtime or production?
\item[RQ2] What fraction of a pipeline's correctness obligations is discharged at
\emph{zero cost} (CalcG grain $+$ BC typing $+$ contract-derivable constraints)
versus requiring a data-dependent runtime check?
\item[RQ3] How many verification queries are auto-generated per pipeline, and what
classes of real defects do they catch on production data?
\item[RQ4] Is code generation \emph{faithful}---does the generated \edbname{} code
preserve every correctness property asserted by the design?
\item[RQ5] How \emph{uniform} are the resulting pipelines---what fraction of each
pipeline's operations is drawn from the shared pattern vocabulary, and how small is
the structural distance between pipelines of the same pattern?
\item[RQ6] What are the developer-effort and code-size deltas versus ad-hoc and
LLM-without-design baselines?
\item[RQ7] What does a fully mechanized PDL design establish beyond the deployed
\pddskill{} checker, and by how much does the proof \emph{collapse the runtime
residue} (per-node data-quality checks $\to$ a single input-boundary check)?
\end{description}

\subsection{Case Studies, Arms, and Protocol}

\paragraph{Case studies.} We use five sanitized production case studies chosen to
span three diversity axes---pipeline pattern, target behavioral class, and modeling
paradigm (Table~\ref{tab:cases})---demonstrating breadth across the space where
correctness matters, not grain alone. Four already exist as PDD designs with evidence
documents (\textsc{Landing-Vault}, \textsc{Dim2-Order}, \textsc{Ft-Order-Rev.},
\textsc{Customer-Summary}); a pure Ingestion case completes the coverage matrix.

\paragraph{Three arms, two reference points.} We build each pipeline unit three ways
with an agent. \textbf{Arm~A (with PDD, deployed):} the design is produced and checked
by the \pddskill{} tool, which emits an evidence document over grain, BC, and
contract-derivable obligations; we report Arm~A for all five cases. \textbf{Arm~B
(with PDD, proved):} the design is mechanized in PDL (Agda), with grain, behavioral
class, and one business predicate proved and \emph{no postulate on the critical path},
so that compilation \emph{is} verification; we anchor Arm~B on the running-example
pipeline (\textsc{Customer-Summary}). \textbf{Arm~C (without PDD):} the same LLM
generates the pipeline directly from natural-language requirements, with \emph{no}
formal design step. The contrast \textbf{C vs.\ A} isolates the contribution of the
\emph{design} from the LLM's coding ability---the central question of the paper---while
\textbf{B vs.\ A} measures what the proof end of the verification spectrum
(\S\ref{sec:correctness:spectrum}) adds over the deployed checker. Two further
\emph{reference points} locate these on the wider landscape: ad-hoc hand-written
PySpark/SQL (the pre-PDD ``before'' state) and runtime-only data-quality tooling
(Great~Expectations/Deequ). A capability matrix (Table~\ref{tab:capability}) shows that
grain and BC defects are \emph{inexpressible} in runtime DQ tools, independent of any
counts.

\paragraph{Ground truth and protocol.} Defects are established two ways: a
controlled \emph{seeded-fault} catalog drawn from a fixed taxonomy
(grain-mismatch, BC-violation, fan/chasm-trap, missing-provider, non-unique
join-key, FK-orphan, NULL-key, SCD2 temporal-overlap)---several seeded from
\emph{real} documented design-time catches, including the FK-augmentation fault
classes catalogued in the case material (\texttt{fk-augmentation-pattern})---and
\emph{historical bug mining} of the pre-PDD tracker for the same domains. Each case is
built under all arms; defects are graded blind against the taxonomy by a rater
unaware of which arm produced the artifact. We instrument the toolchain to emit a
machine-readable \emph{obligation ledger} (RQ2)---one row per CalcG step, BC check,
contract-derivable relation, and data-dependent predicate, already latent in the
evidence documents---tag each generated query by constraint type (RQ3), and log the
design-op$\to$\edbname{}-call trace plus a differential test (RQ4).

\begin{table}[t]
\centering\footnotesize
\setlength{\tabcolsep}{4pt}
\caption{Case-study portfolio. Coverage spans patterns
\{Ingestion, Staging-to-Landing, Augmentation, Aggregation\}, behavioral classes
\{Entity, Event, MultiVersion, Snapshot\}, and paradigms \{Kimball, Data Vault,
source-native\}.}
\label{tab:cases}
\begin{tabular}{@{}llll@{}}
\toprule
Case study & Pattern & Target BC & Paradigm \\
\midrule
\textsc{Landing-Vault} & Staging-to-Landing & Entity\,$+$\,Event & Data Vault \\
\textsc{Dim2-Order} & Augmentation & MultiVer.\ (SCD2) & Kimball dim \\
\textsc{Ft-Order-Rev.} & Augmentation & Event (fact) & Kimball/DV \\
\textsc{Customer-Summary} & Aggregation & Entity & Kimball agg. \\
\textsc{Ingestion} (new) & Ingestion & MultiVer./Snap. & source-native \\
\bottomrule
\end{tabular}
\end{table}

\ifresults
\subsection{Results}

% ---- RESULTS PENDING: the tables below are scaffolds; fill from the campaign. ----

\paragraph{RQ1 --- where defects are caught.} Table~\ref{tab:rq1} reports, per case
and per method, how many ground-truth faults are caught at design time versus
escape to runtime/production. We expect PDD's mass at the \emph{design} phase and the
no-design arm's mass at \emph{runtime}, because grain, BC, and structural traps are
inexpressible in runtime-only tooling and are not reliably caught by LLM codegen
without a design.

\paragraph{RQ2 --- zero-cost discharge.} Table~\ref{tab:rq2} breaks each pipeline's
obligations into grain (CalcG), BC, contract-derivable, and data-dependent, and
reports the zero-cost ratio. These counts are extractable \emph{today} from the
existing evidence documents; we expect a large majority discharged at zero cost
with a small data-dependent residue.

\paragraph{RQ3 --- generated queries.} For each case we report the number of
verification queries generated, broken down by constraint type, how many fire on
production snapshots, and how many confirmed real defects---all auto-derived from
the design rather than hand-written.

\paragraph{RQ4 --- faithfulness.} Via the design-op$\to$\edbname{}-call trace and
differential testing on identical inputs, we report the fraction of design
correctness properties (grain, BC, row-multiplicity) preserved in the generated code
and classify any violations (grain-/BC-changing vs.\ cosmetic).

\paragraph{RQ5 --- uniformity.} For each pattern we report the fraction of a
pipeline's operations drawn from the shared PDA pattern vocabulary and the pairwise
structural edit-distance between same-pattern pipelines, against the ad-hoc baseline.
We expect PDD pipelines of a given pattern to be near-identical modulo their data
domain (a pattern is a pipeline parameterized by that domain,
\S\ref{sec:synthesis}), where ad-hoc pipelines diverge.

\paragraph{RQ6 --- effort and size.} We report measured design$+$implement time,
lines of code (declarative \edbname{} vs.\ raw PySpark), review time, and
iterations-to-correct against the ad-hoc and no-design arms, distinguishing measured
numbers from the deployment figures of \S\ref{sec:toolchain}.

\paragraph{RQ7 --- proof anchor vs.\ deployed checker, and residue-collapse.} For
\textsc{Customer-Summary} we contrast Arm~A's evidence document with Arm~B's Agda
development: which obligations each settles, what assurance each yields (AI-reviewed
vs.\ machine-checked, reproducible by \texttt{agda~--safe}), and---the residue-collapse
metric---how many \emph{per-node} data-dependent queries the proof renders redundant by
propagating preconditions to a single input-boundary check, plus the authoring-effort
delta of moving from the type end to the proof end of the spectrum on one design.

\begin{table}[t]
\centering\small
\caption{RQ1: defect capture by method and phase. (Results pending the measurement
campaign; \(\ast\) denotes placeholder.)}
\label{tab:rq1}
\begin{tabular}{@{}lccccc@{}}
\toprule
& \multicolumn{2}{c}{Caught @ design} & \multicolumn{2}{c}{Escaped to prod} & \\
\cmidrule(lr){2-3}\cmidrule(lr){4-5}
Case & PDD\,(A) & no-PDD\,(C) & PDD\,(A) & no-PDD\,(C) & \#faults \\
\midrule
\textsc{Landing-Vault} & $\ast$ & $\ast$ & $\ast$ & $\ast$ & $\ast$ \\
\textsc{Dim2-Order} & $\ast$ & $\ast$ & $\ast$ & $\ast$ & $\ast$ \\
\textsc{Ft-Order-Rev.} & $\ast$ & $\ast$ & $\ast$ & $\ast$ & $\ast$ \\
\textsc{Customer-Summary} & $\ast$ & $\ast$ & $\ast$ & $\ast$ & $\ast$ \\
\textsc{Ingestion} & $\ast$ & $\ast$ & $\ast$ & $\ast$ & $\ast$ \\
\bottomrule
\end{tabular}
\end{table}

\begin{table}[t]
\centering\small
\caption{RQ2: obligations by layer and the zero-cost ratio
$\frac{\text{grain}+\text{BC}+\text{contract}}{\text{total}}$. Counts are
extractable from existing evidence documents.}
\label{tab:rq2}
\begin{tabular}{@{}lccccc@{}}
\toprule
Case & Grain & BC & Contract & Data-dep. & Zero-cost \\
\midrule
\textsc{Landing-Vault} & $\ast$ & $\ast$ & $\ast$ & $\ast$ & $\ast$ \\
\textsc{Dim2-Order} & $\ast$ & $\ast$ & $\ast$ & $\ast$ & $\ast$ \\
\textsc{Ft-Order-Rev.} & $\ast$ & $\ast$ & $\ast$ & $\ast$ & $\ast$ \\
\textsc{Customer-Summary} & $\ast$ & $\ast$ & $\ast$ & $\ast$ & $\ast$ \\
\textsc{Ingestion} & $\ast$ & $\ast$ & $\ast$ & $\ast$ & $\ast$ \\
\bottomrule
\end{tabular}
\end{table}

\else
\subsection{Planned Analyses}

The measurement campaign is in progress; per research question we will report:
\textbf{RQ1}~defect capture by phase (design-time vs.\ escape to runtime/production)
across arms, expecting PDD's mass at the design phase and the no-design arm's at
runtime; \textbf{RQ2}~the per-layer obligation ledger (grain / BC / contract-derivable
/ data-dependent) and the zero-cost ratio---counts already extractable from the
existing evidence documents; \textbf{RQ3}~generated-query yield by constraint type and
confirmed defects on production snapshots; \textbf{RQ4}~code-generation faithfulness via
the design-op$\to$\edbname{}-call trace and differential testing; \textbf{RQ5}~pipeline
uniformity---share of operations drawn from the shared pattern vocabulary and
structural edit-distance within a pattern; \textbf{RQ6}~developer-effort and code-size
deltas against the ad-hoc and no-design arms; and \textbf{RQ7}~the proof-anchor
residue-collapse for \textsc{Customer-Summary} (per-node data-dependent checks $\to$ a
single input-boundary check). The capability matrix below (Table~\ref{tab:capability})
is structural and independent of any counts: grain and behavioral-class defects are
\emph{inexpressible} in runtime data-quality tooling.
\fi

\begin{table}[t]
\centering\small
\caption{Defect-class capability matrix: where each approach catches each defect
class. \textbf{D}~$=$~design-time, R~$=$~runtime, ``--''~$=$~inexpressible.}
\label{tab:capability}
\begin{tabular}{@{}lcccc@{}}
\toprule
Defect class & PDD & GE/Deequ & ad-hoc & LLM \\
\midrule
Grain mismatch & \textbf{D} & -- & R & R \\
BC violation & \textbf{D} & -- & R & R \\
Fan/chasm trap & \textbf{D} & -- & R & R \\
FK orphan & \textbf{D}/R & R & R & R \\
Value range / NotNull & \textbf{D}/R & R & R & R \\
\bottomrule
\end{tabular}
\end{table}

\subsection{Threats to Validity}

The case studies come from one enterprise domain; we mitigate with breadth across
patterns, behavioral classes, and paradigms, and argue the patterns are
domain-independent. The no-design arm (C) isolates the design's contribution from the
LLM's coding ability. Blind grading, an independently defined seeded-fault catalog,
and historical-bug cross-checks mitigate author-built bias. Faithfulness is checked
by differential testing in addition to static tracing. Effort figures are reported
only from measured timestamps and LOC; the deployment numbers of
\S\ref{sec:toolchain} are clearly labeled as context. A sanitized artifact (designs,
evidence documents, generated queries, obligation ledgers, and the seeded-fault
catalog) accompanies the paper for reproducibility.

% =============================================================================
% 9. BEYOND PIPELINES: CORRECT-BY-CONSTRUCTION QUERIES   (budget ~0.6 pp)
% Broader-impact argument: PDD's design-time grain/BC verification is not
% pipeline-specific -- a query is the one-node case -- so it is the correctness
% layer the ontology / semantic-layer / text-to-SQL stack lacks. Two industry-
% AGNOSTIC examples (fan trap, all-versions read). Theory applies now; query-layer
% tooling + evaluation are future work (no over-claim). Anonymous, domain-neutral.
% =============================================================================

\section{Beyond Pipelines: Correct-by-Construction Queries}
\label{sec:beyond}

Nothing in PDD is specific to \emph{pipelines}. A pipeline is a DAG of grain-bearing
transformations; a \emph{query} is a single one---the one-node case. \calcg{} verifies
a whole DAG at design time (\S\ref{sec:correctness}); \emph{a fortiori} it verifies one
query---indeed the fan trap that opened this paper (\S\ref{sec:intro}) \emph{is} a
query. So the correctness the method establishes---output grain, behavioral-class-%
respecting reads, no spurious fan-out---applies unchanged to the queries an analyst, a
BI tool, or an AI agent issues \emph{on top of} the data assets a pipeline builds.

This has become urgent. The AI-enabled data stack now being assembled across the
industry---knowledge graphs and \emph{ontologies}, \emph{semantic layers}, and
\emph{natural-language-to-SQL} agents---turns a business question into a query
automatically. These layers solve a real problem: they tell an agent \emph{where} to
look and \emph{how} to phrase a query in the organization's vocabulary. But that is a
different question from correctness. A semantic layer certifies that a query is
\emph{well-formed against the model}; it does not certify that the query
\emph{computes the intended number}. An agent navigating an ontology can emit SQL that
is syntactically valid and semantically plausible yet \emph{logically wrong}, and---run
against live data---return a confident wrong answer with no error raised. The guarantee
the industry is racing to deploy is one of \emph{fluency}, not \emph{correctness}.

The failure modes are the same grain and behavioral-class errors PDD prevents in
pipelines, now on the read path.
\emph{(1) The fan trap.} Asked for total revenue alongside total units, an agent joins
a revenue fact (one row per customer per order line) to a units fact (one row per
customer per shipment) on customer id. The grains are incomparable, so the join
cross-multiplies over the unmatched components: a customer with three order lines and
four shipments yields twelve rows---revenue summed four times, units three. The result
lands at the requested (customer) grain and looks right; both totals are silently
inflated.
\emph{(2) The all-versions read.} A dimension whose attributes change over time---a
customer's segment, a product's list price---is correctly modeled as versioned records
with effective dates (an \textsf{IsMultiVersion}/SCD2 table) that must be read at a
point in time (a \texttt{checkout}). An agent that does not know this reads \emph{every}
version: an entity revised mid-period appears more than once, its measures
double-counted; join that table to a fact and each fact row matches every historical
version, inflating counts and sums in lockstep.
Both are invisible to schema validation \emph{and} to the semantic layer, because
neither is a schema violation---each is a \emph{grain}/\emph{behavioral-class}
violation, exactly the layer of meaning a schema omits (\S\ref{sec:background}).

Correct-by-construction \emph{at design time} is the guarantee these architectures
lack. A query's grain and behavioral-class correctness are decided from the schema
denotations and the query's structure alone (\S\ref{sec:correctness}), so they can be
settled \emph{before the query runs}---the moment it is generated, not after a human
notices the numbers are too large. The same \calcg{} that types a pipeline types a
query; the same behavioral-class discipline that forbids an unchecked versioned read in
a pipeline forbids it in a query. PDD is, in this light, not merely a way to build
correct data \emph{pipelines} but a candidate \emph{correctness layer} for the
AI-enabled data architecture end to end---from the pipelines that build the data assets
to the AI-generated queries that answer business questions over them. Realizing that
layer---extending the design tool and its verification to the query/semantic-layer
boundary, and evaluating it---is ongoing work; the point here is that it demands
\emph{no new theory}, only the grain and behavioral-class denotations this paper
already computes.

% =============================================================================
% 9. RELATED WORK   (budget ~0.75 pp)
% Reuse/adapt from PODS related-work where overlapping; add SYSTEMS angle.
% -----------------------------------------------------------------------------
% Threads:
%  - Functional dependencies & granularity (grain vs FDs) — cite PODS, Codd, Armstrong.
%  - Type-safe / embedded query construction (LINQ, Quill, Slick, type-safe SQL).
%  - Dataflow / pipeline frameworks (Spark, dbt, Airflow) — no correctness guarantees.
%  - Data quality & constraint checking (Deequ, Great Expectations) — RUNTIME, data-dependent;
%    contrast with our design-time, zero-cost discharge.
%  - Denotational design / correct-by-construction (Hughes, Elliott); program synthesis.
%  - Verification of data transformations / provenance.
%  - Position: we UNIFY a typed algebra (design-time, zero-cost) with generated runtime checks
%    only for the residual data-dependent layer.
% =============================================================================

\section{Related Work}
\label{sec:related}

\paragraph{A spectrum: tests, types, proofs.} Software engineering assures
correctness along a spectrum---runtime tests, static type checks, and
machine-checked proofs~\cite{demoura2021lean4}. Its type--proof end is one lineage:
a Hoare triple $\{P\}\,C\,\{Q\}$~\cite{hoare1969axiomatic} states a program's
correctness relative to a specification and discharges it compositionally; a
\emph{refinement type}~\cite{rondon2008liquid} folds the precondition into the type
and has an SMT solver discharge it automatically, trading expressiveness for
push-button checking; a \emph{dependent type} (Agda, Lean~4~\cite{demoura2021lean4},
F*~\cite{swamy2016fstar}, which spans both) lifts the restriction and asks the
programmer for a proof term, trading automation for full expressiveness. The axis is
thus not only how \emph{strongly} correctness is established but at what
\emph{automation--expressiveness} trade-off. Data engineering today sits
almost entirely at the runtime-test end. Dataflow and orchestration frameworks
(Spark~\cite{zaharia2016apache}, Airflow~\cite{airflow}, dbt), and streaming dataflow
models~\cite{akidau2015dataflow,armbrust2018structured,carbone2015flink}, execute
pipelines but make no correctness guarantee about the transformation itself;
data-quality tools (Great~Expectations~\cite{greatexpectations},
Deequ~\cite{schelter2018automating}, dbt tests~\cite{dbt}) check assertions
\emph{against materialized data at runtime}. Because a schema
captures only structure, these runtime checks are the only handle such systems
have---and, run on sampled or partial data, they miss precisely the grain errors
that silently corrupt aggregates (\S\ref{sec:intro}); even property-based
testing~\cite{maciver2019hypothesis} explores values, not grain semantics.
Type-safe query construction (LINQ~\cite{meijer2006linq}, Quill, Slick) pushes
\emph{some} checks to compile time, but types over a relational schema still carry no
notion of grain, entity, or behavioral class, so the errors that matter remain
invisible to them.

\paragraph{Semantic layers and AI-generated queries.} Metrics and semantic
layers---dbt's Semantic Layer and MetricFlow~\cite{dbtsemanticlayer}, and
Malloy~\cite{malloy}---let teams declare dimensions, measures, and the grain of a
metric, and are the closest industrial kin to ``designing in a semantic domain.'' But
they model grain in order to \emph{generate} consistent queries, not to \emph{verify}
a transformation: the declared grain is trusted, never checked against what a query
actually computes, and the check is neither data-independent nor decidable. In
parallel, LLM text-to-SQL~\cite{yu2018spider,li2023bird} turns a natural-language
question into a query directly; benchmarks reward \emph{execution accuracy} on
examples, not a guarantee that a generated query is grain-correct on all inputs. Both
lines answer \emph{where} and \emph{how} to query; neither answers \emph{whether} the
query is correct---the grain and behavioral-class check PDD supplies at design time,
and the reason the method extends from pipelines to the queries that run over these
layers (\S\ref{sec:beyond}).

\paragraph{Foundations: dependencies, categories, provenance.} Grain ordering is the
type-level analogue of a functional dependency---a surjection between grains lifts the
determinacy that Armstrong's axioms~\cite{armstrong1974dependency} give for attributes
from columns to whole types (\S\ref{sec:background}). Category-theoretic models of
databases---functorial data migration~\cite{spivak2012functorial}, algebraic
databases~\cite{schultz2017algebraic}, and attributed
$C$-sets~\cite{patterson2022categorical}---treat schemas and instances as functors and
give transformations a compositional semantics; grain theory is complementary and
deliberately lighter, fixing a single \emph{computable} invariant---the denotation
$(\grain{R},\ek{R},\BC{R})$---rather than a full categorical model, so inference stays
a finite field-set computation. Data provenance~\cite{green2007provenance} tracks
\emph{where} a value came from; grain tracks the orthogonal axis of \emph{at what level
of detail} it is represented---the axis on which double counting and fan traps live.
None of these lines makes pipeline correctness a design-time, type-level property; that
is what PDD contributes.

\paragraph{Our position.} PDD moves pipeline correctness leftward on that
spectrum---from tests toward types and proofs---by making data semantics
\emph{typeable}: in the manner of denotational design~\cite{elliott2009denotational},
a type carries its denotation $(\grain{R},\ek{R},\BC{R})$~\cite{graintheory-pods},
and operations carry contracts propagated through composition---a Hoare-style
$\{\textsf{Pre}\}\,\textsf{pip}\,\{\textsf{Post}\}$ discipline~\cite{hoare1969axiomatic}
over collections, in which grain, behavioral-class, and contract-derivable correctness
are decided before any data is read. Crucially, PDD does not commit to a single point
on the spectrum: it instantiates the correctness \emph{once} and lets a team choose
\emph{how strongly} to discharge it by the pipeline's criticality, complexity, and
scale (\S\ref{sec:correctness:spectrum}). The deployed \pddskill{} checker occupies
the automated, refinement-type-like middle~\cite{rondon2008liquid}: grain inference
(CalcG) and behavioral-class typing are \emph{decidable, domain-specific decision
procedures}---playing the role SMT plays for a refinement-type checker, but purpose-built
for grain, so the structural layers need no general solver at all. We do not leave the
proof end aspirational either: \pda{} and its constraint logic are embedded in
Agda/Lean~4 as the Pipeline Design Language (PDL), a dependently-typed realization in
which we mechanize a complete case-study pipeline end to end (\S\ref{sec:eval}), so the
proof column of the spectrum is demonstrated, not merely cited; the full operation
catalog is in progress. Because
the denotation and \calcg{} are computed from schemas alone, the guarantee is
\emph{universal over the operation set and carrier} (PDD Universality,
\S\ref{sec:background}): the same design-time verification applies to a pipeline
written in relational algebra, in \pda{}, or in the dataflow model's windowing
operators~\cite{akidau2015dataflow}---over a finite table or an unbounded stream alike.
PDD is thus a verification layer atop \emph{any} grain-inferring algebra, where the
frameworks above execute pipelines but verify none. Crucially, this
\emph{separates} the two concerns that runtime testing conflates: \emph{code}
correctness becomes a design-time, type-level property (the Pipeline Correctness
Theorem, \S\ref{sec:correctness}), while runtime checks are confined to \emph{data
quality}---validating that inputs meet the design's assumed preconditions. \pda{} brings to data pipelines what Codd's
algebra~\cite{codd1970relational} brought to ad-hoc data manipulation: a typed algebra
in which correctness is structural, not a test result; PDD is the denotational-design
discipline that composes it over the semantic domain. The resulting division of labor
is \emph{proof-carrying}~\cite{necula1997pcc}: the untrusted generator ships a checkable
certificate---a machine-checked proof or an evidence ledger---and trust reduces to
validating the specification and checking the certificate, never re-reading generated
code at a scale no human can review.

% =============================================================================
% 10. CONCLUSION   (budget ~0.25 pp)
% -----------------------------------------------------------------------------
% Beats:
%  - Recap: PDA gives correct-by-construction pipelines; three-layer correctness; zero cost.
%  - The production toolchain shows it works on real pipelines.
%  - Answer to "who verifies AI-generated pipelines": the human verifies the DESIGN; the
%    algebra + generated queries verify the rest.
%  - Future: full PDL mechanization (Agda/Lean4 — proved CompilationCorrect); broader patterns.
% =============================================================================

\section{Conclusion}
\label{sec:conclusion}

PDD answers ``who verifies AI-generated pipelines?'' by moving the verdict
to design time: \pda{} makes well-typed composition grain-correct by construction, and
the three-layer framework settles code correctness before any data flows, at zero
cost---the \emph{Pipeline Correctness Theorem} (\S\ref{sec:correctness}): the
structural layers are discharged with no data access and a proof-carrying composition
collapses what remains to a single input-boundary check, so a pipeline is correct by
construction \emph{against its specification}, the only residue being whether real
inputs meet that precondition.
Business rules enter as a fourth denotation, \emph{data constraints}---first-order-logic
predicates that propagate through the pipeline and compile to the only check that
survives to runtime, an input-boundary data-quality test, generated rather than
hand-written. Correctness by construction is thus a \emph{dial}, not a fixed point: the same design
spans the tests--types--proofs spectrum (\S\ref{sec:correctness:spectrum})---from
runtime queries, through the deployed \pddskill{} checker (grain and behavioral class
as decidable, domain-specific decision procedures in place of a general SMT solver), to
machine-checked Agda/Lean~4 proofs---and a team chooses its point by a pipeline's
criticality, complexity, and scale. The deployed checker carries the method in
production across our case studies, while a full PDL mechanization in Agda shows the
proof end is reachable for a real pipeline, collapsing the runtime residue to the
boundary alone. And because correctness is
computed from grain, not data, the method is universal over the operation set and
carrier (PDD Universality, \S\ref{sec:background})---PDA over batch collections is
one instance, a windowing algebra over unbounded streams another. And the method is
not even specific to pipelines: a query is a single grain-bearing transformation---the
one-node case---so the same design-time check certifies the AI-generated queries that
run \emph{on top of} the data assets, a correctness layer the ontology/semantic-layer
stack otherwise lacks (\S\ref{sec:beyond}). The engineer's role shifts accordingly: he
\emph{authors and validates the specification}---the data denotation of source and
target and the business rules---and \emph{checks the certificate} the agent ships (a
machine-checked proof, or an evidence ledger), while the algebra and the generated
queries verify the rest. This is a \emph{proof-carrying}~\cite{necula1997pcc} discipline
for AI-generated pipelines: an untrusted producer, a cheaply-checkable certificate, and
a human freed to do the one thing no verifier can---get the specification right. Future work extends the PDL
mechanization from the exemplar to the full operation catalog, instantiates PDD on a
streaming carrier, broadens the pattern library, and realizes the query-layer checker
of \S\ref{sec:beyond} for AI-generated queries over semantic layers.

%% --- Appendix (extended/arXiv version only) --------------------------------
\ifextended
\appendix
\section{Deferred Proofs}
\label{app:proofs}

\begin{proof}[Proof of Theorem~\ref{thm:univ-key-grain}]
Throughout, $k:R\twoheadrightarrow K$ is a field-set projection, so $K$ retains a
subset of $R$'s fields and $k$ is surjective.

\emph{(1)$\Rightarrow$(2):} the elements of any collection $c:\Coll{R}$ are values of
$R$, so a projection injective on the type $R$ is injective on the elements of $c$,
hence a superkey of every $c$.

\emph{(2)$\Rightarrow$(1):} contrapositive. If $k\,r_1=k\,r_2$ with $r_1\neq r_2$, then
$\textit{grain}\,r_1\neq\textit{grain}\,r_2$ by Grain
Uniqueness~\cite{graintheory-pods} (the grain projection is injective), so
$\{r_1,r_2\}$ is a well-formed grain-unique collection of type $C\,R$---and $k$ is not a
superkey of it. Hence $k$ is not a superkey of \emph{every} $c$.

\emph{(1)$\Leftrightarrow$(3):} $k$ is given as a surjection, and a surjection is
injective iff it is a bijection, i.e.\ an isomorphism. So $k$ injective on $R$ (1) is the
same as $k$ being an isomorphism $R\xrightarrow{\sim}K$ \emph{witnessed by $k$ itself}
(3); either way $R\cong K$, hence $R\eqg K$. (This is why clause~(3) must name $k$ as the
witness rather than assert the bare $R\eqg K$: an arbitrary surjection between
grain-equivalent types need not be injective---only $k$'s \emph{being} the isomorphism
gives injectivity.)

\emph{Minimality.} Suppose $k$ satisfies (1)--(3), so $K\cong R$. By definition $K$ is a
\emph{grain} of $R$ precisely when it is moreover \emph{irreducible}: no proper subtype
$K'\psub K$ is isomorphic to $R$ (Grain Inference Theorem~\cite{graintheory-pods},
condition~(iii)). Two observations rewrite this irreducibility as a condition on $k$.
First, since $k$ is a \emph{field-set} projection ($K$ a subset of $R$'s fields), the
proper subtypes $K'\psub K$ are exactly the images of the proper sub-projections
$k':R\twoheadrightarrow K'$ of $k$ (those retaining a proper subset of $K$'s fields):
subtypes of $K$ and sub-projections of $k$ correspond one-to-one. Second, by the
equivalence (1)$\Leftrightarrow$(3) already proved, applied to $k'$, such a $k'$ satisfies
(1)--(3) iff $K'\cong R$. Composing the two, ``some proper subtype $K'\psub K$ is
isomorphic to $R$'' and ``some proper sub-projection $k'$ of $k$ satisfies (1)--(3)'' are
\emph{the same statement}; negating, $K$ is irreducible---hence a grain---iff no proper
sub-projection of $k$ satisfies (1)--(3). The quantification is over
sub-\emph{projections}, a type-level notion, so minimality is type-level, not
per-collection.
\end{proof}

\begin{proof}[Proof of Theorem~\ref{thm:opgen}]
Let $p=\textsf{op}_1\seqc\cdots\seqc\textsf{op}_n$; the DAG case follows by topological
order.

\emph{Denotation totality.} Every type appearing in $p$ is an algebraic data type, and
grain---hence the denotation $\sem{R}=(\grain{R},\ek{R},\BC{R})$---is defined for every
ADT, inductive or coinductive~\cite{graintheory-pods}. So $\sem{\cdot}$ is total on the
types $p$ traverses, independently of the carrier $F$ chosen to hold the data.

\emph{(i)--(ii) Homomorphism and rule composition.} By hypothesis each $\textsf{op}_i$
is faithful: $\sem{\textsf{op}_i\,c}=\calcg[\textsf{op}_i]\,\sem{c}$. We show
$\sem{p\,c}=(\calcg[\textsf{op}_n]\circ\cdots\circ\calcg[\textsf{op}_1])\,\sem{c}$ by
induction on $n$. The base case $n=1$ is faithfulness of $\textsf{op}_1$. For the step,
write $q=\textsf{op}_1\seqc\cdots\seqc\textsf{op}_k$ and
$C_q=\calcg[\textsf{op}_k]\circ\cdots\circ\calcg[\textsf{op}_1]$, so the induction
hypothesis reads $\sem{q\,c}=C_q\,\sem{c}$. Then
\[
\begin{aligned}
\sem{(q\seqc\textsf{op}_{k+1})\,c}
&=\calcg[\textsf{op}_{k+1}]\,\sem{q\,c}\\
&=\calcg[\textsf{op}_{k+1}]\,(C_q\,\sem{c}),
\end{aligned}
\]
by faithfulness of $\textsf{op}_{k+1}$ (on input $q\,c$) then the induction hypothesis. This is the
homomorphism~(i); reading off its grain component is the rule composition~(ii)---the
compositionality of the grain homomorphism (Thm~\ref{thm:grain-hom}).

\emph{(iii) Decidability.} $\calcg[p]$ threads the $n$ per-operation rules through the
DAG by the \calcg{} verification algorithm of Part~I---sequential composition is
function composition of the per-vertex steps, fan-out reuses a computed grain, and
fan-in applies the binary rule, so the three patterns exhaust the
DAG~\cite{graintheory-pods}. Each rule is a finite operation on field sets
($\tun,\tin,\tdiff,\subt$) over the schemas of $p$'s types; the pipeline is a finite DAG
and every schema is a finite field set, so $\calcg[p](\grain{\textsf{Source}})$ is
computed in finitely many steps and compared to $\grain{\textsf{Target}}$ by decidable
field-set equality---no data is read. Part~I bounds this at $O(|V|\cdot k\cdot|F|)$ time
and $O(|V|\cdot|F|)$ space for field-set inference rules, on a DAG of $|V|$ vertices with
fan-in $k$ and at most $|F|$ fields per type~\cite{graintheory-pods}; for a general
grain-inferring $\textsf{Op}$ the cost is $O(|V|)$ times the per-operation rule cost. The
data's cardinality, hence whether $F$ is bounded, never enters: a windowed stream is
verified exactly as a finite table.
\end{proof}

\begin{proof}[Proof of Theorem~\ref{thm:pct}]
Write $\textsf{pip}=\textsf{op}_1\seqc\cdots\seqc\textsf{op}_n$ for a linear chain; the
DAG case is identical taken in topological order. Each layer is discharged on the
type-level data $\sem{\cdot}=(\grain{\cdot},\ek{\cdot},\BC{\cdot})$ and on the
constraint specification, never on a collection.

\emph{Grain.} CalcG composes step by step,
\[\calcg[\textsf{op}_1\seqc\textsf{op}_2]=\calcg[\textsf{op}_2]\circ\calcg[\textsf{op}_1]\]
(\S\ref{sec:correctness:calcg}), so $\calcg[\textsf{pip}]$ is the composite of the
per-operation rules. Hypothesis~(ii) states this composite carries
$\grain{\textsf{Source}}$ to $\grain{\textsf{Target}}$; by Theorem~\ref{thm:calcg} the
output collection's grain equals the target's, for every input, with no data access.

\emph{Behavioral class.} Each operation's BC pre/postcondition is a type parameter, so
(i)---the chain type-checks---makes every BC transition admissible by construction; the
deliberate changes are the re-lenses of \S\ref{sec:correctness:regrain}, sound by
Corollary~\ref{cor:sound-regrain}.

\emph{Domain.} Let $(\textsf{Pre}_i,\textsf{Intra}_i,\textsf{Post}_i)$ be the contract
of $\textsf{op}_i$, with $\textsf{Pre}=\textsf{Pre}_1$ and $\textsf{Post}=\textsf{Post}_n$.
Each operation is correct against its own contract---%
$\textsf{Pre}_i(c)\Rightarrow\textsf{Post}_i(c,\textsf{op}_i\,c)\wedge\textsf{Intra}_i(\dots)$---%
by its denotation (the \emph{(spec)} clause), independently of data. We show by
induction on $i$ that $c\models\textsf{Pre}$ implies the prefix
$\textsf{op}_1\seqc\cdots\seqc\textsf{op}_i$ produces a state satisfying
$\textsf{Post}_i$. \emph{Base:} $\textsf{op}_1$ from $\textsf{Pre}_1=\textsf{Pre}$.
\emph{Step:} the induction hypothesis gives $\textsf{Post}_i$ of the intermediate
state; obligation~(iii), $\textsf{Post}_i\Rightarrow\textsf{Pre}_{i+1}$---discharged at
composition time, or inherited as an axiom when contract-derivable
(\S\ref{sec:correctness:domain})---supplies $\textsf{Pre}_{i+1}$, and the correctness of
$\textsf{op}_{i+1}$ yields $\textsf{Post}_{i+1}$. At $i=n$ this is
$\textsf{pip}\,c\models\textsf{Post}$, the $\textsf{Intra}$ conjuncts accumulated along
the chain. Every implication used is type-level, so the entailment is established
without reading $c$.

\emph{Transfer to the concrete domain.} The argument lives in the semantic domain
$\sem{\cdot}$. The grain homomorphism $\sem{\textsf{pip}\,c}=\sem{\textsf{pip}}\,\sem{c}$
(\S\ref{sec:background}) makes the grain and BC reasoning faithful to the executed
pipeline; each domain predicate is interpreted over the same collections, and its
compiled runtime check returns the empty set iff the predicate holds
(Theorem~\ref{thm:compile}). Hence
$c\models\textsf{Pre}\Rightarrow\textsf{pip}\,c\models\textsf{Post}\wedge\textsf{Intra}$
holds concretely, and the only obligation that consults data is $c\models\textsf{Pre}$.
\end{proof}

\begin{proof}[Proof of Corollary~\ref{cor:sound-regrain}]
The precondition ``$k$ is a superkey of the input $c'$'' says exactly that $c'$ is
grain-unique at $K$ (Thm.~\ref{thm:univ-key-grain}, collection reading), so $c'$ is a
well-formed collection of the re-declared type and the re-lens neither drops an element
nor chooses a representative---it is an identity on data. By hypothesis $t$'s
postcondition makes $k$ a key of $t\,c$ for every input $c$, discharging the
precondition on every collection in the image of $t$; the type-checker verifies the
composition with no premise depending on data. The result denotation is the one
$\texttt{declare-$\langle$bc$\rangle$}$ declares---grain $K$, class $\langle\textsf{bc}\rangle$,
entity key derived from $K$---and is exact when the postcondition gives a minimal key
(Thm.~\ref{thm:univ-key-grain}).
\end{proof}

\fi

%% --- Bibliography ----------------------------------------------------------
\bibliographystyle{ACM-Reference-Format}
\bibliography{bibliography/references}

\end{document}